\journal{Journal of Approximation Theory}
\newcommand{\mm}{m} %% Not sure what symbol to use for 'k' in 'k-monotone'.
\newcommand{\lp}{\left(} %% )
  \newcommand{\rp}{\right)}
\newcommand{\lb}{\left\{} %% )
  \newcommand{\rb}{\right\}}
\newcommand{\ls}{\left[} %% )
  \newcommand{\rs}{\right]}
\newcommand{\la}{\left\langle}
\newcommand{\ra}{\right\rangle}
\newcommand{\lv}{\left\vert}
\newcommand{\rv}{\right\vert}
\newcommand{\ve}[1]{\boldsymbol{#1}}
\newcommand{\RR}{\mathbb{R}}
\newcommand{\NN}{\mathbb{N}}
\newcommand{\II}{{\mathbb I}}
\newcommand{\ib}{\boldsymbol{i}}
\newcommand{\jb}{\boldsymbol{j}}
\newcommand{\Gb}{\boldsymbol{\Gamma}}
\newcommand{\bs}[1]{\boldsymbol{#1}}
\newcommand{\one}{\mathbbm{1}}
\newcommand{\mc}[1]{\mathcal {#1}}
\newcommand{\SPAN}{{\mathrm{span}}}
\newcommand{\relbd}{\partial_r}
\newcommand{\bd}{\partial}
\newcommand{\bracketing}{N_{[\,]}} %% deprecated
\newcommand{\Cthree}[3]{\mathcal{C}\left( {#1}, {#2}, {#3} \right)} %%deprecated , \cC
\newcommand{\commentC}[1]{{\noindent  \color{Maroon}#1}}  %% 'out-of-paper' text
\newenvironment{longform}{\color{blue}}{}
\newenvironment{notes}{\color{blue} {\bf Note:}}{}
\newenvironment{old}{\color{Sepia} {\bf Past Version:}}{} %% not ready to
\NewDocumentCommand\Nb{O{}}{
  \ifstrempty{#1}{
    N_{[\, ]}
  }{
    N_{[\, ]}\left( {#1} \right)
  }
}
\NewDocumentCommand\Nc{O{}}{
  \ifstrempty{#1}{
    N
  }{
    N\left( {#1} \right)
  }
}
\NewDocumentCommand\cQ{O{}}{
  \ifstrempty{#1}{
    \mc{Q}
  }{
    \mc Q \left( {#1} \right)
  }
}
\NewDocumentCommand\cC{O{}}{
  \ifstrempty{#1}{
    \mc C
  }{
    \mc C \left( {#1} \right)
  }
}
\NewDocumentCommand\cCz{O{}}{
  \ifstrempty{#1}{
    \mc{C}_0
  }{
    \mc{C}_0 \left( {#1} \right)
  }
}
\NewDocumentCommand\cCdh{O{}}{
  \ifstrempty{#1}{
    \mc{C}_{\delta,h}
  }{
    \mc{C}_{#1}
  }
}
\NewDocumentCommand\cL{O{}}{
  \ifstrempty{#1}{
    \mc L
  }{
    \mc L \left( {#1} \right)
  }
}
\NewDocumentCommand\CLz{O{}}{
  \ifstrempty{#1}{
    \mathcal{C\kern-.11emL}_0
  }{
    \mathcal{C\kern-.11emL}_0  \left( {#1} \right)
  }
}
\NewDocumentCommand\CL{O{}}{
  \ifstrempty{#1}{
    \mathcal{C\kern-.11emL}
  }{
    \mathcal{C\kern-.11emL}  \left( {#1} \right)
  }
}
\NewDocumentCommand\deriv{m O{}}{
  \ifstrempty{#2}{
  \frac{d}{d #1}
  }{
  \frac{d #2}{d #1}
  }
}
\NewDocumentCommand\pderiv{m O{} O{}}{
  \ifstrempty{#2}{
        \ifstrempty{#3}{
          \frac{\partial}{\partial #1}}{      
          \frac{\partial^{#3}}{\partial #1^{#3}}
          }
  }{
        \ifstrempty{#3}{
          \frac{\partial #2}{\partial #1}}{      
          \frac{\partial^{#3} #2}{\partial #1^{#3}}
          }
  }
}
\NewDocumentCommand\derivtwo{m m O{}}{
  \ifstrempty{#3}{
  \frac{\partial^2}{\partial #1 \partial #2}
  }{
  \frac{\partial^2 #3}{\partial #1 \partial #2}
  }
}
\global\long\def\inv#1{\frac{1}{#1}}
\theoremstyle{definition}\newtheorem{problem}{Problem}[section]
\theoremstyle{definition}\newtheorem{definition}[problem]{Definition}
\theoremstyle{definition}\newtheorem{assumption}{Assumption}
\theoremstyle{remark}
\theoremstyle{remark}
\theoremstyle{remark}\newtheorem{remark}[problem]{Remark}
\theoremstyle{definition}\newtheorem{example}[problem]{Example}
\theoremstyle{plain}\newtheorem{theorem}[problem]{Theorem}
\theoremstyle{plain}\newtheorem{lemma}[problem]{Lemma}
\theoremstyle{plain}\newtheorem{proposition}[problem]{Proposition}
\theoremstyle{plain}\newtheorem{corollary}[problem]{Corollary}
\theoremstyle{plain}
\theoremstyle{definition}
\DeclareMathOperator{\vol}{Vol}
\DeclareMathOperator{\Vol}{Vol}
\DeclareMathOperator{\diam}{diam}
\DeclareMathOperator{\dom}{dom}
\def\begin{tabbing}
\dotfill  \pageref{symbol:bracketing-number}}\\
  \addnotation \mathcal{C(\cdot, \ldots, \cdot)}: {convex function
\dotfill  \pageref{symbol:bracketing-number}}\\
  \addnotation \mathcal{C(\cdot, \ldots, \cdot)}: {convex function
\def\addnotation#1: #2#3{$#1$ \> \parbox{4.4in}{#2 \dotfill \pageref{#3}}\\}
\def\addnotationtwo#1: #2#3#4{$#1$ \> \parbox{4.4in}{#2 \dotfill \pageref{#3}, \pageref{#4}}\\}
\def\newnot#1{\label{#1}}
\begin{document}

\begin{frontmatter}

\title{Bracketing numbers of convex and $m$-monotone functions on polytopes}
% \title{Elsevier \LaTeX\ template\tnoteref{mytitlenote}}
% \tnotetext[mytitlenote]{Fully documented templates are available in the elsarticle package on \href{http://www.ctan.org/tex-archive/macros/latex/contrib/elsarticle}{CTAN}.}

%% \titlerunning{Bracketing Convex Functions}        % if too long for running head

%%%% For affiliation, note: can group by affiliation (use \address then)
%%%% or can just by footnote

%% Group authors per affiliation :
\author{Charles R. Doss\fnref{crd-affiliation}}
%\author{Elsevier\fnref{myfootnote}}

\address{School of Statistics\\
  University of Minnesota \\
  224 Church St SE \#313\\
  Minneapolis, MN 55455
}
\fntext[crd-affiliation]{Supported by NSF grant DMS-$1712664$}

% %% or include affiliations in footnotes:
% \author[mymainaddress,mysecondaryaddress]{Elsevier Inc}
% \ead[url]{www.elsevier.com}

% \author[mysecondaryaddress]{Global Customer Service\corref{mycorrespondingauthor}}
% \cortext[mycorrespondingauthor]{Corresponding author}
% \ead{support@elsevier.com}

% \address[mymainaddress]{1600 John F Kennedy Boulevard, Philadelphia}
% \address[mysecondaryaddress]{360 Park Avenue South, New York}

\begin{abstract}
We study bracketing covering numbers for spaces of bounded convex functions in the $L_p$ norms.  Bracketing numbers are crucial quantities for understanding asymptotic behavior for many statistical nonparametric estimators.  Bracketing number upper bounds in the supremum distance are known for bounded classes that also have a fixed Lipschitz constraint.  However, in most settings of interest, the classes that arise do not include Lipschitz constraints, and so standard techniques based on known bracketing numbers cannot be used.  In this paper, we find upper bounds for bracketing numbers of classes of convex functions without Lipschitz constraints on arbitrary polytopes.  Our results are of particular interest in many multidimensional estimation problems based on convexity shape constraints.

Additionally, we show other applications of our proof methods; in particular we define a new class of multivariate functions, the so-called $m$-monotone functions.  Such functions have been considered mathematically and statistically in the univariate case but never in the multivariate case.  We show how our proof for convex bracketing upper bounds also applies to the $m$-monotone case.
\end{abstract}

\begin{keyword}
  bracketing entropy, Kolmogorov metric entropy, convex functions, convex polytope, covering numbers, nonparametric estimation, convergence rates
\MSC[2010] Primary: 52A41, %% convex functions and convex progrmas
41A46; %% Approximation by arbitrary nonlinear expressions; widths and
       %% entropy
Secondary:
%%46B10,
%% 52A10,
52A27, %% Approximation by convex sets
52B11, %% $n$-dimensional polytopes
52C17, %% Packing and covering in $n$ dimensions
62G20 %% nonparametric inference asymptotic properties
\end{keyword}

\end{frontmatter}

\linenumbers

%% Search \highlightC
%% prefer notes envir to \commentC

%% Use \citep rather than (\cite{}) to allow to go between bracketed citations and name-year

\begin{mylongform}
  \begin{longform}
    The bound \eqref{eq:vol-Gij} can be improved.  It is a hyperrectangle
    area instead of a parallelotope area; the latter could be much smaller.
  \end{longform}
\end{mylongform}

\section{Introduction and Motivation}
\label{sec:intr}

To quantify the size of an infinite dimensional set, the pioneering work of \cite{Kolmogorov:1961tl} studied the so-called metric entropy of the set, which is the logarithm of the metric covering number of the set.  In this paper, we are interested in a related quantity, the bracketing entropy for a class of functions, which serves a similar purpose as metric entropy.  Metric or bracketing entropies quantify the amount of information it takes to approximate any element of a set with a given accuracy $\epsilon > 0$.  This quantity is important in many areas of statistics and information theory; in particular, the asymptotic behavior of empirical processes and thus of many statistical estimators is fundamentally tied to the entropy of related classes of functions under consideration \citep{Dudley:1978tl}.

Let $\cal F$ be a set of
functions on some space $\mc X$ and let $\rho$ be a metric on $\cal F$.
Given a pair of functions $l, u$ on $\mc X$, a {\em bracket} $[l,u]$ is the set of all functions $f \colon {\mc X} \to \RR$ with $l \le f \le u$ pointwise.
%% (We do not actually force $l,u \in \mc{F}$.)
For $\epsilon > 0$, we say $[l,u]$ is an $\epsilon$-bracket (for $\rho$) if $\rho(l,u) \le \epsilon$.
Then  the  $\epsilon$-bracketing number of $\cal F$, denoted
$\bracketing(\epsilon, {\cal F}, \rho)$, is the smallest integer  $N$ such that there
exist $\epsilon$-brackets $[l_i,u_i]$, $i=1,\ldots, N$, such that
for all $f \in {\cal F}$,
$f \in [l_i,u_i]$ for some $i$.
%% there exists $i$ with $l_i(x) \le f(x) \le u_i(x)$ for all $ x \in {\mc X}$.
(We do not actually force $l_i, u_i \in {\mc{F}}$.)
% exist brackets $[l_i,u_i]$, $i=1,\ldots, N$, such that $d(l_i,u_i) \le
% \epsilon$, and for all $ f \in {\cal F}$, there exists $i$ with $l_i(x) \le
% f(x) \le u_i(x)$ for all $x \in \mc X$.
\newnot{symbol:bracketing-number} The
bracketing entropy is the logarithm of the bracketing number.  Like metric
entropies, bracketing entropies are fundamentally tied to rates of
convergence of certain estimators (see e.g., \cite{Birge:1993by},
\cite{vanderVaart:1996tf}, \cite{vandeGeer:2009tk}).  In this paper, we study
the bracketing entropy of classes of convex functions. Our interest is
motivated by the study of nonparametric estimation of functions satisfying
convexity restrictions, such as the least-squares estimator of a convex or
concave regression function on $\RR^d$ (e.g., \cite{Seijo:2011ko},
\cite{Guntuboyina:2013tl}), possibly in the high dimensional setting
\citep{Xu:2014uc}, or estimators of a log-concave or $s$-concave density
(e.g., \cite{MR2766867}, \cite{MR2722462}, \cite{Kim:2014wa},
%% \cite{Doss-Wellner:2013, Doss-Wellner:2013Mode},
\cite{Doss-Wellner:2013,Doss:2016vqa,Doss:2016ux},
among others).  Entropy
bounds, of the metric or bracketing type, are directly relevant for studying
asymptotic behavior of estimators in these contexts.

Fix the dimension $d \in \lb 2, 3, \ldots \rb$. Let $D \subset \RR^d$ be a convex set,
let $v_1, \ldots, v_d \in \RR^d$, be
linearly independent vectors, let $B, \Gamma_1, \ldots, \Gamma_d$ be positive
reals, and let ${\ve v} = (v_1, \ldots, v_d)$ and ${\ve \Gamma} = (\Gamma_1,
\ldots, \Gamma_d)$.
For $f: D \to \RR$, let $L_{p,D}(f) \equiv L_p(f) = \lp \int_D f(x)^p \, dx\rp^{1/p}$ for $1 \le p < \infty$, and let $L_\infty(f) = \sup_{x \in D} |f(x)|$.\newnot{symbol:Lp}
We will let $\cC$ with various arguments denote different classes of convex functions.
We let $\cC \equiv \cC_d$ be the class of convex functions on $\RR^d$, where
we consider all convex functions $f$ to be defined on all of $\RR^d$ and to take the value $\infty$ off of its {\em effective domain} $\dom(f) := \lb x \in \RR^d : f(x) < \infty \rb$
\citep{Rockafellar:1997ww}.  (This approach does not affect bracketing numbers.)
For a function $f$ and a set $D \subset \RR^d$, we will use the notation $f \colon D \to \RR$ to mean that $\dom(f) = D$ and we let $\mc{C}_d(D) \equiv \cC[D]$ be the class of convex functions on $\RR^d$ with  $\dom(f) = D$.
%% $\cC[D]$ helps with my "colon problem" in set notation with functions ...
Then we let
%% $\Cthree D B {{\ve \Gamma}, {\ve v}} $ be
\begin{equation}
  \label{eq:10}
  \Cthree D B {{\ve \Gamma}, {\ve v}} : =
  \lb
  %% f : D \to \RR   \; \vert
  f \in \cC[D]   :  %% solves double colon
  %% f \in \cC,
  L_\infty(f) \le B ,
  | f(x + \lambda v_i) - f(x) | \le \Gamma_i |\lambda|
  \mbox{ if } x, x+\lambda v_i \in D
  \rb
\end{equation}
be the class of  convex functions on $D$ satisfying uniform boundedness and Lipschitz constraints given by $B$ and $\ve \Gamma$.
When $\lb v_1,\ldots,v_n \rb$ is the standard basis of $\RR^d$, we just write $\cC[D, B, \Gb ]$ for this class.  If $D$ is the hyperrectangle $\prod_{i=1}^d [a_i,b_i]$ (with $a_i < b_i$),
%% $B>0$, and ${\ve \Gamma} = (\Gamma_1, \ldots, \Gamma_d) \in \RR^d$ is a vector of positive reals,
then \cite{Bronshtein:1976wy} and \cite{MR876079} (chapter 8) show  that  if $0 < \epsilon \le \epsilon_0$ (for some $\epsilon_0 > 0$) then
\begin{equation}
  \label{eq:6}
  %% \log \Nc[ \epsilon, \cC[ \prod_{i=1}^d [a_i,b_i], B, {\ve \Gamma}], L_\infty]
  \log N \bigg(
  \epsilon,
  {\mc C} \bigg( \prod_{i=1}^d [a_i,b_i],   B, {\ve \Gamma} \bigg),
  L_\infty \bigg)
  \le C \epsilon^{-d/2}
\end{equation}
for a constant $C \equiv C_{D,B, \ve \Gamma}$.  Here, $\Nc[ \epsilon, {\cal F}, \rho]$ is the $\epsilon$-covering number of $\cal F$ in the metric $\rho$, which is defined to be the smallest number of balls of $\rho$-radius $\epsilon$ that cover ${\cal F}$, and $\log \Nc[ \epsilon, {\cal F}, \rho]$ is the corresponding metric entropy of $\cal F$, discussed in the first paragraph of this paper.
% The asymptotic behavior as $\epsilon \searrow 0$ of $\log \Nc[ \epsilon,
% \cC[D,B, \ve \Gamma], L_\infty]$
% is the same as that of the class
% Since convex
% functions are Lebesgue almost everywhere two-times differentiable
% %% aleksandrov
% their entropies correspond to the entropies for compact subsets of twice
% differentiable function classes, namely $\epsilon^{-d/2}$.

One would like to use \eqref{eq:6} in the study of  asymptotic properties of the statistical estimators discussed above.
% Bracketing entropies govern the suprema of corresponding empirical processes
% and thus govern the global rates of convergence of certain statistical
% estimators.
Unfortunately, the function classes that arise in those problems generally do not include Lipschitz constraints, and so the class $\cC[D, B, {\ve \Gamma}]$ is not of immediate use.  Furthermore, it turns out that without Lipschitz constraints, the $L_\infty$ covering or bracketing numbers are not bounded.  Thus, instead of of using the $L_\infty$ distance, we may consider using the $L_p$ distances, $1 \le p < \infty$.  Let $\cC\lp D, B \rp$ be the class of convex functions on $D$ with uniform bound $B$ (and no Lipschitz constraints).\newnot{symbol:convex-fns-B-Gamma}\newnot{symbol:convex-fns-B} Then \cite{MR2519658} %%dryanov
and \cite{Guntuboyina:2013jb} %
found bounds when $d=1$ and $d > 1$, respectively, for metric entropies of $ \cC[D, B]$: they showed that $\log \Nc[ \epsilon, \cC[D, B], L_p] \lesssim \epsilon^{-d/2}$, again with $D$ a hyperrectangle and $1 \le p < \infty$.  Here $\lesssim$ means $\le$ up to a constant which does not depend on $\epsilon$ (but does depend on $D$, $B$, and $p$).  The $d=1$ case (from \cite{MR2519658}) %%dryanov \cite{Doss-Wellner:2013}.
was the fundamental building block in computing global rates of convergence of the univariate log-concave and $s$-concave MLEs in \cite{Doss-Wellner:2013}.
In the corresponding statistical problems when $d > 1$, the domain of the
functions under consideration is not restricted to be a hyperrectangle but
rather may be an arbitrary convex set $D$.  Thus the results of
\cite{Guntuboyina:2013jb} %
are not immediately applicable, and there is need for results on more
general convex domains $D$ with a more complicated boundary and no Lipschitz constraints.
% However, in many problems the function domains may not be a hyperrectangle,
% so there is need to know the bracketing entropies for more general convex
% domains, $D$.  The approach of \cite{Guntuboyina:2013jb} does not immediately
% generalize to studying other convex domains $D$, since their results rely on
% repeated application of the bound of \cite{Bronshtein:1976wy}, after
% partitioning the domain in such a way that certain Lipschitz constraints
% hold.  This construction relies largely on the nature of the boundary of the
% hyperrectangle, which is much simpler than the boundary of more arbitrary
% convex sets $D$.

In this paper we are indeed able to generalize the results of
\cite{Guntuboyina:2013jb} considerably by finding bracketing entropy upper bounds
% to generalize the arguments of
% \cite{Guntuboyina:2013jb} to
for all (convex) polytopes $D$, attaining the bound
\begin{equation}
  \label{eq:1}
  \log \Nb[\epsilon, \cC[D, B ], L_p] \lesssim \epsilon^{-d/2}
\end{equation}
with $1 \le p < \infty$, $D$ a polytope,
%% satisfying Assumption~\ref{assm:simple-polytope},
and $0 < B < \infty$; this result is given in Theorem~\ref{thm:main-thm}.  Note that we work with bracketing entropy rather than
metric entropy.  Bracketing entropies are larger than metric entropies for
the $L_p$  metrics,
\begin{equation}
  \label{eq:81}
  N(\epsilon, \mc F, L_p) \le \Nb( 2\epsilon, \mc F, L_p),
  \quad \text{for } 1 \le p \le \infty,
  \text{ and }
  N(\epsilon, \mc F, L_\infty) = \Nb( 2\epsilon, \mc F, L_\infty),
\end{equation}
\citep[p. 84]{vanderVaart:1996tf}, %% here here here
so our bracketing entropy bounds imply metric entropy bounds of the same
order. Along the way, we also generalize the results of
\cite{Bronshtein:1976wy} to bound the $L_\infty$ bracketing numbers of
$\cC[D, B, {\ve \Gamma}]$ when $D$ is arbitrary.  One of the benefits of our
method is its constructive nature.  We initially study only simple polytopes
(defined in Subsection~\ref{subsec:defns-assms}) and in that case
we pay careful attention to
how the constants depend on $D$.

In Section~\ref{sec:further-applications}, we consider two further applications of our methods and ideas.
% In Subsection~\ref{sec:m-monotone}, we show that the method of proof used in
% Theorem~\ref{thm:main-thm} can be used in other contexts:
% our Theorem~\ref{thm:main-extension-m-mono} gives a bracketing upper bound
% for any class satisfying a condition,  \eqref{eq:conj-k-mono-multivariate} (given below). 
In Subsection~\ref{sec:m-monotone}
we define a new class of functions, the so-called multivariate $m$-monotone functions.  In the univariate setting $m$-monotone functions have been studied mathematically (\cite{Williamson:1955uy,Williamson:2010wy}, and references therein) and statistically \citep{Balabdaoui:2007jj,Balabdaoui:2010do,Gao:2009hf}, but to the best of our knowledge there has been no consideration or even definition of $m$-monotone functions in the multivariate case.  We define a class and show that our proof for the bracketing upper bound for convex functions applies to the case of $m$-monotone functions.  This is given in Theorem~\ref{thm:main-extension-m-mono}.

% \commentC{The result relies on Theorem~\ref{conj:m-mono-sup-bracket} which assumes that Property~\ref{property:bracketing-FTOC} holds; unfortunately, we have not proved that Property~\ref{property:bracketing-FTOC} holds, and thus the conclusion \eqref{eq:66} of Theorem~\ref{thm:main-extension-m-mono} is not entirely proven yet.  Nonetheless the proof of Theorem~\ref{thm:main-extension-m-mono} is sound in that it applies to any class for which the conclusion of Theorem~\ref{conj:m-mono-sup-bracket} holds.  Also, in Theorem~\ref{thm:k-mono-entropy}, we show that the conclusion of Theorem~\ref{conj:m-mono-sup-bracket}  indeed does hold in the univariate case.  (Theorem~\ref{conj:m-mono-sup-bracket} is a multivariate bracketing upper bound.  Thus
% Theorem~\ref{thm:k-mono-entropy} provides some evidence that Theorem~\ref{conj:m-mono-sup-bracket}  holds, and is also of interest in its own right.) }

In Subsection~\ref{sec:entr-level-set} we consider level set estimation (where the $\lambda$-level set of a function $f$
%% at value $-\infty < \lambda < \infty$
is %% the set
$\lb x : f(x) = \lambda \rb$).
Nonparametric level set estimation has gained increasing attention in recent
years, since it can capture very complex dependencies in a distribution or dataset.  In
Bayesian analysis, the level set of the posterior distribution is commonly
used to form a credible set, and this level set often has to be estimated based on samples generated from the Markov chain Monte Carlo method.
% Since posterior densities in most realistic
% problems are not available in closed form, the level set (or ``highest
% posterior density region'') often has to be estimated, generally based on samples generated from the Markov chain
% Monte Carlo method.
There are a large number of other settings where level
set estimation arises; see, for instance, the introduction of
\cite{Doss:2018kv}.  Here, we consider convex level set estimation.  For a
recent review of convexity-based methods in set estimation, see
\cite{Brunel:2018dl}.
In Subsection~\ref{sec:entr-level-set}, 
we present upper bounds for the so-called local entropy of level sets of convex functions.
These upper bounds  are an important step in proving that
fast rates of convergence may be achievable when one is estimating a
polytopal level set of a convex function.

During the course of the development of this paper, we became aware of the related work  \cite{Gao-Wellner-2017-arxiv-v3}, which was developed simultaneously and separately  from our paper.
In \cite{Gao-Wellner-2017-arxiv-v3}, the authors demonstrate in their Theorem~1.6 that if $D$ is a sphere then
\eqref{eq:1} fails when $p(d-1) > d$.
This shows that if $D$ is not a polytope the situation may be more complicated than when $D$ is a polytope.
%% \cite{Gao-Wellner-2017-arxiv-v3}
They
also find upper bounds of order $\epsilon^{-d/2}$ when $D$ is a polytope. Their methods are quite different than ours and in particular they do not explicitly construct their bracketing set but rather rely on an algebraic relation (see their function $g(\cdot,\cdot)$ in their Section 2.5); our method on the other hand is explicitly constructive.
% in that we construct a cover of $D$ involving parallelotopes near the boundary, and then apply
% Theorem~\ref{thm:GS-extension} below to the elements of the cover.
%% We provide some discussion of how our constants can differ from those of \cite{Gao-Wellner-2017-arxiv-v3} in Example~\ref{ex:hyperrectangle}. %%
Our constants differ from those of \cite{Gao-Wellner-2017-arxiv-v3}.  Our constants
depend on the volume (measured in the appropriate dimension) of the faces of the polytope $D$, which is perhaps  an interesting phenomenon
(and is (distantly) reminiscent of the Minkowski-Steiner formula \citep{Federer:1969uh}).
Besides the fact that our constants differ from those of \cite{Gao-Wellner-2017-arxiv-v3} and reflect the geometry of $D$,
the constructive nature of our approach enables consideration of other problems, not considered by  \cite{Gao-Wellner-2017-arxiv-v3}, which we do in Section~\ref{sec:further-applications} (as described above).

This paper is organized as follows.  %%
% In Section~\ref{sec:notation} we provide definitions and notation for the
% paper. %%
In Section~\ref{sec:Lipschitz-bracketing} we prove bounds for bracketing
entropy of classes of convex functions with Lipschitz bounds, using the
$L_\infty$ metric. %%
We use these to prove our main result,
Theorem~\ref{thm:main-thm}, for the bracketing entropy of classes
of convex functions without Lipschitz bounds in the $L_p$ metrics, $1 \le p <
\infty$, which we do in Section~\ref{sec:bracketing-uniform-nolipschitz}. %%
We defer some of the details of the proofs to
Section~\ref{sec:technical-details}. %
In Section~\ref{sec:further-applications} we study two more problems.  In Subsection~\ref{sec:m-monotone} we consider bracketing numbers related to univariate and multivariate $m$-monotone function classes.  In Subsection~\ref{sec:entr-level-set} we consider local entropies related to level set estimation.
%% Section~\ref{sec:notation} provides an index of notation. %\
There is a notation index at the end of the document.

\begin{mynotes}
  \begin{notes}

    %% \section{Notation}
    %% \label{sec:notation}

    $D$, convex set\\
    $E$, half space \\
    $F$, facet \\
    $G$ , face \\
    $H$ hyperplane \\
    $I$ interior ? \\
    J
    K
    L
    $N$ number facets or  hyperplanes

    All our vector spaces are finite dimensional.

    We use the symbol ``$:=$'' or ``$=:$'' to assign a definition to a variable
    (on the side of the colon), and the symbol ``$\equiv$'' to indicate when we
    suppress dependence on subscripts for ease of notation, as in ${\ve e} \equiv
    {\ve e}_{\ib,\jb}$.

  \end{notes}
\end{mynotes}

\section{Bracketing with Lipschitz Constraints}
\label{sec:Lipschitz-bracketing}

If we have sets $D_i \subset \RR^d$, $i = 1, \ldots, M$, for $M \in \NN$, and
$D \subseteq \cup_{i=1}^M D_i$ then for $\epsilon_i > 0$, $0 < p < \infty$, and any class of functions $\mc F$, 
\begin{equation}
  \label{eq:bracketing-cover}
  %% \Nb[ \lp \sum^M_{i=1} \epsilon_i^p \rp^{1/p}, \cC[D,1],  L_p ]
  N_{[\,]} \bigg(  \bigg( \sum^M_{i=1} \epsilon_i^p  \bigg)^{1/p}, \mc F,
  %% \cC[D,1],
  L_p \bigg)
  \le \prod_{i=1}^M \Nb[ \epsilon_i, \mc{F}\vert_{D_i}, L_p ],
\end{equation}
where, for  %% a class of functions $\cal F$ and
a set $G$, we let $\cal F|_G$
denote the class $\lb f|_G : f \in {\cal F} \rb$ where $f|_G$ is the
restriction of $f$ to the set $G$. 
We will apply \eqref{eq:bracketing-cover}
to a cover of $D$ by sets $G$ with the property that
$\cC[D,1]|_G \subseteq \cC[G, 1, \ve \Gamma]$
% \begin{equation*}
%   \cC[D,1]|_G \subseteq \cC[G, 1, \ve \Gamma]
% \end{equation*}
for some bounded vector $\ve \Gamma$, so that we can apply bracketing results for classes of convex functions with Lipschitz bounds.  Thus, in this section, we develop the needed bracketing results for such Lipschitz classes, for arbitrary (bounded) convex domains $D$. \newnot{symbol:D-general}  Recall the definition of $\cC[D, B, \Gb, {\ve v} ]$ and $\cC[D, B, \Gb ]$ from \eqref{eq:10}.  When we have Lipschitz constraints on convex functions, we will see that the situation for forming brackets for $\cC[D, 1, \Gb ]$ with $D \subseteq [0,1]^d$ is essentially the same as for forming brackets for $\cC[[0,1]^d, 1, \Gb ]$.
% For $y,z \in \RR^d$ let $\la y,z \ra := \sum_{i=1}^d y_iz_i$,
% let $\Vert z \Vert^2 := \la z,z \ra$,\newnot{symbol:Euclidean-norm} and for two sets $C, D \subset \RR^d$, define the Hausdorff distance between them by
% \begin{equation*}
%   l_H(C,D) := \max \lp \sup_{x \in D} \inf_{y \in C}  \Vert x - y \Vert,
%   \sup_{y \in C} \inf_{x \in D}  \Vert x - y \Vert \rp.
% \end{equation*}
% Let $B_d(z,R) \equiv B(z, R) := \lb x \in \RR^d : \Vert x-z \Vert \le R \rb$.
% , and let
% \begin{equation*}
%   {\cal K}^{d}(R)
%   = \lb D: D \text{ is a closed, convex, nonempty set}, D \subseteq B_{d}(0,R) \rb.
% \end{equation*}
\begin{mynotes}
  \begin{notes}
    I added ``nonempty'' to definition of ${\cal K}^d$; \cite{Bronshtein:1976wy} and I think others do not include this, although it seems quite necessary
  \end{notes}
\end{mynotes}
Theorem~3.2 from
\citep{Guntuboyina:2013jb} gives the result of the below Theorem~\ref{thm:GS-extension}
when $D = \prod_{i=1}^d [a_i,b_i]$; we now extend it in Theorem~\ref{thm:GS-extension} to the case of a general
$D$.  When we consider convex functions without Lipschitz constraints, we
will partition $D$ into sets that are contained in parallelotopes and apply
Theorem~\ref{thm:GS-extension} to those sets.

\begin{theorem}
  \label{thm:GS-extension}
  Let $a_i < b_i$ and let $D \subset \prod_{i=1}^d [a_i,b_i]$ be a convex
  set. Let $\Gb = (\Gamma_1, \ldots, \Gamma_d)$ and
  $0< B, \Gamma_1, \ldots, \Gamma_d < \infty$.  Then there exists
  a positive constant $c\equiv c_d$
  %% and $\epsilon_0 \equiv \epsilon_{0,d}$
  such that
  \begin{align}
    \log \Nb[\epsilon \vol_d (D)^{1/p}, \cC[D, B, \Gb], L_p ]
    & \le \log \Nb[\epsilon, \cC[D, B, \Gb], L_\infty ] \label{eq:Lip-bound-1}\\
    & \le c
      \epsilon^{-d/2}
      \bigg( B + \sum_{i=1}^d \Gamma_i (b_i-a_i) \bigg)^{d/2} \label{eq:Lip-bound-2}
      %% \\ & \le c d \lp \frac{B + \max_{i \in \lb 1,\ldots, d\rb} \Gamma_i
      %% (b_i-a_i)}{\epsilon} \rp^{d/2}
  \end{align}
  % \begin{equation*}
  %   \begin{split}
  %     \log \Nb[\epsilon \vol_d (D)^{1/p}, \cC[D, B, \Gb], L_p ]
  %     & \le \log \Nb[\epsilon, \cC[D, B, \Gb], L_\infty ] \\
  %     & \le c \lp \frac{B + \sum_{i=1}^d \Gamma_i (b_i-a_i)}{\epsilon} \rp^{d/2}
  % %%     \\      & \le c d \lp \frac{B + \max_{i \in \lb 1,\ldots, d\rb} \Gamma_i (b_i-a_i)}{\epsilon} \rp^{d/2}    \end{split}
  % \end{equation*}
  for $\epsilon > 0$ and
  $p \ge 1$.
\end{theorem}

\noindent  Here, $\vol_d(D)$ is $d$-dimensional volume (Lebesgue measure) of the set $D$.  
The proof is given in  \cite{DBLP:journals/corr/Doss15}; we leave it out here due to space limitations.

\section{Bracketing without Lipschitz Constraints}
\label{sec:bracketing-uniform-nolipschitz}

In the previous section we bounded bracketing entropy for classes of
functions with Lipschitz constraints.  In this section we remove those
Lipschitz constraints.  With Lipschitz constraints we could consider arbitrary domains $D$, but without the Lipschitz constraints we need more restrictions: now we will take $D$ to be a {\em simple polytope} (defined below).
We now define notation and assumptions we will use for the remainder of the document.

\subsection{Notation and Terminology}

For $y,z \in \RR^d$ let $\la y,z \ra := \sum_{i=1}^d y_iz_i$,
let $\Vert z \Vert^2 := \la z,z \ra$,\newnot{symbol:Euclidean-norm} and for two sets $C, D \subset \RR^d$, define the Hausdorff distance between them by
\begin{equation*}
  l_H(C,D) := \max \lp \sup_{x \in D} \inf_{y \in C}  \Vert x - y \Vert,
  \sup_{y \in C} \inf_{x \in D}  \Vert x - y \Vert \rp.
\end{equation*}
Let $B_d(z,R) \equiv B(z, R) := \lb x \in \RR^d : \Vert x-z \Vert \le R \rb$.

We will consider only the case $d \ge 2$ since the result  when $d=1$ is given in \cite{MR2519658}. %%dryanov \cite{Doss-Wellner:2013}.
Recall that for a convex set $G$, a set $F \subset G$ is a {\em face} of $G$ if $F$ is either $\emptyset$ (the empty set), $G$, or if $F = G \cap H$ for some supporting hyperplane $H$ \citep{Rockafellar:1997ww} %%
of $G$.
A set $F \subset G$ is a {\em facet} of $G$ if $F$ is a $(d-1)$-dimensional face
(see e.g., \cite{Grunbaum:1967vq}). %% page 17 for face; 26, 31 for facet
We will focus on {\em simple} polytopes first
(see Assumption~\ref{assm:simple-polytope}).
A simple polytope is one in which all $(d-k)$-dimensional faces (abbreviated ``$(d-k)$-faces'') of $D$ have exactly $k$ incident facets for $k \in \lb 0, \ldots, d \rb$.
The simple polytopes are dense in the class of all polytopes in the
Hausdorff distance (page 82 of \cite{Grunbaum:1967vq}).
Any convex polytope
can be triangulated into
%% $O(n^{\lfloor d/2 \rfloor})$
$O(n^{\lceil d/2 \rceil})$  %% CEIL OR FLOOR
simplices (which are
simple polytopes) if the polytope has $n$ vertices (see e.g.\
\cite{Dey:1998cu}), and so we can translate our theorem into a  result for a general polytope
$D$; see Corollary~\ref{cor:general-convex-polytope}.
\begin{mynotes}
  \begin{notes}
    This definition of simple is dual to the definition of 'general', given
    as simplicial polytopes.  A simplicial polytope has all vertices in
    general position, whereas this definition has all hyperplanes in general
    position.  See pages 57 and 58 (and perhaps page 310) of Grunbaum (1967).
  \end{notes}
\end{mynotes}
For two sets $A$ and $B$ %% (one of which may be a singleton)
%% as in \eqref{eq:FritzJohn}), %% to 'define' the lack of brackets around $x_\jb$
let $A + B := \lb a + b : a\in A ,b \in B \rb$.  For a vector $v \in \RR^d$,  we
let $[0,v]:= \lb
\lambda v : \lambda \in [0,1] \rb$.
For a set $G$, let %
$d^+(x, G, e ) := \inf \lb K \ge 0: (x +K e) \cap G \ne \emptyset \rb$ %
(which may in general be infinite). \newnot{symbol:d+}
For a point $x$, a set $H$, and a unit vector $v$, let \newnot{symbol:d(x,H,v)}
\begin{equation*}
  \label{eq:defn:distance-set-direction}
  d(x,H,v) := \inf
  \lb |k| : x + kv \in H \rb
  = \min\lp d^+(x,H, v), d^+(x,H,-v) \rp
\end{equation*}
be the distance from $x$ to $H$ along the vector  $v$, and for a set $E$,
let $d(E,H,v) := \inf_{x \in E} d(x,H,v)$.\newnot{symbol:d(E,H,v)}
\begin{myold}
  \begin{old}
    For $\ib = (i_1, \ldots, i_k) \in
    I_k,$ and $\jb = (j_1,\ldots, j_k) \in J_k$ let
    \begin{equation}
      \label{eq:defn:Gi}
      G_{\ib,\jb}
      %% G_{\ib}
      := \lb x \in D : \delta_{i_\alpha} \le d(x, H_{j_\alpha}, v_{j_\alpha}) <
      \delta_{i_\alpha + 1} \mbox{ for } \alpha = 1, \ldots, N \rb,
    \end{equation}
    where for $\alpha > k $ we let ${i_\alpha } = {A+1}$.
  \end{old}
\end{myold}
We let $\partial G$ be the boundary of $G$ in $\RR^d$ and we let $\relbd G$ be the {\em relative boundary} of $G$, the set difference between the closure of $G$ and the relative interior of $G$ (e.g., page 44 of \cite{Rockafellar:1997ww}). \newnot{symbol:relbd}  %
Let
$\vol_{d-k}(G)$ be the $(d-k)$-dimensional volume of $G$ (and, in
particular, $\vol_0(G)$ is the number of elements in $G$).\footnote{In general, $\vol_{d-k}$ can be defined rigorously using the so-called $(d-k)$-dimensional Hausdorff measure. We will only need the $(d-k)$-dimensional volume of polytopes contained in affine spaces, and in such cases the definition is straightforward (and only requires  Lebesgue measure).} %
For $a,b \in \RR$, we let $a \vee b $ be the maximum of $a$ and $b$, and $a \wedge b$ be the minimum of $a$ and $b$.
For two vectors $e,v \in \RR^d$ and a linear subspace $V$ of $\RR^d$, we write $e \perp v$ if $\la e, v \ra = 0$,
we write $e \perp V$ if $e \perp v$ for all $v \in V$,
and we let $V^\perp$ be the orthogonal complement linear subspace of $V$ in $\RR^d$.

\subsection{Definitions and Assumptions}
\label{subsec:defns-assms}
\bigskip

In what follows, we will  assume that $D$ is a polytope, meaning that for some $N \in \NN$,
\newnot{symbol:D-polytope}  \newnot{symbol:N}
%% What is $N$ ; here here here
$D = \cap_{j=1}^N E_j$ where $E_j:= \lb x \in \RR^d : \la v_j, x\ra
\ge p_j \rb$ are halfspaces \newnot{symbol:Ei} with inner normal
unit % added 2017 ... ?
vectors $v_j$ such that $v_i \ne  v_j$  if $i \ne j$, \newnot{symbol:vi} %
and where $p_j
\in \RR$, for $j=1,\ldots, N$.
\begin{mynotes}
  \begin{notes}
    Note: We allow $v_j = - v_i$ if $i \ne j$, as in a cube.
  \end{notes}
\end{mynotes}
Let $H_j := \lb x \in \RR^d : \la x, v_j \ra
= p_j \rb$ be the corresponding hyperplanes \newnot{symbol:Hi}
and let $F_j := H_j \cap D$ be the corresponding facets of $D$. \newnot{symbol:Fi}
For $k \in \lb 0,\ldots ,d \rb$, we will define $J_k$ to index the $(d-k)$-faces of $D$.
% Let $J_0 = \lb 1 \rb$,
% and for $k \in \lb 1, \ldots, d \rb$,
% let %
% $J_k := \lb (j_1,\ldots, j_k) \in \lb 1, \ldots, N\rb^k
% : j_1 < \cdots < j_k \rb$. \newnot{symbol:Jk}    %
% For $\jb \in J_k$, let
% % \begin{equation*}
% %   G_{\jb} = \cap_{\alpha=1}^k H_{j_\alpha} \cap D.
% % \end{equation*}
% \begin{equation*}
%   G_{\jb} = \cap_{\alpha=1}^k H_{j_\alpha} \cap D \,
%   \mbox{ if } k \ne 0,
%   \qquad
%   \mbox{ and let }
%   \qquad
%   %%   G_{1} = D
%   G_{\jb} = D \,
%   \mbox{ if } k = 0.
% \end{equation*}
% The face $G_{\jb}$, $\jb \in J_k$, is $(d-k)$-dimensional if it is not empty, and $H_{j_1}\cap D, \ldots, H_{j_k} \cap D$ are the only facets of $D$ containing $G_{\jb}$,
% by Theorem 12.14 of \cite{Brondsted:1983vt}. \newnot{symbol:Gi}
\begin{mynotes}
  \begin{notes}
    Use of $J_k$ may need to be replaced by $\tilde{J}_k$ in places.  see e.g.\ page~\pageref{tmp:loc:Jk}, ``ignore terms with ...''.
  \end{notes}
\end{mynotes}
First let $\tilde{J}_k :=
\lb (j_1,\ldots, j_k) \in \lb 1, \ldots, N\rb^k
: j_1 < \cdots < j_k \rb$,
and for $\jb \in \tilde{J}_k$,
let
\begin{equation*}
  G_{\jb} = \cap_{\alpha=1}^k H_{j_\alpha} \cap D \,
  \mbox{ if } k \ne 0,
  \qquad
  \mbox{ and let }
  \qquad
  G_{\jb} = D \,
  \mbox{ if } k = 0.
\end{equation*}
Now let $J_0 = \lb 1 \rb$,
and for $k \in \lb 1, \ldots, d \rb$,
let %
$J_k := \lb \jb \in \tilde{J}_k : G_{\jb} \ne \emptyset \rb $.
% be the subset of
% $\tilde{J}_k$
% such that $G_{\jb}$ is not an empty set for all $\jb \in  J_k$.
\newnot{symbol:Jk}    %
The face $G_{\jb}$, $\jb \in J_k$, is $(d-k)$-dimensional
% if it is not empty,
and $H_{j_1}\cap D, \ldots, H_{j_k} \cap D$ are the only facets of $D$ containing $G_{\jb}$,
by Theorem 12.14 of \cite{Brondsted:1983vt}. \newnot{symbol:Gi}
\begin{mynotes}
  \begin{notes}
    ((Unneeded:
    (and each $(d-k)$-face of a $d$-polytope is equal to the intersection of the facets containing it by 3.1.7, page 35, \cite{Grunbaum:1967vq}). ))
  \end{notes}
\end{mynotes}
Thus, by
%% Fritz
John's theorem,
Theorem~\ref{thm:FritzJohn} (\cite{John:1948vs}, see also \cite{Ball:1992bp} or \cite{Ball:1997ud}), there exists $x_{\jb } \in G_{\jb}$ such that $G_{\jb} - x_{\jb}$ contains a $(d-k)$-dimensional ellipsoid $A_{\jb} - x_{\jb}$ of maximal $(d-k)$-dimensional volume and such that \newnot{symbol:Ajxj}
\begin{equation}
  \label{eq:FritzJohn}
  A_{\jb} -x_{\jb} \subset G_{\jb} -x_{\jb} \subset d (A_{\jb}-x_{\jb}).
\end{equation}
Let $\gamma_{\jb, \alpha} / 2 := d^+(x_{\jb}, \relbd A_{\jb}, e_\alpha)$ \label{defn:gamma-j} be the radius of $A_{\jb}$ in the direction $e_{\jb, \alpha}$, where
% \begin{equation}
%   \label{eq:3}
%   e_{\jb, k+1},\ldots, e_{\jb, d}
%   \text{ are the orthonormal unit vectors given by the axes of the ellipsoid }
%   A_{\jb}-x_{\jb}.
% \end{equation}
$ e_{\jb, k+1},\ldots, e_{\jb, d} $\label{loc:defn:e-alpha}\newnot{symbol:e-axes-def1}
are the orthonormal unit vectors given by the axes of the ellipsoid $A_{\jb}-x_{\jb}$.  Let $E_{\jb} := \SPAN \lb e_{\jb, k+1}, \ldots, e_{\jb,d} \rb$ \label{loc:defn:Ej} be the linear space containing $G_{\jb} - x_{\jb}$.
\newnot{symbol:gammaj}
\begin{myold}
  \begin{old}
    The vectors $u_{\jb,\alpha}$ were previously called $e_{\jb, \alpha}$ but
    these conflicted, I believe, with my later basis vectors $e_{\jb}$.
  \end{old}
\end{myold}
% , meaning that $x_{\jb} \pm \gamma_{\jb,\alpha}
% e_\alpha / 2$ lies in the boundary of $A_{\jb}$.
Let
$A$ be an integer
and $u$ a positive real number, and let \newnot{symbol:deltai-u}
\begin{equation}
  \label{eq:5}
  0 = \delta_0 < \delta_1 < \cdots < \delta_A < u <  \delta_{A+1} < \delta_{A+2}=\infty
\end{equation}
be a sequence.  This sequence as well as $A$ and $u$ will be specified in
greater detail later.  For $k \in \lb 1, \ldots, d\rb$, let
$I_k := \lb 0, \ldots, A \rb^k$, and let $I_0 := \lb A \rb$.
\newnot{symbol:Ik}
%%%% Turns out we don't need \linaff ?
% Let $\linaff P $ be the translated affine span of $P$,
% i.e.\ the space of all linear combinations of elements of $ (P-x) $, for any
% $x \in P$.
% \footnote{Note that
% $\operatorname{lin} P$ is commonly used to refer to the linear span of $P$
% rather than of $P-x$, and thus to distinguish from this case, we use the
% notation ``$\Lin$'' rather than ``$\operatorname{lin}$.''}
For $k \in \lb 1, \ldots, d \rb$, $\ib = (i_1, \ldots, i_k) \in
I_k,$ and $\jb = (j_1,\ldots, j_k) \in J_k$ let
\begin{equation}
  \label{eq:defn:Gi}
  G_{\ib,\jb}
  %% G_{\ib}
  := \lb x \in D : \delta_{i_\alpha} \le d(x, H_{j_\alpha}, v_{j_\alpha})
  \le %% <
  \delta_{i_\alpha + 1} \mbox{ for } \alpha = 1, \ldots, N \rb
\end{equation}
where in the previous display for $ \alpha > k$ we let $i_\alpha = A+1$ and $j_\alpha$ take on the values in $\lb 1, \ldots, N \rb \setminus \lb j_1, \ldots, j_k \rb$ (in any order). \newnot{symbol:Gij}  For the $k=0$ case, let $G_{A,1} := \lb x \in D : d(x, \partial D ) \ge u \rb.$ These sets are not parallelotopes, since for $\alpha > k$, $\delta_{i_\alpha+1} = \infty$.  However, for any $x \in G_{\jb}$, $(G_{\ib,\jb} -x) \cap \SPAN \lb v_{j_1}, \ldots, v_{j_\beta} \rb$, for $\beta \le k$, is contained in a $\beta$-dimensional parallelotope by construction; this will be used to understand the volume of $G_{\ib,\jb}$.
We will eventually define $u$ such that
$D \subset \bigcup_{k=0}^d \bigcup_{\jb \in J_k, \ib \in I_k} G_{\ib, \jb}$
(see Lemma~\ref{lem:D-subset-cup-Gij}).

The setup for our first main results is summarized in the following assumption.
\begin{assumption}
  \label{assm:simple-polytope}
  Let $d \ge 2$, let the definitions of the above
  Subsection~\ref{subsec:defns-assms} hold, and let $D \subset \RR^{d}$ be a
  simple convex polytope.
\end{assumption}

Additionally, define the support function for a convex set $D$ to be, for
$x \in \RR^d$ with $\|x\|=1$,
% \begin{equation*}
$  h(D, x) := \max_{d \in D} \la d,x \ra.$
% \end{equation*}
Then the width function is, for $\Vert u \Vert = 1$, \newnot{symbol:h} \newnot{symbol:w}
% \begin{equation*}
$w(D, u) := h(D, u) + h(D,-u),$
% \end{equation*}
which gives the distance between supporting hyperplanes of $D$ with inner
normal vectors $u$ and $-u$, respectively, and let
% \begin{equation*}
$ \diam(D) := \sup_{ \Vert u \Vert = 1} w(D,u)$
% \end{equation*}
be the diameter of $D$.

\subsection{Main Results}
We want to bound the slope of functions $f \in \cC[D,1]|_{G_{\ib,\jb}}$, so that we can apply bracketing bounds on convex function classes with Lipschitz bounds.  Note that each $G_{\ib,\jb}$ is distance $\delta_{i_\alpha}$ in the direction of $v_{j_\alpha}$ from $H_{j_\alpha}$, which means that if $f \in \cC[D,1]|_{G_{\ib,\jb}}$ then $f$ has Lipschitz constant bounded by $2 / \delta_{i_\alpha}$ along the direction $v_{j_\alpha}$.  However, the vectors $v_{j_\alpha}$ are not orthonormal, so the distance from $G_{\ib,\jb}$ along $v_{j_\alpha}$ to a hyperplane other than $H_{j_\alpha}$ may be smaller than $\delta_{i_\alpha}$.

Note that if $P \subset R \subset \RR^d$ where $R$ is a hyperrectangle and $P$ is a parallelotope defined by vectors $v_1,\ldots, v_d$, then if $A$ is a linear map with $v_1,\ldots, v_d$ as its eigenvectors (thus rescaling $P$), then $AR$ will not necessarily still be a hyperrectangle, i.e.\ its axes may no longer be orthogonal. Thus, we cannot argue by simple scaling arguments that bracketing numbers for $P$ scale with the lengths along the vectors $v_i$.

For each $G_{\ib,\jb}$ we will find an orthonormal basis such that
$G_{\ib,\jb}$ is contained in a rectangle $R$ whose axes are given by the
basis and whose lengths along those axes (i.e., widths) are bounded by a
constant times the width of one of the normal vectors $v_{j_\alpha}$.
Furthermore, the distance from $R$ along each basis vector to $\partial D$
will be bounded by the distance from $G_{\ib,\jb}$ along $v_{j_\alpha}$ to
$H_{j_\alpha}$. This will give us control of both the Lipschitz parameters
and the widths corresponding to the basis, and thus control of the bracketing number for classes of convex functions. We rely on the following basic lemma.
\begin{lemma}
  \label{lem:basic-lemma_conv-lipschitz}
  If 
  $ f \in \cC(D, B),$
  %% \text{ for }
  $B > 0,$
  and $x \in D$  is such that
  $d(x, \bd D, e_\alpha) \ge \delta > 0$
  then
  \begin{equation}
    \label{eq:79}
    \lv \pderiv{x_i}f(x)\rv \le \frac{2B}{\delta}
  \end{equation}
  where the derivative stands for both the right and left derivative of $f$.
\end{lemma}
\begin{proof}
  % $f \in \cC[ D, B]|_{G_{\ib,\jb}}$,  then for any $x \in G_{\ib,\jb}$,
  Let
  $z_1 = x - \gamma_1 e_\alpha $ and $z_2 = x + \gamma_2 e_\alpha$,
  $\gamma_1,\gamma_2 > 0$, both be
  elements of $\partial D$, so that
  by convexity we have for any $h \in [-\gamma_1, \gamma_2]$,
  \begin{equation*}
    \label{eq:lipschitz-constraint}
    \frac{-2B}{\delta }
    \le \frac{f(z_1) - f( z_1 + \delta e_\alpha ) }{\delta }
    \le  \frac{ f(x+ h e_\alpha)-f(x)}{h}
    \le \frac{f(z_2 ) - f( z_2 - \delta e_\alpha) }{\delta  }
    \le \frac{2 B } { \delta }.
  \end{equation*}
  Thus, $f$ satisfies a Lipschitz constraint in the direction of $e_\alpha$.
\end{proof}

The following proposition constructs a basis and gives control for the basis elements in $\SPAN \lb G_{\jb} \rb$.
For the basis elements perpendicular to $\SPAN \lb G_{\jb} \rb$,
control is given
by Lemma~\ref{lem:embed-parallelotope-hyperrectangle}
and Lemma~\ref{lem:Rij-Lipschitz-diameter-bound}
in Section~\ref{sec:technical-details}.

%% I have an earlier form of this proof where the notation is in terms of the
%% actual j_\beta indices, but it gets a bit messy (and the proof wasn't
%% quite completed in that notation )
\begin{proposition}
  \label{prop:basis-lipschitz} \newnot{symbol:e-basis}
  Let Assumption~\ref{assm:simple-polytope} hold for a convex polytope $D$.    For each $k \in \lb 0,\ldots, d \rb$, $\ib \in I_k, \jb \in J_k$, and each $G_{\ib,\jb}$, there is an orthornormal basis ${\ve e}_{\ib,\jb} \equiv {\ve e} := (e_1,\ldots, e_d) $ of $\RR^d$ such that
  % \begin{equation*}
  %   d(G_{\ib,\jb}, \partial D, e_\alpha) \ge \delta_{i_\alpha} \text{ for all
  % } \alpha \in \lb 1,\ldots, d \rb,
  % \end{equation*}
  % and thus,
  for any $f \in \cC[
  D,B]|_{G_{\ib,\jb}}$, $f$ has Lipschitz constant $2B/\delta_{i_\alpha}$ in
  the direction $e_{\alpha}$, where $\delta_{i_\alpha} = \delta_{A+1}$ if
  $k+1 \le \alpha \le d$.  Furthermore, there exists a permutation $\pi$ of $\lp 1,\ldots, k \rp$ such that for $\alpha=1,\ldots, k$,
  $e_{\ib,\jb,\alpha} \equiv e_{\alpha}$ satisfies
  \begin{equation}
    \label{eq:11}
    e_\alpha \in \SPAN  \lb v_{j_{\pi(1)}},\ldots, v_{j_{\pi(\alpha)}}\rb, \;
    e_\alpha \perp \SPAN \lb v_{j_{\pi(1)}}, \ldots, v_{j_{\pi(\alpha-1)}} \rb, \;
    \mbox{ and } \la e_\alpha, v_{j_{\pi(\alpha)}} \ra > 0,
    %% \Vert   e_\alpha \Vert = 1,
  \end{equation}
  % \begin{equation}
  %   \label{eq:11}
  %   e_\alpha \in \SPAN  \lb v_{j_1},\ldots, v_{j_\alpha}\rb, \;
  %   e_\alpha \perp \SPAN \lb v_{j_1}, \ldots, v_{j_{\alpha-1}} \rb, \;
  %   \mbox{ and } \la e_\alpha, v_{j_\alpha} \ra > 0,
  %   %%   \Vert   e_\alpha \Vert = 1,
  % \end{equation}
  and
  and for $\alpha \in \lb k+1,\ldots, d\rb$,
  % $e_\alpha \perp \SPAN \lb  v_{j_1}, \ldots, v_{j_k} \rb$.
  $e_\alpha \perp \SPAN \lb  v_{j_{\pi(1)}}, \ldots, v_{j_{\pi(k)}} \rb =: V$.
  In particular, we may take $e_{ k+1}, \ldots, e_{ d}$ to be the orthonormal unit axis vectors of $A_{\jb} - x_{\jb}$ as defined
  on page~\pageref{loc:defn:e-alpha}.
  Thus it is immediate that neither $V$ nor $V^\perp$ depend on $\ib$.
\end{proposition}
\begin{proof}
  Without loss of generality, for ease of notation we assume in this proof
  that
  % \begin{equation*}
  $j_\beta = \beta \text{ for } \beta=1,\ldots, k,$
  % \end{equation*}
  and then that
  %% those $k$ hyperplanes $H_1, \ldots, H_k$ and assume
  \begin{equation*}
    \delta_{i_1} \le \delta_{i_2} \le \cdots \le \delta_{i_k} \le
    \delta_{i_{k+1}} = \cdots  = \delta_{i_N},
  \end{equation*}
  where we let $i_\alpha = A+1$ for $k < \alpha \le N$.
  That is, we assume that
  $H_{1}, \ldots, H_{k}$ are the nearest hyperplanes to $G_{\ib,\jb}$, in
  order of increasing distance;  we then take $\pi$ to be the identity. To define the orthonormal basis vectors, we
  will use a Gram-Schmidt orthonormalization, proceeding
  according to increasing distances from $G_{\ib,\jb}$ to the
  hyperplanes $H_{j}$.  Define $e_1 := v_{1}$ and for $1 < j \le
  k$, define $e_j$ inductively by
  \begin{equation*}
    e_j \in \SPAN  \lb v_{1},\ldots, v_{j}\rb, \;
    e_j \perp \SPAN \lb v_{1}, \ldots, v_{j-1} \rb, \;
    \la e_j, v_{j} \ra > 0, \mbox{ and }
    \Vert   e_j \Vert = 1.
  \end{equation*}
  Let $e_{k+1}, \ldots, e_{d}$ be orthonormal unit vectors given by the axes of
  the ellipsoid $A_{\jb}-x_{\jb}$.  Note that these vectors form an
  orthonormal basis of $\SPAN \lb v_1, \ldots, v_k \rb^\perp$ because
  $\SPAN \lb e_{k+1}, \ldots, e_{d} \rb = \SPAN
  (G_{\jb}-x_{\jb}) $ is perpendicular
  to $\SPAN \lb v_1, \ldots, v_k \rb$ by
  definition.  % and % let $\lb e_j \rb_{j=k+1}^d $ be any orthonormal basis of
  % $\SPAN \lb v_{1}, % \ldots, v_{k} \rb^{\perp}$.
  % Recall, for a point $x$, a set $H$, and a unit vector $v$, that
  % % \begin{equation*}
  % %%   \label{eq:defn:distance-set-direction}
  %   $d(x,H,v) := \inf
  %   \lb |k| : x + kv \in H \rb$
  % % \end{equation*}
  % is the distance from $x$ to $H$ in direction $v$, and for a set $E$,
  % $d(E,H,v) := \inf_{x \in E} d(x,H,v)$.
  For $\alpha \in \lb 1, \ldots, k
  \rb$, for any $x \in G_{\ib,\jb}$, since $d(x,H_{\alpha}, v)$ achieves its minimum
  when $v$ is $v_{\alpha}$,
  \begin{align*}
    & d(x, H_{\alpha}, e_\alpha) \ge d(x, H_{\alpha}, v_{\alpha}) \ge \delta_{i_\alpha}, \\
    & d(x, H_{j}, e_\alpha) \ge d(x, H_{j}, v_{j}) \ge \delta_{i_j} \ge
      \delta_{i_\alpha},
      \mbox{ for all } N \ge j >  \alpha,    \mbox{   and }  \\
    & d(x,H_{j}, e_\alpha) = \infty > \delta_{i_\alpha} \mbox{ for } j < \alpha,
  \end{align*}
  since $e_\alpha \perp \SPAN \lb v_{1}, \ldots, v_{\alpha-1} \rb$.
  Similarly, for $\alpha \in \lb k+1, \ldots, d \rb$,
  \begin{align*}
    %% & d(x, H_\alpha, e_\alpha) \ge d(x, H_\alpha, v_\alpha) \ge \delta_{i_\alpha}, \\
       & d(x, H_j, e_\alpha) \ge d(x, H_j, v_j) \ge \delta_{A+1},
         \mbox{ for all } N \ge   j \ge k+1,    \mbox{   and }  \\
       & d(x,H_j, e_\alpha) = \infty > \delta_{A+1} \mbox{ for } j \le k,
  \end{align*}
  since $e_\alpha \perp \SPAN \lb v_1, \ldots, v_k \rb$.  Thus, we have
  % \begin{equation*}
  $d(G_{\ib, \jb}, H_j, e_\alpha) \ge \delta_{i_\alpha}$
  % \end{equation*}
  for $\alpha \in \lb 1, \ldots, d \rb$ and for $j \in \lb 1, \ldots , N
  \rb$.
  % where we let $\delta_{i_\alpha} = \delta_{A+1} $ for $\alpha \in \lb
  % k+1,\ldots, d\rb$.
  That is, we have shown
  \begin{equation}
    \label{eq:dist-G-to-boundary-D}
    d(G_{\ib, \jb}, \partial D, e_\alpha) \ge \delta_{i_\alpha} \mbox{ for all }
    \alpha \in \lb 1,  \ldots , d \rb.
  \end{equation}
  Thus by \eqref{eq:79}, $f$ has Lipschitz bound $2 B / \delta_{i_\alpha}$ in the direction $e_\alpha$.
  % Thus, if $f \in \cC[ D, B]|_{G_{\ib,\jb}}$,  then for any $x \in G_{\ib,\jb}$,  let
  % $z_1 = x - \gamma_1 e_\alpha $ and $z_2 = x + \gamma_2 e_\alpha$,
  % $\gamma_1,\gamma_2 > 0$, both be
  % elements of $\partial G_{\ib,\jb}$, so that by convexity we have
  % \begin{equation*}
  %   \label{eq:lipschitz-constraint}
  %   \frac{-2B}{\delta_{i_\alpha}}
  %   \le \frac{f(z_1) - f( z_1 -  \delta_{i_\alpha}e_\alpha) }{\delta_{i_\alpha}}
  %   \le  \frac{ f(x+ k e_\alpha)-f(x)}{k}
  %   \le \frac{f(z_2  +\delta_{i_\alpha} e_\alpha) - f( z_2) }{\delta_{i_\alpha} }
  %   \le \frac{2 B } { \delta_{i_\alpha}},
  % \end{equation*}
  % using \eqref{eq:dist-G-to-boundary-D}.  Thus, $f$ satisfies a Lipschitz
  % constraint in the direction of $e_\alpha$.
\end{proof}

\smallskip

The next  lemma is necessary for us to be able to apply
\eqref{eq:bracketing-cover}.  To state it, we first define some constants.
For $k \in \lb 1, \ldots, d\rb$, let $d_{i,j,k} := d(E_i, F_j)$ where $E_i,$ $i=1,\ldots, N_k$, is a $(d-k)$-face
%% (a face of dimension $d-k$)
and $F_j$, $j=1,\ldots, N$, is a facet.  Then let
\begin{equation}
  \label{eq:21}
  r_D :=
  \min \lb d_{i,j,k} : d_{i,j,k} \ne 0, k \in \lb 1,\ldots, d \rb \rb > 0.
\end{equation}
% Let
% \begin{equation}
%   \label{eq:defn:u}
%   u \equiv u_D :=
%   %%   \exp \lp -2(p+1)^2  (p+2) \log 2 \rp
%   r_D/2 \wedge
%   2^{  -2(p+1)^2  (p+2) }
%   \wedge
%   %%   \min_{k \in \lb 1,\ldots, d-1 \rb} \min_{ \jb \in J_k} w_r(G_{\jb})
%   \min_{k \in \lb 1,\ldots, d-1 \rb} \min_{ \jb \in J_k} \rho_{\jb, \alpha}
% \end{equation}
% where $\rho_{\jb,\alpha} := w(G_{\jb},e_\alpha)$ (recall the definition of
% the width function $w$ in Subsection~\ref{subsec:defns-assms}).
%%%%% %% where $E_{\jb}$ is defined on page~\pageref{loc:defn:Ej}.
Let
\begin{equation}
  \label{eq:defn:u}
  u \equiv u_D :=
  %% \exp \lp -2(p+1)^2  (p+2) \log 2 \rp
  r_D/2 \wedge
  2^{  -2(p+1)^2  (p+2) }
  \wedge
  \min_{k \in \lb 1,\ldots, d-1 \rb} \min_{ \jb \in J_k,
    e \in E_{\jb}
    %% \SPAN\lb  e_{\jb, k+1}, \ldots, e_{\jb,d} \rb
  }
  \frac{d^+(x_{\jb}, \relbd  G_{\jb}, e) }{ L_{k,2}}
\end{equation}
where %% $L_{k,2}$
%% and
for $k \in \lb 1, \ldots, d-1 \rb$,
\begin{equation}
  \label{eq:defn:Lk2}
  L_{k,2}:= 1
  \vee
  \max_{\jb \in J_k}
  %% \max_{\beta > k } %%
  \max_{i \in \lb 1, \ldots, N \rb \setminus \jb }
  \sum_{\gamma=1}^k \frac{ \la \tilde f_{\jb, \gamma} ,
    v_{i} \ra }{\la \tilde f_{\jb,\gamma} , v_{j_\gamma} \ra},
\end{equation}
where $\tilde f_{\jb,\gamma}$ are defined in
Proposition~\ref{prop:width-upperbound},
and  $E_{\jb}$ is defined on page~\pageref{loc:defn:Ej}.
\begin{mynotes}
  \begin{notes}
    Definition of $u$ involving $L_{k,2}$ is used in \eqref{eq100}.
  \end{notes}
\end{mynotes}
\begin{mynotes}
  \begin{notes}
    I have changed defn of $u$; remember to check its use carefully!
  \end{notes}
\end{mynotes}
\begin{mylongform}
  \begin{longform}
    The only property of $u$ we need for the following lemma is that $u \le
    r_D/2$.
  \end{longform}
\end{mylongform}
\begin{lemma}
  \label{lem:D-subset-cup-Gij}
  Under Assumption~\ref{assm:simple-polytope}, with $u$ given in \eqref{eq:defn:u}, we have
  $$D \subset \bigcup_{k=0}^d \bigcup_{\jb \in J_k, \ib \in I_k} G_{\ib, \jb}.$$
\end{lemma}
\begin{proof}
  Fix $x \in D$.
  % We need to show that there are no more than $d$ facets $F$ such that $d(x,F) < u$.
  % If $d(x, D) \ge u$ then $x \in G_{A,1}$ (corresponding to $k=0$), so we assume $d(x, D) < u$.  Then let $k_x := \min \lb k \in \lb 0, \ldots, d-1 \rb : d(x,G) < u, \text{ some } k\text{-face } G\rb$ and let $G_x$ be the $k_x$-face such that the minimum is attained.  Now for any facet $F$, if $d(x,F) < u$ then we also have $d(G_x,F) < 2 u \le r_D$.  But this contradicts the definition of $r_D$ unless $d(G_x,F)=0$.
  We need to show that there are no more than $d$ facets $F$ such that $d(x,F) < u$.  If $d(x, \bd D) \ge u$ then $x \in G_{A,1}$ (corresponding to $k=0$), so we assume $d(x, \bd D) < u$.  Then let $k_x := \max \lb k \in \lb 1, \ldots, d \rb : d(x,G) < u, \text{ some } (d-k)\text{-face } G\rb$ and let $G_x$ be any $(d-k_x)$-face such that the minimum is attained.  Now for any facet $F$, if $d(x,F) < u$ then we also have $d(G_x,F) < 2 u \le r_D$.  But this contradicts the definition of $r_D$ unless $d(G_x,F)=0$.  Because $G_x$ is nonempty, $G_x = G_{\jb}$ for some $\jb \in J_{k_x}$ (rather than $\jb \in \tilde{J}_{k_x} \setminus J_{k_x}$).  The distance from $x$ to the boundary of $G_x$ is no smaller than $u$, because otherwise we would contradict the maximality %minimality
  defining $k_x$ since the boundary is given by $(d-(k_x+1))$-faces.  Thus the distance from $x$ to any facet intersecting but not containing $G_x$ is no smaller than $u$.  Furthermore because $D$ is simple, there are exactly $k_x \le d$ facets containing $G_x$; and we have shown that the distance to every facet excluding these $k_x$ is no smaller than $u$.  Thus, $G_x$ is unique and $x$ lies in $G_{\ib, \jb}$ for some $\ib \in I_{k_x}$.
\end{proof}

The next lemma combines Lemma~\ref{lem:embed-parallelotope-hyperrectangle} and Lemma~\ref{lem:Rij-Lipschitz-diameter-bound} with Theorem~\ref{thm:GS-extension}.  The statement depends on the constants $L_{k,1}$, $k\in \lb 1,\ldots,d \rb$, and $L_{\jb,4}$, $\jb \in J_k$.  These depend only on $D$ and are defined in \eqref{eq:defn:Lk1} and \eqref{eq:defn:Lj4}.  %%in Section~\ref{sec:technical-details}.
\begin{mynotes}
  \begin{notes}
    The constants are used in the proposition / lemma right after their definition, respectively; and in order to read the definition of $L_{k,1}$ one has to go into section 4 anyway to see the definition of the $f$'s.  Since they are constants, their exact definition is not needed at this point for first reading.
  \end{notes}
\end{mynotes}
\begin{lemma}
  \label{lem:Pij-bracketing-bound}
  Let Assumption~\ref{assm:simple-polytope} hold.  Fix $k \in \lb 1, \ldots, d\rb$, $\ib \in I_k, \jb \in J_k$. Then for any $p \ge 1$ and  for $\epsilon > 0 $,
  % Let
  % $P_{\ib,\jb} : = \sum_{\alpha=1}^k [0, f_{\alpha}]$ where $f_\alpha$ are
  % defined in Proposition~\ref{lem:parallelotope-vertex-representation}.
  \begin{equation}
    \label{eq:bracketing-card--1}
    \begin{split}
      \MoveEqLeft
      \log \Nb[\epsilon \vol_d (G_{\ib,\jb})^{1/p} ,
      \cC[D,1]|_{G_{\ib,\jb}}, L_p ]
      % \le \sum_{\ib \in I_k} \log \Nb[\alpha_{\ib} \vol (G_{\ib,\jb})^{1/p} ,
      % \cC[G_{\ib,\jb},1, \Gb_{\ib}], L_p ]
      \\
      & \le  c_d \epsilon^{-d/2}\lp 1 +
      \frac{2 d^2}{L_{\jb,4}}  \max_{\alpha=1,\ldots, k}
      \frac{\delta_{i_\alpha+1}}{\delta_{i_\alpha}} +
      \sum_{\alpha=k+1}^d \frac{8 L_{k,1} \rho_{\jb,\alpha} }{u}
      \rp^{d/2}.
    \end{split}
  \end{equation}
  % \begin{equation}
  %   \label{eq:27}
  %   \begin{split}
  %     \MoveEqLeft
  %     \log \Nb[ \epsilon
  %     ,
  %     \cC[G_{\ib,\jb},1, \Gb_{\ib}],
  %     L_\infty  % L_p
  %     ]
  %     \le \\
  %     &      c_d \lp \frac{1 +
  %     \sum_{\alpha=1}^k \frac{  2  (\delta_{i_\alpha+1}-\delta_{i_\alpha})}{\delta_{i_\alpha}} +
  %     \sum_{\alpha=k+1}^d \frac{8 L_{k,1} \rho_{\jb,\alpha} }{u} } {a_{\ib}}
  %     \rp^{d/2}
  %   \end{split}
  % \end{equation}
\end{lemma}
\begin{proof}
  Let
  \begin{equation}
    \label{eq:45}
    \Gb_{\ib} := \lp \frac{2}{d(G_{\ib,\jb}, \bd D, e_1)}, \ldots,
    \frac{2}{d(G_{\ib,\jb}, \bd D, e_k)}, \frac{2}{u}, \cdots, \frac{2}{u} \rp
  \end{equation}
  where $e_{\ib,\jb, \alpha} \equiv e_\alpha$, $\alpha=1,\ldots,d$, is given
  by Proposition~\ref{prop:basis-lipschitz}.
  Then
  \begin{equation}
    \label{eq:C-inclusion-2}
    \cC[D,1]|_{G_{\ib,\jb}}
    \subset \cC[G_{\ib,\jb}, 1, \Gb_{\ib}, {\ve e}]
  \end{equation}
  where ${\ve e} = \lp e_1, \ldots, e_d \rp$.
  Let $\tilde{f}_{j_\gamma} $ be given by
  Lemma~\ref{lem:parallelotope-vertex-representation} applied to the $k$
  linearly independent unit normal vectors $v_{j_1}, \ldots, v_{j_k}$, and
  (as in that lemma, with ``$d_\beta$'' given by $
  (\delta_{i_\gamma+1}-\delta_{i_\gamma})$), let
  \begin{equation}
    \label{eq:defn:fjgamma}
    f_{\ib,\jb,j_\gamma}
    \equiv f_{j_\gamma} := (\delta_{i_\gamma+1}-\delta_{i_\gamma})
    \tilde{f}_{j_\gamma} / \la \tilde{f}_{j_\gamma}, v_{j_\gamma} \ra.
  \end{equation}
  Let $P_{\ib,\jb} := \sum_{\gamma=1}^k [0,f_{j_\gamma}]$, where $[0,v]:= \lb \lambda v : \lambda \in [0,1] \rb$.  By Lemma~\ref{lem:embed-parallelotope-hyperrectangle}, $P_{\ib,\jb} \subset \sum_{\alpha=1}^k [0, \gamma_\alpha e_\alpha]$ where $\gamma_\alpha$ are given by the lemma.  Thus by \eqref{eq:Gij-subset-Rij}, for some $x \in G_{\ib,\jb}$,
  \begin{equation}
    \label{eq:39}
    G_{\ib,\jb}
    \subset x + \sum_{\alpha=1}^k  [0, \gamma_\alpha e_\alpha]
    + \sum_{\alpha=k+1}^d \ls - 2 L_{k,1} \rho_{\jb,\alpha} e_\alpha,
    2 L_{k,1} \rho_{\jb,\alpha} e_\alpha \rs.
  \end{equation}
  Now, using \eqref{eq:C-inclusion-2}, we apply
  Theorem~\ref{thm:GS-extension} to see
  \begin{equation}
    \label{eq:44}
    \begin{split}
      \MoveEqLeft \log \Nb[\epsilon \Vol_d(G_{\ib,\jb})^{1/p} ,
      \cC[D,1]|_{G_{\ib,\jb}}, L_p] \\
      & \le c_d \epsilon^{-d/2} \lp 1 + \sum_{\alpha=1}^k \frac{2        \gamma_\alpha }{d(G_{\ib,\jb}, \bd D, e_\alpha)} +
      \sum_{\alpha=k+1}^d \frac{8 L_{k,1} \rho_{\jb,\alpha} }{u} \rp^{d/2}
    \end{split}
  \end{equation}
  Now by applying \eqref{eq:40},  \eqref{eq:36}, and \eqref{eq:37}
  with $v = e_\alpha$, we see that
  \begin{equation}
    \label{eq:26}
    \frac{2        \gamma_\alpha }{d(G_{\ib,\jb}, \bd D, e_\alpha)}
    \le \frac{2 d \diam(G_{\ib,\jb}, e_\alpha)}{d(G_{\ib,\jb}, \bd D, e_\alpha)}
    \le \frac{2d \min_{\substack{\beta = 1, \ldots , k  }}
      \frac{ \delta_{i_\beta+1}}{\lv \la  e_\alpha, v_{j_\beta} \ra \rv} }{% L_{\jb,4}
      \max_{\substack{\beta = 1, \ldots , k }}
      \frac{ \delta_{i_\beta}}{\lv \la  e_\alpha, v_{j_\beta} \ra \rv }}
    \le \frac{2d}{L_{\jb,4}} \max_{\beta = 1, \ldots, k} \frac{\delta_{i_\beta +1}}{\delta_{i_\beta}}
  \end{equation}
  \begin{mynotes}
    \begin{notes}
      (In some version of the paper I had a 'min' in the numerator of third fraction rather than  a max -- seems obviously wrong so i am changing, but making a note here in case i'm missing something complicated ... )

      \commentC{(( March 2019: Changing the numerator of third fraction back to 'min', which is what \eqref{eq:36} gives!?  and without which, the final inequality seems wrong.  (The 'min' makes sense because the inner product $< e_\alpha, v_{j_\beta}>$ could be very small or $0$).  I have also changed the denominator 'min' to a 'max'. again to line up with \eqref{eq:37}.   Everything that was written here is very confusing to me.  I'm not sure if there is a hidden error somewhere, or what.  But))}
      
      (Add justification for final step?)

      NOTE       the strict inequality in definition of $L_{\jb,4}$ is because the term
      \begin{equation*}
        \frac{2d \min_{\substack{\beta = 1, \ldots , k  }}
          \frac{ \delta_{i_\beta+1}}{\lv \la  e_\alpha, v_{j_\beta} \ra \rv} }{% L_{\jb,4}
          \max_{\substack{\beta = 1, \ldots , k }}
          \frac{ \delta_{i_\beta}}{\lv \la  e_\alpha, v_{j_\beta} \ra \rv }}
      \end{equation*}
      is finite.  Thus the $\la e_\alpha, v_{j_\beta} \ra$ terms must be finite!
      So we get
      \begin{equation*}
        L_{\jb,4}^{-1} := \lp {\min_{e_\alpha}  \min_{v_{j_\beta} : \la v_{j_\beta}, e_\alpha \ra > 0}
          \lv \la e_\alpha v_{j_\beta} \ra \rv} \rp^{-1}.
      \end{equation*}
    \end{notes}
  \end{mynotes}
  where
  \begin{equation}
    \label{eq:defn:Lj4}
    % L_{\jb,4}^{-1} := \lp {\min_{v_{j_\beta} : \la v_{j_\beta}, e_\alpha \ra > 0}
    % \lv \la e_\alpha v_{j_\beta} \ra \rv} \rp^{-1}.
    L_{\jb,4} :=  {\min_{e_1,\ldots, e_d} \min_{v_{j_\beta} : \la v_{j_\beta}, e_\alpha \ra > 0}
      \lv \la e_\alpha v_{j_\beta} \ra \rv}.
  \end{equation}
  (We can restrict to $v_{j_\beta}$ such that $\la v_{j_\beta}, e_\alpha \ra > 0$ in the definition of $L_{\jb,4}$ because the numerator in \eqref{eq:26} is finite.)  Thus \eqref{eq:44} is bounded above by
  \begin{equation*}
    c_d \epsilon^{-d/2} \lp 1 +
    \frac{2 d^2}{L_{\jb,4}} \max_{\beta = 1, \ldots, k} \frac{\delta_{i_\beta +1}}{\delta_{i_\beta}}
    +
    \sum_{\alpha=k+1}^d \frac{8 L_{k,1} \rho_{\jb,\alpha} }{u} \rp^{d/2}.
  \end{equation*}
\end{proof}

Now we present our main theorem.  It gives a bracketing entropy of order $\epsilon^{-d/2}$ when $D$ is a fixed simple polytope.  Its proof relies on embedding $G_{\ib,\jb}$ in a set $R_{\ib,\jb}$ (defined in \eqref{eq:defn:R}) which is a set-sum of a parallelotope and a hyperrectangle with axes given by Proposition~\ref{prop:basis-lipschitz}.  We need to control the distance of $G_{\ib,\jb}$ to $\partial D$, and we need to control the size of $R_{\ib,\jb}$ in terms of the widths along its axes.  Then we can use the results of Section~\ref{sec:Lipschitz-bracketing} on $R_{\ib,\jb}$ and thus on $G_{\ib,\jb}$.  We defer some statements and proofs of needed facts about $G_{\ib,\jb}$ and $R_{\ib,\jb}$ until Section~\ref{sec:technical-details}.
\begin{mynotes}
  \begin{notes}
    If $u_k$ is bounded (above and) below by $d(G_{\jb}, e_\alpha)$ ,
    $\alpha=k+1,\ldots, d$, then the $1/u_k^{d-k}$ will cancel with the
    volume term here, leaving just a count of the number of $k$-faces times
    $B_u^k$ here.
  \end{notes}
\end{mynotes}
%% Before stating the theorem, let us define some constants.
% \begin{mynotes}
%   \begin{notes}
%     If the form of $S$ is not given in the theorem then these constants do not need to be given here?
%   \end{notes}
% \end{mynotes}
% Let
% \begin{equation}
%   \label{eq:defn:Lk1}
%   L_{k,1} :=
%   1 \vee
%   \max_{\jb \in J_k}
%   \max_{\substack{\|e\| = 1\\ e \in E_{\jb}}}
%   \max_{\substack{j \in \lb 1,\ldots, N\rb \setminus \jb ; \,  \la e, v_j \ra < 0 \\  \la v_i,v_j \ra \ge 0 , \text{ some } i \in \jb }}
%   \la -e, v_j \ra^{-1},
% \end{equation}
% where $E_{\jb} := \SPAN \lb e_{\jb, k+1}, \ldots, e_{\jb, d}\rb$ from Proposition~\ref{prop:basis-lipschitz}, and we abuse notation as convenient to treat $\jb$ as if it were a set rather than a vector.
\begin{mynotes}
  \begin{notes}
    It seems like there are several issues here.
    \begin{enumerate}
    \item Why is the second part of the definition of $u$ needed?  is it operative?  Specifically, dividing by $L_{k,2}$?  See at least \eqref{eq100} where it is used but I am unconvinced it is really needed (could have $1+u$ in place of $2$?)
    \end{enumerate}

    Another note.  Seems like the defn of $u$ may be the issue
    governing the difference between strict curvature and polytope.  (and may
    suggest answer to question of how much curvature matters ).  $u$ is chosen to govern the ratio of  width of $G_{i,j}$  to  $G_j$.   Basically as $G_j$ gets smaller in width $u$ has to get smaller.  ``Curvature'' means $G_j$ has width $0$ so $u$ must be $0$.  Or rather, the width of the approximating pieces, $G_{i,j}$ are infinitely larger than that of $G_j$. changes paradigm.

  \end{notes}
\end{mynotes}
% Let
% \begin{equation}
%   \label{eq:12}
%   L_{\jb,3} := \max_{\alpha \in \lb 1,\ldots,k\rb} 1 / \la \tilde
%   f_\alpha, v_{j_\alpha} \ra.
% \end{equation}

\begin{theorem}
  \label{thm:main-thm}
  Let Assumption~\ref{assm:simple-polytope} hold for a convex polytope $D
  \subseteq \prod_{i=1}^d [a_i,b_i]$.  Fix $p \ge
  1$.  Then
  % for some $\epsilon_0>0$ and for $0 < \epsilon \le \epsilon_0 B
  % \lp \prod_{i=1}^d b_i-a_i \rp^{1/p} $, %% what is upper bound
  for all $\epsilon > 0$,
  \begin{equation}
    \label{eq:20}
    \log \Nb[\epsilon , \cC(D,B), L_p]
    \le
    % \lb \tilde c_d \sum_{k=0}^d \frac{ B_u^k }{ u^{(d-k)d/2}} \sum_{\jb \in
    % J_k^D} \vol_{d-k}(G_{\jb})^{d/2} \rb
    S
    \epsilon^{-d/2}
    \bigg( B  \bigg( \prod_{i=1}^d (b_i-a_i) \bigg)^{1/p} \bigg)^{d/2},
  \end{equation}
  where
  % $S := \lp \sum_{k=0}^d \gamma_k^D A_u^k \sum_{\jb \in J_k^D}
  % \vol_{d-k}(G_{\jb}) \rp $,
  % $S := \lp \sum_{k=0}^d
  % (2 L_{k,1}) ^{d-k}  A_u^k
  % \sum_{j \in J_k} \vol_{d-k} (G_{\jb}) L_{\jb,3}^k   \rp ^{1/p} $,
  $S$ is a constant depending only on $d$ and $D$.
\end{theorem}
The form of the constant $S$ is given in the proof of the theorem.
\begin{proof}
  Fix $\epsilon > 0$.
  First, we will reduce to the case where $D \subset [0,1]^d$ and $B = 1$ by
  a scaling argument.   Let $C$ be an affine map from $\prod_{i=1}^d
  [a_i,b_i]$ to $[0,1]$, where  $\tilde D$ is the image of $D$,
  and assume we have a bracketing cover $[\tilde l_1, \tilde u_1],
  \ldots, [\tilde l_N, \tilde u_N]$ of $\cC[ \tilde D, 1]$.  Let $l_i
  := B \, \tilde{l}_i \circ C$ and similarly for $u_i$, so that
  $[l_1,u_1],\ldots, [l_N,u_N]$ form brackets for $\cC[ D, B]$.  Their
  $L_p^p$ size
  is
  \begin{equation*}
    \int_D \lp u_i(x) - l_i(x) \rp^p dx
    = B^p \int_{\tilde D} (\tilde u_i(x) - \tilde l_i(x))^p \prod^d (b_i-a_i) dx.
  \end{equation*}
  Thus,
  \begin{equation*}
    \bracketing \bigg( \epsilon B \bigg( \prod^d b_i-a_i\bigg)^{1/p} , \cC[D, B], L_p \bigg)
    \le \Nb[ \epsilon  , \cC[ \tilde D, 1], L_p],
  \end{equation*}
  so apply the theorem with  $\eta =  \epsilon / B \lp \prod^d
  b_i-a_i\rp^{1/p}$ for $\epsilon$.  Note that the
  %% constants $S_1$ and $S_2$  depend
  constant $S$ depends %
  on $\tilde D$, the version of $D$ normalized to lie in $[0,1]^d$.

  We now assume $D \subset [0,1]^d$ and $B=1$.  We specify the sequence in \eqref{eq:5} and $a_{\ib,k} \equiv a_{\ib} > 0$, which will govern the $L_p$-sizes of our brackets on $G_{\ib,\jb}$, as follows.
  \begin{mylongform}
    \begin{longform}
      Note the sizes $a_{\ib}$ are constant over $\jb$.
    \end{longform}
  \end{mylongform}
  Let
  \begin{equation}
    \label{eq:15}
    \delta_i := \exp\lb p \lp \frac{p+1}{p+2}\rp^{i-1} \log \epsilon \rb
    \quad \mbox{ for } i = 1, \ldots, A,
    \qquad \mbox{ and } \qquad
    \delta_0 = 0.
  \end{equation}
  %% and $\delta_0 = 0$.
  Note that this implicitly defines $A$, by \eqref{eq:5} and \eqref{eq:defn:u}.
  \begin{mynotes}
    \begin{notes}
      $a_{\ib, k}$ will be the bracket size (the $\epsilon$ for each small piece).
    \end{notes}
  \end{mynotes}
  For $k \in \lb 1, \ldots d \rb$ and $\ib \in
  I_k$, we will let $ a_{(i_1,\ldots,i_k)} = 2$ if $i_\alpha = 0$ for any
  $\alpha \in \lb 1,\ldots ,k \rb$, and otherwise we let
  \begin{align*}
    a_{(i_1,\ldots,i_k)} & :=
                           \prod_{\beta=1}^k a_{i_\beta} :=
                           \prod_{\beta=1}^k \epsilon^{1/k} \exp \lb - p
                           \frac{(p+1)^{i_\beta-2}}{(p+2)^{i_\beta-1}}
                           \log \epsilon \rb.
  \end{align*}
  For the $k = 0$ case, let $a_{A} := \epsilon / u$. %% Is this clear?
  \begin{mynotes}
    \begin{notes}
      If you think through the guntuboyina and sen induction: an
      inductive argument essentially plugs $\epsilon_1$ in for $\epsilon_2$,
      which is to say to get $a_{(i_1,\ldots, i_k)}$, i should be COMPOSING the
      formula for G\&S's $\alpha_m$ on itself.  This basically reduces to the
      above.
    \end{notes}
  \end{mynotes}
  Let
  \begin{equation}
    \label{eq:13}
    a = \bigg( \sum_{k=0}^d \sum_{\jb \in J_k, \ib \in I_k} a_{\ib}^p
    \vol_{d}(G_{\ib,\jb}) \bigg)^{1/p}.
  \end{equation}
  Then %% by Assumption~\ref{assm:simple-polytope} and
  since $D \subset \cup_{k=0}^d \cup_{\jb \in J_k, \ib \in I_k} G_{\ib, \jb}$ by Lemma~\ref{lem:D-subset-cup-Gij}, as in \eqref{eq:bracketing-cover}, \label{tmp:loc:Jk}
  \begin{mylongform}
    \begin{longform}
      \begin{equation*}
        % \label{eq:bracket-number}
        \Nb[ a  , \cC[D, 1], L_p ]
        \le \prod_{k=0}^d \prod_{\jb \in J_k, \ib \in I_k}
        \Nb[a_{\ib} \vol_{d}(G_{\ib, \jb}) ^{1/p}, \cC[D,1]|_{G_{\ib,\jb}}, L_p],
      \end{equation*}
    \end{longform}
  \end{mylongform}
  \begin{myold}
    \begin{old}
      Now by Lemma~\ref{lem:Gij-cover-D}, we can ignore all terms with
      $\jb \in J_k \setminus J_k^D$, where
      $J_k^D := \lb \jb \in J_k : \cap_{\alpha=1}^k H_{j_\alpha} \cap D
      \text{ is a } (d-k)\text{-face of } D \rb$.
    \end{old}
  \end{myold}
  \begin{equation}
    \label{eq:bracket-number}
    \log \Nb[ a  , \cC[D, 1], L_p ]
    \le \sum_{k=0}^d \sum_{\jb \in J_k}
    \sum_{ \ib \in I_k} \log \Nb[a_{\ib} \vol_{d}(G_{\ib,\jb})^{1/p} , \cC[D,1]|_{G_{\ib,\jb}}, L_p].
  \end{equation}
  First, consider the case $k \in \lb 1, \ldots , d \rb$ and compute the sum over $I_k$ for a fixed $\jb \in J_k$.
  \begin{myold}
    \begin{old}
      Let $\rho_{\jb,\alpha} := w(G_{\jb},e_\alpha)$
      (recall the definition of the width function $w$ in
      Subsection~\ref{subsec:defns-assms}).
      %% By Proposition~\ref{prop:basis-lipschitz},
      Let $V_{\ib,\jb,1} \equiv V_1$ be $\SPAN \lb e_1, \ldots,e_k \rb$ and
      $V_{\ib,\jb,2} \equiv V_2, \ldots, V_{\ib,\jb,d-k+1} \equiv V_{d-k+1}$ be
      $\SPAN \lb e_{k+1} \rb, \ldots, \SPAN \lb e_{d} \rb$, respectively.  Let
      $R_1 := \max_{\alpha \in \lb 1, \ldots, k \rb}
      2(\delta_{i_\alpha+1}-\delta_{i_\alpha}) / \delta_{i_\alpha}$, %
      and %
      let
      $ \ve{R}_{\ib,\jb} := \lp R_1, \frac{8 L_{k,1} \rho_{\jb, k+1}}{u}, \ldots,
      \frac{8 L_{k,1} \rho_{\jb, d}}{u} \rp$.  Then by
      Lemma~\ref{lem:Rij-Lipschitz-diameter-bound},
      \begin{equation}
        \label{eq:C-inclusion}
        \cC[D,1]|_{G_{\ib,\jb}}
        \subset
        \cC[G_{\ib,\jb}, 1, \ve{R}_{\ib,\jb}, V_1, \ldots, V_{d-k+1} ].
      \end{equation}
      % \begin{equation}
      %   \label{eq:C-inclusion}
      %   \cC[D,1]|_{G_{\ib,\jb}}
      %   \subset \cC[G_{\ib,\jb}, 1, \Gb, {\ve e}]
      % \end{equation}
      % where $\ve{R}_{\ib,\jb}$ and $V_1, \ldots, V_{d-k+1}$ are defined in the lemma.
      % where $\Gb_{\ib} = (2/\delta_{i_1}, \ldots, 2/\delta_{i_k}, 2/u, \ldots,
      % 2/u)$.
      We have
      \begin{equation}
        \label{eq:C-inclusion}
        \cC[D,1]|_{G_{\ib,\jb}}
        \subset \cC[G_{\ib,\jb}, 1, \Gb_{\ib}, {\ve e}],
      \end{equation}
      % where $\Gb \in \RR^d$ has $2/\delta_{i_\alpha}$ in the $j_{\alpha}$
      % component, for $\alpha = 1, \ldots, k$, and is $2/ u$ in the remaining
      % components.
      where $\Gb_{\ib}$ is given by \eqref{eq:45} and ${\ve e} = (e_1, \ldots,
      e_d)$ is the basis given by Proposition~\ref{prop:basis-lipschitz}.
      % We will need the following constant (depending on $D$).
      % For $k \in \lb 1, \ldots, d -1 \rb$,  let
      % \begin{equation}
      %   \label{eq:defn:Lk1}
      %   L_{k,1} :=
      %   1 \vee
      %   \max_{\jb \in J_k}
      %   \max_{\substack{\|e\| = 1\\ e \in E_{\jb}}}
      %   \max_{\substack{j \in \lb 1,\ldots, N\rb \setminus \jb ; \,  \la e, v_j \ra < 0 \\  \la v_i,v_j \ra \ge 0 , \text{ some } i \in \jb }}
      %   \la -e, v_j \ra^{-1},
      % \end{equation}
      % where $E_{\jb} := \SPAN \lb e_{\jb, k+1}, \ldots, e_{\jb, d}\rb$ from
      % Proposition~\ref{prop:basis-lipschitz}, and we abuse notation as convenient
      % to treat $\jb$ as if it were a set rather than a vector.
      % We also (arbitrarily) define $L_{d,1}:= 1$, for ease of presentation later on.
      Then by \eqref{eq:C-inclusion} and the first inequality of Theorem~\ref{thm:GS-extension} we have
      \begin{equation}
        \label{eq:bracketing-card--1}
        \sum_{\ib \in I_k} \log \Nb[a_{\ib} \vol_d (G_{\ib,\jb})^{1/p} ,
        \cC[D,1]|_{G_{\ib,\jb}}, L_p ]
        % \le \sum_{\ib \in I_k} \log \Nb[\alpha_{\ib} \vol (G_{\ib,\jb})^{1/p} ,
        % \cC[G_{\ib,\jb},1, \Gb_{\ib}], L_p ]
        \le \sum_{\ib \in I_k}
        \log \Nb[a_{\ib}
        ,
        \cC[G_{\ib,\jb},1, \Gb_{\ib}],
        L_\infty  % L_p
        ].
      \end{equation}
    \end{old}
  \end{myold}
  We use the trivial bracket $[-1,1]$ for any $G_{\ib,\jb}$ where $i_\alpha=0$ for any $\alpha \in \lb 1, \ldots, k\rb$.
  Otherwise apply Lemma~\ref{lem:Pij-bracketing-bound}
  %% Theorem~\ref{thm:GS-extension}, %%
  which shows us that the sum over the remaining terms
  \begin{mylongform}
    \begin{longform}
      in the sum over $\ib in I_k$
    \end{longform}
  \end{mylongform}
  in
  %% \eqref{eq:bracketing-card--1}
  \eqref{eq:bracket-number} is bounded by
  \begin{align}
    \label{eq:bracketing-card-00}
    % \MoveEqLeft
    \sum_{i_1=1}^A \cdots \sum_{i_k=1}^A c_d  a_{\ib}^{-d/2}\lp 1 +
    \frac{ 2d^2}{L_{\jb,4}} \max_{\alpha=1,\ldots, k} \frac{ \delta_{i_\alpha +1}}{\delta_{i_\alpha}}
    +
    \sum_{\alpha=k+1}^d \frac{8 L_{k,1} \rho_{\jb,\alpha} }{u}
    \rp^{d/2}. % \\
  \end{align}
  % To apply
  % Theorem~\ref{thm:GS-extension-v2_directional-version}, we first apply
  % Lemma~\ref{lem:Rij-Lipschitz-diameter-bound} to $R_{\ib,\jb}$.
  Since $L_{k,1} \ge 1$ %
  %% $L_{k,2} \ge 1$, %%!
  and %
  %% $$ u \le \rho_{\jb,\alpha} / L_{k,2}$$ %
  $ u \le \rho_{\jb,\alpha} $ %
  by
  \eqref{eq:defn:u} for all $k, \ib, \jb$ and $\alpha = k+1,\ldots, d$,
  we have
  $\sum_{\alpha = k+1}^d \frac{ 8 \rho_{\jb,\alpha} L_{k,1}}{u} = 4 L_{k,1}
  \sum_{\alpha=k+1}^d \frac{ 2 \rho_{\jb,\alpha}}{u} \le 4 L_{k,1}
  \prod_{\alpha=k+1}^d \frac{2 \rho_{\jb,\alpha}}{u}$ (using the fact that
  for $a,b\ge 2$, $ab \ge a+b$).  We also bound $\max_{\alpha=1,\ldots,k }
  2\delta_{i_\alpha+1}  / \delta_{i_\alpha} \le \prod_{\alpha=1}^k 2
  \delta_{i_\alpha+1} / \delta_{i_\alpha}$
  % $\sum_{\alpha=1}^k 2 (\delta_{i_\alpha+1} - \delta_{i_\alpha}) / \delta_{i_\alpha}
  % \le \prod_{\alpha=1}^k 2 \delta_{i_\alpha+1} / \delta_{i_\alpha} $
  since $2 \delta_{i_\alpha+1}/\delta_{i_\alpha} > 2$.  Thus
  \eqref{eq:bracketing-card-00} is bounded above by
  \begin{align}
    \label{eq:bracketing-card-2}
    c_d %%  (d!)^{d/2}
    d^2L_{\jb,4}^{-1}
    \lp 1 + 2^{d-k+2} L_{k,1} \prod_{\alpha=k+1}^d \frac{\rho_{\jb,\alpha}}{u}  \rp^{d/2}
    \sum_{i_1=1}^A \cdots \sum_{i_k=1}^A a_{\ib}^{-d/2} \prod_{\alpha=1}^k \lp
    \frac{2 \delta_{i_\alpha+1}}{\delta_{i_\alpha}} \rp^{d/2},
  \end{align}
  %% Now \eqref{eq:bracketing-card-2} is
  which is
  \begin{equation}
    \label{eq:bracketing-card-}
    c_d %%  (d!)^{d/2}
    d^2L_{\jb,4}^{-1}
    \lp 1 + 2^{d-k+2} L_{k,1} \prod_{\alpha=k+1}^d \frac{\rho_{\jb\alpha}}{u} \rp^{d/2}
    \sum_{i_1=1}^A \cdots \sum_{i_k=1}^A \prod_{\beta=1}^k \lp \frac{2
      \delta_{i_\beta+1}}{\delta_{i_\beta}  a_{i_\beta} } \rp^{d/2}.
  \end{equation}
  Note that when $k=d$ we take the product over an empty set to be $1$.
  \begin{mylongform}
    \begin{longform}
      In fact, $ \lp 1 + 2^{d-k+2} L_{k,1} \prod_{\alpha=k+1}^d
      \frac{\rho_{\jb\alpha}}{u} \rp^{d/2}$ could be taken as $1$ when $k=d$,
      although we do not deal with this notational issue.
    \end{longform}
  \end{mylongform}
  For $i=1,\ldots, A$, let
  \begin{equation}
    \label{eq:23}
    \zeta_i \equiv \zeta_{i,k} := \sqrt{ \epsilon^{1/k} \delta_{i+1} / (\delta_i
      a_i) },
  \end{equation}
  so that $ \sum_{i_1=1}^A \cdots \sum_{i_k=1}^A
  \prod_{\beta=1}^k \lp \frac{2 \delta_{i_\beta+1}}{\delta_{i_\beta}
    a_{i_\beta} } \rp^{d/2}$ equals
  \begin{align*}
    \sum_{i_1=1}^A \cdots \sum_{i_k=1}^A 2^{kd/2} \epsilon^{- d/2}
    \prod_{\beta = 1}^k \zeta_{i_\beta}^d
    & =  2^{kd/2}\epsilon^{-  d/2} \sum_{i_1=1}^A \zeta_{i_1}^d \sum_{i_2=1}^A \zeta_{i_2}^d
      \cdots \sum_{i_k=1}^A \zeta_{i_k}^d \\
    &  = \epsilon^{-d/2} 2^{kd/2} B_u^k
  \end{align*}
  where, for $0 < \epsilon \le 1$
  \begin{equation}
    \label{eq:4}
    B_u := \sum_{i=1}^A \zeta_{i}^d
    %% \le    2 u^{ d/ (2(p+1)^2)},
    \le    2 u^{ d/ (2(p+1)(p+2))},
  \end{equation}
  by Lemma~\ref{lem:zetasum}.

  Next, we will relate the term $ \lp 1 + 2^{d-k+2} L_{k,1} \prod_{\alpha=k+1}^d \frac{\rho_{\jb\alpha}}{u} \rp^{d/2}$ to $\vol_{d-k}(G_{\jb})$. Recall that $A_{\jb}$ is the ellipsoid defined in \eqref{eq:FritzJohn} which has diameter in the $e_\alpha$ direction given by $\gamma_{\jb,\alpha}$.  By \eqref{eq:FritzJohn}, $\rho_{\jb,\alpha} \le d \gamma_{\jb,\alpha}.$
  % By
  % Proposition~\ref{prop:width-upperbound} and \eqref{eq:defn:u}, and by
  % \eqref{eq:FritzJohn},
  % \begin{equation*}
  %   w_{\ib,\jb,\alpha} \le w(G_{\jb}, e_\alpha) + M_{k,D} u
  %   \le 2 w(G_{\jb},e_\alpha)
  %   \le   2 d \rho_{\jb, \alpha}.
  % \end{equation*}
  % $\prod_{\alpha = k+1}^d w_{\ib,\jb,\alpha} \le (2d)^{d-k}
  % \prod_{\alpha=k+1}^d \gamma_{\jb,\alpha}$, and
  The volume of $A_{\jb}$ is $\vol_{d-k}(A_{\jb}) = \lp \prod_{\alpha=k+1}^d
  \gamma_{\jb,\alpha}/2 \rp \pi^{(d-k)/2}/\Gamma((d-k)/2+1)$.
  Thus,
  letting $C_d :=  \frac{(2d)^{d-k} \Gamma((d-k)/2+1)}{\pi^{(d-k)/2}}$, we have
  \begin{equation*}
    \prod_{\alpha = k+1}^d \rho_{\jb,\alpha}
    \le C_d \vol_{d-k}(A_{\jb})
    \le  C_d \vol_{d-k} ( G_{\jb}).
  \end{equation*}
  Thus we
  have shown that
  \eqref{eq:bracketing-card-} is bounded above by
  % \begin{equation}
  %   \label{eq:bracketing-card-sum}
  %   c_d (d!)^{d/2} 2^{kd/2} \lp 1 + \lp \frac 8 u \rp^{d-k} C_d  \vol_{d-k} (G_{\jb})
  %   L_{k,1}^{d-k}
  %   \rp^{d/2}
  %   B_u^k
  %   \cdot
  %   \epsilon^{- d/2}.
  % \end{equation}
  \begin{equation}
    \label{eq:bracketing-card-sum}
    c_d %%  (d!)^{d/2}
    d^2L_{\jb,4}^{-1}
    2^{kd/2}
    \lp 1 + 2^{d-k+2} L_{k,1} u^{-(d-k)} C_d \vol_{d-k}(G_{\jb})  \rp^{d/2}
    B_u^k
    \cdot
    \epsilon^{- d/2}.
  \end{equation}
  Therefore,
  \begin{mylongform}
    \begin{longform}
      using \eqref{eq:4},
    \end{longform}
  \end{mylongform}
  letting $\tilde{c}_{d,k} := c_d d^2
  2^k 2^{kd/2}$, we have shown that
  \begin{equation}
    \label{eq:16}
    \begin{split}
      \MoveEqLeft \sum_{\ib \in I_k} \log \Nb[a_{\ib} \vol_d (G_{\ib,\jb})^{1/p} ,
      \cC[D,1]|_{G_{\ib,\jb}}, L_p ] \\
      %% & \le \tilde{c}_{d,k} u^{kd / 2(p+1)^2}
      & \le L_{\jb,4}^{-1} \tilde{c}_{d,k} u^{kd / 2(p+1)(p+2)}
      %% \lp 1 + \lp \frac 8 u \rp^{d-k} C_d  \vol_{d-k} (G_{\jb})      L_{k,1}^{d-k}      \rp^{d/2}
      \lp 1 + 2^{d-k+2} L_{k,1} u^{-(d-k)} C_d \vol_{d-k}(G_{\jb})  \rp^{d/2}
      \epsilon^{-d/2}.
    \end{split}
  \end{equation}
  \begin{mynotes}
    \begin{notes}
      Can we tie together $u^{-(d-k)}$ and $\vol_{d-k}(G_{\jb})$?
    \end{notes}
  \end{mynotes}
  \begin{myold}
    \begin{old}
      Then, gathering the constants together into $\tilde c_d$,
      %% \eqref{eq:bracketing-card-sum}
      %% is bounded above by
      we have shown
      \begin{equation}
        \label{eq:14}
        \begin{split}
          \MoveEqLeft    \sum_{\ib \in I_k} \log \Nb[a_{\ib} \vol (G_{\ib,\jb})^{1/p} ,
          \cC[D,1]|_{G_{\ib,\jb}}, L_p ] \\
          & \le
          \epsilon ^{- d/2} \tilde c_d \lp  \frac{ \vol_{d-k}(G_{\jb})
            %% \prod_{\alpha=k+1}^d L_{k,1}
            L_{k,1}^{d-k}
          }{ u^{d-k}}
          \rp^{d/2}   % B_u^{k}.
          u^{kd/ (2(p+1)^2)}.
        \end{split}
      \end{equation}
    \end{old}
  \end{myold}
  Display \eqref{eq:16} holds for $k \in \lb 1,\ldots, d \rb$.  When $k=0$,
  recalling $a_A = \epsilon/u$, we have
  \begin{equation}
    \label{eq:17}
    \begin{split}
      %% \MoveEqLeft
      \log \Nb[ a_A \vol_d(G_{A,1})^{1/p},
      \cC[D,1]|_{G_{A,1}}, L_p]
      & \le
      c_d \lp u + 2d \rp^{d/2} \epsilon^{-d/2}
    \end{split}
  \end{equation}
  by Theorem~\ref{thm:GS-extension}
  %% since $\vol_d(G_{A,1}) \le \vol_d(D)$
  since $\cC[ D, 1]|_{G_{A,1}} \subset \cC[ G_{A,1}, 1, \frac{2}{u} \ve{1}]$ where
  $\ve{1} \in \RR^d$
  is the vector of all $1$'s.
  \begin{mynotes}
    \begin{notes}
      ( Do we want to use that $L_{0,1} = 1$ and $\vol_d(G_{A,1}) \le
      \vol_d(D) \le 1$ to align the formula for $k=0$ with the formula for $k
      \ge 1$?)
    \end{notes}
  \end{mynotes}
  Then, combining \eqref{eq:17} and \eqref{eq:16}, the cardinality of the collection
  of brackets covering the entire domain $D$ is given by summing over $\jb
  \in J_k$ and $k \in \lb 0,\ldots,d \rb$.

  \medskip

  We have computed the cardinality of the brackets.  Now we bound their size.
  Let $I^0_k$ be the subset of $ \ib \in I_k$ such that
  some $i_\alpha$ is $0$, and let $I^+_k := I_k \setminus I^0_k$.
  We have
  % \begin{equation}
  %   \label{eq:ap}
  %   a^p \le
  %   a_A^p \vol_d(D) +
  %   \sum_{k=1}^d (2L_{k,1})^{d-k} \sum_{\jb \in J_k} \vol_{d-k} (G_{\jb})
  %   \sum_{\ib \in I_k} a_{\ib}^p \prod_{\alpha=1}^k \frac{\delta_{i_\alpha + 1} -
  %   \delta_{i_\alpha}}{\la \tilde f_\alpha, v_{j_\alpha} \ra}
  % \end{equation}
  \begin{equation}
    \label{eq:ap}
    \begin{split}
      a^p  & \le
      a_A^p \vol_d(D)
      + \sum_{k=0}^d \sum_{\jb\in J_k, \ib \in I^0_k} 2^p \vol_{d}(G_{\ib,\jb}) \\
      & \quad +
      \sum_{k=1}^d (2L_{k,1})^{d-k} \sum_{\jb \in J_k} \vol_{d-k} (G_{\jb})
      \sum_{\ib \in I_k^+} a_{\ib}^p \prod_{\alpha=1}^k \frac{\delta_{i_\alpha + 1} -
        \delta_{i_\alpha}}{\la \tilde f_\alpha, v_{j_\alpha} \ra}
    \end{split}
  \end{equation}
  by Proposition~\ref{prop:width-upperbound} with $\tilde{f}_\alpha \equiv
  \tilde{f}_{\jb,\alpha}$ defined there.
  % (Note that by choice of $a_{\ib} = 2$ when any of the $i_\alpha = 0$, the above inequality includes such terms.) %% True. but it's the next equality that's the issue.
  \begin{mynotes}
    \begin{notes}
      For the cases where at least one $i_\alpha=0$.  Note $\delta_1 = \epsilon^p$.  Thus, the entire ``inner boundary'' of $D$ (a ring of width $\delta_1$) is covered by one $[-1,1]$ bracket of $L_p^p$ size $\epsilon^p \Vol_{d-1}(\partial D)$.  Different ways to present this idea...

      NOTE: I still have a hunch/belief that the $\Vol_{d-1} \bd D$ term can be removed.  It is unfortunate in that it is a constant that grows with $d$ for the hyperrectangle and so I guess it does not need to be here. This may take an actual sort of induction proof as in Guntu and Sen.
    \end{notes}
  \end{mynotes}
  Recalling $\delta_1 = \epsilon^p$, note that
  \begin{equation}
    \label{eq:74}
    \sum_{k=0}^d \sum_{\jb\in J_k, \ib \in I^0_k} 2^p \vol_{d}(G_{\ib,\jb})
    \le 2^p \epsilon^p \vol_{d-1}(\bd D).
  \end{equation}
  \begin{mylongform}
    \begin{longform}
      (This is because $G_{\ib,\jb}$ are disjoint sets, and the sets in question are all within $\delta_1$ distance from $\bd D$.)
    \end{longform}
  \end{mylongform}
  Fixing $k
  \in \lb 1, \ldots d \rb$, we have
  \begin{align*}
    %% \MoveEqLeft
    \sum_{\jb \in J_k} \vol_{d-k} (G_{\jb})
    \sum_{\ib \in I_k^+} a_{\ib}^p \prod_{\alpha=1}^k \frac{ \delta_{i_\alpha + 1} -
    \delta_{i_\alpha}}{\la \tilde f_\alpha, v_{j_\alpha} \ra}
  %% \\
    & \le   \sum_{\jb \in J_k} \vol_{d-k} (G_{\jb}) L_{\jb,3}^k
      \sum_{i_1=1}^A \cdots \sum_{i_k=1}^A \prod_{\alpha=1}^k a_{i_{\alpha}}^p
      \delta_{i_\alpha + 1}  \\
    & \le   \sum_{\jb \in J_k} \vol_{d-k} (G_{\jb}) L_{\jb,3}^k
      \sum_{i_1=1}^A a_{i_1}^p \delta_{i_1+1} \cdots \sum_{i_k=1}^A  a_{i_{k}}^p
      \delta_{i_k + 1} .
  \end{align*}
  %% where $L_{\jb,3}$ is defined in \eqref{eq:12}.
  where $L_{\jb,3} := \max_{\alpha \in \lb 1,\ldots,k\rb} 1 / \la \tilde
  f_{\jb,\alpha}, v_{j_\alpha} \ra$.
  We have
  \begin{equation}
    \label{eq:defn:Au}
    \sum_{\alpha = 1}^A a_\alpha^p \delta_{\alpha+1}
    =    \epsilon^{p/k} \sum_{\alpha=1}^A \frac{ \epsilon^{1/k} \delta_{\alpha+1}}{
      \delta_\alpha a_\alpha}
    =    \epsilon^{p/k}  \sum_{\alpha=1}^A \zeta_\alpha^2
    =: \epsilon^{p/k} A_u,
  \end{equation}
  where $A_u \le  2 u^{1 / (p+1)^2}$ by
  Lemma~\ref{lem:zetasum}, below.
  \begin{mylongform}
    \begin{longform}
      To check the equality
      \begin{equation}
        \label{eq:63-tmp}
        \sum_{\alpha = 1}^A a_\alpha^p \delta_{\alpha+1}
        =    \epsilon^{p/k} \sum_{\alpha=1}^A \frac{ \epsilon^{1/k} \delta_{\alpha+1}}{
          \delta_\alpha a_\alpha}
      \end{equation}
      we do the following algebra.
      Note
      \begin{equation}
        \label{eq:abc-algebra}
        \frac{a_i^p}{\epsilon^{p/k}} =
        \exp \lb - p^2 \frac{(p+1)^{i-2}}{(p+2)^{i-1}} \log \epsilon \rb.
      \end{equation}
      And on the other hand
      \begin{align}
        \inv{\delta_i a_i}
        & = \exp \lb - p \frac{(p+1)^{i-1}}{(p+2)^{i-1}} \log \epsilon \rb
          \epsilon^{-1/k}
          \exp \lb p \frac{ (p+1)^{i-2}}{(p+2)^{i-1}} \log \epsilon \rb \\
        & = \epsilon^{-1/k}
          \exp \lb -p^2 \log \epsilon \frac{(p+1)^{i-2}}{ (p+2)^{i-1}} \rb.
          \label{eq:delta-a-algebra}
      \end{align}
      Together, the equality of \eqref{eq:abc-algebra} and \eqref{eq:delta-a-algebra} shows
      \eqref{eq:63-tmp}.

      It is also helpful (for thinking through the above computations sometimes) to remove the epsilon's.  Define $b_i := a_i / \epsilon^{1/k}$.  Then
      \eqref{eq:abc-algebra} equals $b_i^p$
      and 
      \eqref{eq:delta-a-algebra}
      shows that
      \begin{equation*}
        \inv{ \delta_i b_i} =           \exp \lb -p^2 \log \epsilon
        \frac{(p+1)^{i-2}}{ (p+2)^{i-1}} \rb,
      \end{equation*}
      so we have 
      \begin{equation}
        \label{eq:abc-b-algebra}
        b_\alpha^p = \inv{\delta_\alpha b_\alpha}.
      \end{equation}

    \end{longform}
  \end{mylongform}
  Thus
  % \eqref{eq:ap}  is bounded above by
  \begin{equation*}
    \sum_{\jb \in J_k} \vol_{d-k} (G_{\jb}) L_{\jb,3}^k
    \lp \sum_{i_1=0}^A a_{i_1}^p \delta_{i_1+1} \rp
    \cdots \lp \sum_{i_k=0}^A  a_{i_{k}}^p
    \delta_{i_k + 1} \rp
    \le
    \epsilon^{p} A_u^k \sum_{\jb \in J_k}  \vol_{d-k} ( G_{\jb} ) L_{\jb,3}^k,
  \end{equation*}
  so by \eqref{eq:ap}
  $a \le S_{D,s}^{2/d} \epsilon $
  %% where $S_k^D = \sum_{j \in J_k} \vol_{d-k} (G_{\jb}).$
  where
  \begin{equation}
    \label{eq:43}
    S_{D,s}^{2/d}
    := \Big(
    \frac{\vol_d(D) }{u^p}
    +  2^p \vol_{d-1} (\bd D)
    + \sum_{k=1}^d (2 L_{k,1}) ^{d-k} A_u^k \sum_{\jb \in J_k}
    \vol_{d-k} (G_{\jb}) L_{\jb,3}^k
    \Big)^{1/p}.
  \end{equation}

  We have thus bounded the bracketing entropy when $D \subset [0,1]^d$ and
  $B=1$.  Thus, by the scaling at the beginning of the proof, for any convex
  polytope $D \subset \prod_{i=1}^d [a_i,b_i]$ and any $B > 0$, we have shown for $0 < \epsilon \le  B \lp \prod_{i=1}^d b_i-a_i \rp^{1/p}$ that
  \begin{equation*}
    \log \Nb[\epsilon S_{\tilde{D},s}^{2/d}, \cC[D,B], L_p]
    \le S_{\tilde{D},c}
    \epsilon^{-d/2}
    \Bigg( B  \lp \prod_{i=1}^d (b_i-a_i) \rp^{1/p} \Bigg)^{d/2}
  \end{equation*}
  where $ S_{\tilde{D},c} $ equals
  \begin{equation}
    \label{eq:51}
    c_d ( u + 2d )^{d/2}
    %% + \sum_{k=1}^d \sum_{\jb \in J_k} \tilde{c}_{d,k} u^{kd/2(p+1)^2}
    + \sum_{k=1}^d \sum_{\jb \in J_k} L_{\jb,4}^{-1} \tilde{c}_{d,k} u^{kd/2(p+1)(p+2)}
    %% \lp 1 + \lp \frac{8}{u} \rp^{d-k} C_d \vol_{d-k}(G_{\jb}) L_{k,1}^{d-k} \rp^{d/2}.
    \lp 1 + 2^{d-k+2} L_{k,1} u^{-(d-k)} C_d \vol_{d-k}(G_{\jb})  \rp^{d/2}.
  \end{equation}
  Letting $\delta := S_{\tilde D,s}^{2/d} \epsilon$, we have shown that
  \begin{equation}
    \label{eq:22}
    \log \Nb[\delta, \cC[D,B], L_p]
    \le S_{\tilde D,c} S_{\tilde{D},s}
    \delta^{-d/2}
    \Bigg( B  \lp \prod_{i=1}^d (b_i-a_i) \rp^{1/p} \Bigg)^{d/2}
  \end{equation}
  for $0 < \delta \le S_{\tilde D,s}^{2/d} B \prod^d_{i=1} (b_i-a_i)^{1/p}$. 
  \begin{myold}
    \begin{old}
      We are thus done, except that to show \eqref{eq:4} we applied
      Lemma~\ref{lem:zetasum} which required $\epsilon \le 1$.  The result
      \eqref{eq:20} can be extended to allow any $\epsilon > 0$ just as in
      the proof of Theorem~\ref{thm:bronshtein-1976-convex-sets-cover}, at
      the cost of increasing the constant in \eqref{eq:20}.
    \end{old}
  \end{myold}

  Finally, we can extend from requiring
  $\delta \le S_{D,s} B \prod^d_{i=1} (b_i-a_i)^{1/p}$ to allowing any
  $\delta > 0$, just as in the proof of
  %% Theorem~\ref{thm:bronshtein-1976-convex-sets-cover},
  Theorem~2.1, 
  in \cite{DBLP:journals/corr/Doss15}
  (at the slight cost of
  increasing the constant on the right hand side of \eqref{eq:22}).
\end{proof}

Note that the constants $S_{\tilde{D},s}$ and $S_{\tilde{D}, c}$ should be
calculated using the rescaling of $D$ that lies in $[0,1]^d$, $\tilde{D}$.
The following lemma was used above.
\begin{mylongform}
  \begin{longform}

    The following summarizes the results together with the constants
    involved.

    We have shown that
    \begin{equation}
      \log \Nb[\delta, \cC[D,B], L_p]
      \le S_{\tilde D,c} S_{\tilde{D},s}
      \lp \frac{ B  \lp \prod_{i=1}^d (b_i-a_i) \rp^{1/p} }{    \delta }\rp^{d/2}
    \end{equation}
    for $0 < \delta \le S_{\tilde D,s}^{2/d} B \prod^d_{i=1} (b_i-a_i)^{1/p}$
    where
    \begin{equation*}
      S_{\tilde{D},c} :=
      xxx
    \end{equation*}
    and
    \begin{equation*}
      S_{D,s} :=
      xxx
    \end{equation*}

    For $k \in \lb 0, \ldots, d-1\rb$, let $d_{i,j,k} := d(E_i, F_j)$ where $E_i,$ $i=1,\ldots, N_k$, is a $k$-face (a face of dimension $k$) and $F_j$, $j=1,\ldots, N$, is a facet.  Then let
    \begin{equation}
      r_D :=
      \min \lb d_{i,j,k} : d_{i,j,k} \ne 0, k \in \lb 0,\ldots, d-1\rb \rb > 0.
    \end{equation}
    Let
    \begin{equation}
      \label{eq:defn:u-old}
      u \equiv u_D :=
      %% \exp \lp -2(p+1)^2  (p+2) \log 2 \rp
      r_D/2 \wedge
      2^{  -2(p+1)^2  (p+2) }
      \wedge
      %% \min_{k \in \lb 1,\ldots, d-1 \rb} \min_{ \jb \in J_k} w_r(G_{\jb})
      \min_{k \in \lb 1,\ldots, d-1 \rb} \min_{ \jb \in J_k} \rho_{\jb, \alpha}
    \end{equation}
    where $\rho_{\jb,\alpha} := w(G_{\jb},e_\alpha)$ (recall the definition
    of the width function $w$ in Subsection~\ref{subsec:defns-assms}).  Let
    $L_{d,1} = 1$ and For $k \in \lb 1, \ldots, d -1 \rb$, let
    \begin{equation}
      %% \label{eq:defn:Lk1}
      L_{k,1} :=
      1 \vee
      \max_{\jb \in J_k}
      \max_{\substack{\|e\| = 1\\ e \in E_{\jb}}}
      \max_{\substack{j \in \lb 1,\ldots, N\rb \setminus \jb ; \,  \la e, v_j \ra < 0 \\
          \la v_i,v_j \ra \ge 0 , \text{ some } i \in \jb }}
      \la -e, v_j \ra^{-1},
    \end{equation}
    and $L_{\jb,3} := \max_{\alpha \in \lb 1,\ldots,k\rb} 1 / \la \tilde
    f_{\jb,\alpha}, v_{j_\alpha} \ra$
    where
    $\tilde f_{\jb,\gamma}$ are defined in
    Proposition~\ref{prop:width-upperbound}.

  \end{longform}
\end{mylongform}

\begin{lemma}
  \label{lem:zetasum}
  For any $\gamma \ge 1$,
  %% $\epsilon \le \epsilon_0 \le 1$,
  $0 < \epsilon \le 1$,
  with $\zeta_i$
  %% $\delta_i$, and $a_i$
  given in \eqref{eq:23},
  %% as in the previous proof,
  %% $\zeta_i := \sqrt{ \epsilon^{1/k} \delta_{i+1} / (\delta_i a_i)}$,
  and  with $A$ and $u$ given by \eqref{eq:5} and \eqref{eq:15}, we have
  \begin{equation*}
    \sum_{\alpha = 1}^A \zeta_\alpha^\gamma
    \le     2 u^{\gamma / (2(p+1)^2)}.
  \end{equation*}
\end{lemma}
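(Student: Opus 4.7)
The plan is to compute $\zeta_\alpha^\gamma$ in closed form as a power of $\delta_{\alpha+1}$, exploit the equality $u = \delta_{A+1}$ from \eqref{eq:5} to express the sum as a series in $u$, and then control it by a geometric series via Bernoulli's inequality.

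First I would simplify $\zeta_i^2 = \epsilon^{1/k}\delta_{i+1}/(\delta_i a_i)$ using the explicit formulas for $\delta_i$ and $a_i$ stated in the proof of Theorem~\ref{thm:main-thm}. The $\epsilon^{1/k}$ factor cancels with the one hidden in $a_i$; the three remaining exponents of $\epsilon$ (from $\delta_{i+1}$, $\delta_i$, and $a_i$) combine via the algebraic identity $(p+1)^2 - (p+1)(p+2) + (p+2) = 1$ to yield
$$\zeta_i^2 = \epsilon^{\,p(p+1)^{i-2}/(p+2)^i}.$$
Setting $r := (p+1)/(p+2)$ and $c := \gamma/(2(p+1)^2)$, and using the identity $(p+1)^{i-2}/(p+2)^i = r^i/(p+1)^2$, this rewrites as $\zeta_i^\gamma = \epsilon^{\gamma p r^i/(2(p+1)^2)} = \delta_{i+1}^c$.

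Next, from $u = \delta_{A+1}$ in \eqref{eq:5} together with the geometric recurrence $\delta_{\alpha+1} = \delta_{A+1}^{(1/r)^{A-\alpha}}$ valid for $1 \le \alpha \le A$, the substitution $m := A - \alpha$ gives
$$\sum_{\alpha=1}^A \zeta_\alpha^\gamma = \sum_{\alpha=1}^A \delta_{\alpha+1}^c = \sum_{m=0}^{A-1} u^{c(1/r)^m}.$$
The $m=0$ term is exactly $u^c = u^{\gamma/(2(p+1)^2)}$. For $m \ge 1$ Bernoulli's inequality gives $(1/r)^m = (1+1/(p+1))^m \ge 1 + m/(p+1)$, hence, since $0<u<1$,
$$u^{c(1/r)^m} \le u^c \cdot u^{cm/(p+1)},$$
and summing the geometric tail in $u^{c/(p+1)}$ yields $\sum_{m=1}^{A-1} u^{c(1/r)^m} \le u^c \cdot u^{c/(p+1)}/(1 - u^{c/(p+1)})$.

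Finally I would invoke the standing bound $u \le 2^{-2(p+1)^2(p+2)}$ baked into \eqref{eq:defn:u}. This gives $u^{c/(p+1)} \le 2^{-\gamma(p+2)/(p+1)}$, and for $\gamma \ge 1$ the exponent is strictly greater than $1$, so $u^{c/(p+1)} < 1/2$ and the factor $u^{c/(p+1)}/(1-u^{c/(p+1)})$ is bounded by $1$. Putting the pieces together gives a total of at most $u^c + u^c = 2u^{\gamma/(2(p+1)^2)}$, as claimed. The only delicate step is the arithmetic telescoping that collapses $\zeta_i^2$ into a clean power of $\epsilon$; after that the argument is routine Bernoulli plus geometric-series bookkeeping.
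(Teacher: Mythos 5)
Your proof is correct and takes essentially the same route as the paper's: both identify the top term $\zeta_A^\gamma = u^{\gamma/(2(p+1)^2)}$ and dominate the remaining terms by a geometric series whose ratio is forced below $1/2$ by the cap $u \le 2^{-2(p+1)^2(p+2)}$ in \eqref{eq:defn:u} — your Bernoulli step is just an explicit form of the paper's bound $\zeta_\alpha/\zeta_{\alpha-1} \ge R \ge 2$ followed by the same geometric-series bookkeeping. Note that, exactly as in the paper's own proof, your evaluation of the $m=0$ term as exactly $u^{\gamma/(2(p+1)^2)}$ reads \eqref{eq:5} as saying that $\delta_{A+1}=u$ continues the geometric formula for the $\delta_i$, i.e.\ $u=\epsilon^{p((p+1)/(p+2))^{A}}$, so you share (rather than introduce) that implicit convention.
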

\begin{proof}
  Straightforward algebra shows
  \begin{equation}
    \label{eq:zeta-formula}
    \zeta_\alpha^2
    = \exp \lb p \frac{(p+1)^{\alpha-2}}{(p+2)^{\alpha}}
    \log \epsilon \rb.
  \end{equation}
  \begin{mylongform}
    \begin{longform}
      Here is the algebra for \eqref{eq:zeta-formula}. Recall $ \delta_i := \exp\lb p \lp \frac{p+1}{p+2}\rp^{i-1} \log \epsilon \rb.$ We have (see \eqref{eq:delta-a-algebra})
      \begin{align*}
        \zeta_\alpha^2
        = \epsilon^{1/k} \delta_{\alpha+1}/ \delta_\alpha a_\alpha
        &  =  \delta_{\alpha+1}
          \exp \lb -p^2 \frac{(p+1)^{\alpha-2}}{(p+2)^{\alpha-1}}
          \log \epsilon \rb \\
        & =\exp \lb
          \frac{ (p+1)^{\alpha-2} }{ (p+2)^{\alpha-2} }
          \lp p \frac{(p+1)^2}{(p+2)^2} - \frac{p^2}{p+2} \rp
          \log \epsilon \rb \\
        & =  \exp \lb
          \frac{ (p+1)^{\alpha-2} }{ (p+2)^{\alpha-2} }
          p
          \lp \frac{(p+1)^2 - p(p+2)}{(p+2)^2}  \rp
          \log \epsilon \rb \\
        & =  \exp \lb
          \frac{ (p+1)^{\alpha-2} }{ (p+2)^{\alpha-2} }
          p
          \frac{1}{(p+2)^2}
          \log \epsilon \rb.
      \end{align*}
    \end{longform}
  \end{mylongform}
  We have, for $\alpha = 1, \ldots, A-1$,
  \begin{equation*}
    \frac{\zeta_\alpha}{\zeta_{\alpha+1}}
    = \exp \lb \frac{ p \log \epsilon}{2
      (p+1)^2 (p+2)} \lp\frac{p+1}{p+2}\rp^{\alpha}\rb,
  \end{equation*}
  which is bounded above by
  % \begin{align*}
  %   \frac{\zeta_\alpha}{\zeta_{\alpha+1}}
  %   & = \exp \lb \frac{-p \log \epsilon}{2 (p+1)^2 (p+2)}
  %   \lp\frac{p+1}{p+2}\rp^{\alpha-1}\rb \\
  %   & \ge \exp \lb  \frac{ -p \log \epsilon} {2 (p+1)^2 (p+2) }
  %   \lp\frac{p+1}{p+2}\rp^{A-1} \rb \\
  %   & \ge \exp \lb  \frac{ -  \log u} {2 (p+1)^2 (p+2) }\rb
  %   %   =: R_u    .
  %   =: R    .
  % \end{align*}
  \begin{align*}
    \exp \lb  \frac{ p \log \epsilon} {2 (p+1)^2 (p+2) }
    \lp\frac{p+1}{p+2}\rp^{A-1} \rb 
    \le \exp \lb  \frac{   \log u} {2 (p+1)^2 (p+2) }\rb
    =: R^{-1}    .
  \end{align*}
  \begin{mylongform}
    \begin{longform}
      The algebra for the previous equality and inequality are as follows.  We have
      \begin{align*}
        \frac{\zeta_\alpha^2}{\zeta_{\alpha+1}^2}
        & = \exp \lb \lp \frac{(p+1)^{\alpha-2}}{(p+2)^\alpha} -
          \frac{(p+1)^{\alpha-1}}{(p+2)^{\alpha+1}} \rp p \log \epsilon \rb \\
        & = \exp \lb \lp 1 -  \frac{p+1}{p+2} \rp
          \frac{(p+1)^{\alpha-2}}{(p+2)^{\alpha}}
          p \log \epsilon \rb \\
        & = \exp \lb  p \frac{ (p+1)^{\alpha-2}}{ (p+2)^{\alpha+1}} \log \epsilon \rb.
      \end{align*}
    \end{longform}
  \end{mylongform}
  Then, $\zeta^\gamma_\alpha (R^\gamma - 1) \le \zeta_\alpha^\gamma R^\gamma - (R \zeta_{\alpha-1})^\gamma $ so $\zeta_\alpha^\gamma \le \lp R^\gamma / (R^\gamma - 1) \rp ( \zeta_\alpha^\gamma - \zeta_{\alpha-1}^\gamma )$ and thus
  \begin{equation}
    \label{eq:69}
    \sum_{\alpha = 1}^A \zeta_\alpha^\gamma
    \le \zeta_1^\gamma + \frac{R^\gamma}{R^\gamma-1} \sum_{\alpha=2}^A
    (\zeta_\alpha^\gamma - \zeta_{\alpha-1}^\gamma )
    = \zeta_1^\gamma + \frac{R^\gamma}{R^\gamma-1} (\zeta_A^\gamma -
    \zeta_1^\gamma)
    \le \frac{R^\gamma}{R^\gamma-1} \zeta_A^\gamma
  \end{equation}
  % Then $\zeta_{\alpha-1}^\gamma (R^\gamma - 1) \le \zeta_{\alpha-1}^\gamma R^\gamma - \zeta_\alpha^{\gamma} R^\gamma$ so $\zeta_{\alpha-1}^\gamma \le (\zeta_{\alpha-1}^\gamma - \zeta_\alpha^\gamma) R^\gamma / (R^\gamma - 1)$ and thus
  % \begin{equation*}
  %   \sum_{\alpha = 1}^A \zeta_\alpha^\gamma
  %   \le \zeta_1^\gamma + \frac{R^\gamma}{R^\gamma-1} \sum_{\alpha=2}^A
  %   (\zeta_{\alpha-1}^\gamma - \zeta_{\alpha}^\gamma )
  %   = \zeta_1^\gamma + \frac{R^\gamma}{R^\gamma-1} (\zeta_1^\gamma -
  %   \zeta_A^\gamma)
  %   \le \frac{ 2 R^\gamma-1 }{R^\gamma-1} \zeta_1^\gamma.
  % \end{equation*}
  and $\zeta_A^\gamma \le u^{\gamma / (2(p+1)(p+2))}$.  Since $u \le \exp\lp -2 (p+1)^2 (p+2) \log 2\rp$ by its definition \eqref{eq:defn:u}, $R \ge 2$ so
  $R^{\gamma } / (R^\gamma - 1) \le 2$ for any $\gamma \ge 1$.
  % $2 R^{\gamma } / (R^\gamma - 1) \le 1$ for any $\gamma \ge 1$.  Since $\zeta_\alpha \le 1$ for all $\alpha$ (by \eqref{eq:zeta-formula}), %% ie including $\alpha = 1$
  % we conclude that $\sum_{\alpha=1}^A \gamma_\alpha^\gamma \le 1$ for any $\gamma \ge 1$.
\end{proof}

For any convex $D$ and convex subset $\tilde{D} \subset D$, note that
$\cC[D,1]|_{\tilde{D}} \subset \cC[\tilde{D}, 1]$.
Thus by covering any convex polytope $D$ by simple polytopes $D_i \subset D$,
we can bound $\Nb[\epsilon, \cC[D,B], L_p]$ by applying
Theorem~\ref{thm:main-thm} repeatedly to $\cC[D_i,1]$ and using   \eqref{eq:bracketing-cover}.
A cover of $D$ can be attained by, for instance, subdividing $D$ into simple polytopes \citep{Lee:1997ww}, such as simplices.
The constant in the bound then
depends on the subdivision of $D$.
\begin{mynotes}
  \begin{notes}
    (Triangulation enforces more conditions than we need though; we do not
    need the nonempty intersection of $k$-faces of the simplices to be
    $k$-faces; we do not even need the interiors of the simplices to be
    disjoint.  (And we do not need simplices, any simple polytopes will do.)
  \end{notes}
\end{mynotes}
\begin{myold}
  \begin{old}
    By subdividing \citep{Lee:1997ww} any convex polytope $D$ into simple
    polytopes (e.g., triangulating into simplices, which are simple), we can
    extend our theorem to any polytope $D$.  The constant in the bound then
    depends on the subdivision of $D$.
  \end{old}
\end{myold}
\begin{corollary}
  \label{cor:general-convex-polytope}
  Fix $d\ge 1$ and $p \ge 1$.  Let $D \subseteq \prod_{i=1}^d [a_i,b_i]$ be
  any convex polytope. Then for
  %% some $\epsilon_0>0$ and for
  %% $0 < \epsilon \le   \epsilon_0 B \lp \prod_{i=1}^d b_i-a_i \rp^{1/p} $, %% what is upper bound
  $\epsilon > 0$,
  \begin{equation*}
    \log \Nb[\epsilon , \cC[D,B], L_p]
    \le
    C_{d,D}
    \epsilon^{-d/2}
    \Bigg(  B  \Bigg( \prod_{i=1}^d (b_i-a_i) \Bigg)^{1/p} \Bigg)^{d/2}.
  \end{equation*}
\end{corollary}
\begin{proof}
  By the same scaling argument as in the proof of Theorem~\ref{thm:main-thm}
  we may assume $[a_i,b_i] = [0,1]$ and $B=1$.  The $d=1$ case is given by
  \cite{MR2519658}. %%dryanov
  Any convex polytope $D$ can be triangulated into $d$-dimensional simplices
  (see e.g.\ \cite{Dey:1998cu}, \cite{Rothschild:1985jo}). We are done by
  applying Theorem~\ref{thm:main-thm} to each of those simplices, by
  \eqref{eq:bracketing-cover}.
\end{proof}

\section{Properties of $G_{\ib,\jb}$}
\label{sec:technical-details}

%%% \subsection{Inscribing  $G_{\ib,\jb}$ in a Hyperrectangle}

% Theorem~\ref{thm:GS-extension} shows that the bracketing entropy of
% $\cC[D,B,\Gb]$ depends on the diameters of the hyperrectangle $\prod_{i=1}^d
% [a_i,b_i]$ circumscribing $D$.
% This is part of why bounding entropies on hyperrectangular domains is
% more straightforward than on non-hyperrectangular domains.
In this section we
show how to embed the domains $G_{\ib,\jb}$, which partition $D$, into
hyperrectangles.  We used this in the proof of Theorem~\ref{thm:main-thm}
so we could apply Theorem~\ref{thm:GS-extension}.
Theorem~\ref{thm:GS-extension} says that the bracketing entropy of convex
functions on domain $D$ with Lipschitz constraints along directions $e_1,
\ldots, e_k$ depends on $w(D,e_i)$ (since that gives the maximum ``rise'' in
``rise over run'').  In our proof of Theorem~\ref{thm:main-thm} we
partitioned $D$ into sets related to parallelotopes.  Thus we will study these
parallelotopes.
\begin{mylongform}
  \begin{longform}
    Although it is not true that a (translate of a) vector of length
    $w(D,e_i)$ in the direction $e_i$ is contained in $D$, it is true that
    there are points $x,y \in D$ such that one can decompose $y - x$ as $\sum
    a_j + b_j $ where each $a_j = k_j e_i$ and each partial sum is still in
    $D$ (so that the Lipschitz constraints apply).
  \end{longform}
\end{mylongform}
We know the width of $G_{\ib,\jb}$ in the directions $v_{j_\alpha}$, which are
$\delta_{i_\alpha+1}-\delta_{i_\alpha}$, by definition.

\begin{mynotes}
  \begin{notes}
    The diameter of a set $D$ is $d(D) = \sup_{x,y \in D} \Vert x- y \Vert$.
    It is elementary that $w(D) \le d(D)$, since if $H_1$ and $H_2$ are
    parallel supporting hyperplanes of $D$ at the points $h_1$ and $h_2$,
    respectively, and orthogonal to $v$, then
    \begin{equation*}
      d(D)
      \ge \Vert h_1 - h_2 \Vert
      = \Vert h_1 - P_{H_1}(h_2) + P_{H_1}(h_2) - h_2 \Vert
      \ge \Vert P_{H_1}(h_2) - h_2 \Vert
      = w(D,v),
    \end{equation*}
    where $P_{H_1}(h_2)$ is the projection of $h_2$ onto $H_1$.

    On the other hand for any unit vector $e$,
    $w(D, e) \ge d(D,e) := \sup_{x \in D} |x + \SPAN \lb e\rb \cap D |$.
    Taking supremum over $e$ shows that $w(D) \ge d(D)$.

    Note that in general, one defines width $\tilde{w}(D)$ to be the {\em
      infimum} over unit vectors $e$!

    In which case we just have $\tilde{w}(D) \le d(D)$.

    False.
  \end{notes}
\end{mynotes}

\begin{myold}
  \begin{old}

    For an integer $\alpha$, let $\alpha!$ be the factorial of $\alpha$, $\alpha!
    := \alpha \cdot (\alpha-1) \cdots 2 \cdot 1$.  Recall from
    Subsection~\ref{subsec:defns-assms}) that $\diam(P)$ is the diameter of $P$.
    \begin{lemma}
      \label{lem:parallelotope-widths}
      Let $j $ be a positive integer.  Let $v_1, \ldots, v_j \in \RR^j$ be
      linearly independent. Let $d_i > 0$ for $i=1,\ldots, j$, and let $P$ be the
      parallelotope defined by having $w(P, v_i)=d_i$.  Then $P$ satisfies
      \begin{equation*}
        \diam(P)
        %% w(P)
        %% \le d(P)
        \le j! \max_{1 \le i \le j} d_i.
      \end{equation*}
    \end{lemma}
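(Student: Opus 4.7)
I would proceed by induction on the dimension $j$. The base case $j=1$ is immediate: a one-dimensional parallelotope is a segment of length $d_1$, and $w(P) = d_1 \le 1!\,\max_i d_i$.

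For the inductive step, I would single out one face-normal $v_j$ and analyze $P$ via a facet $F$ of $P$ perpendicular to $v_j$. The facet $F$ is itself a $(j-1)$-dimensional parallelotope in the affine hyperplane normal to $v_j$, and its face normals inside that hyperplane are (suitably normalized) projections of $v_1, \ldots, v_{j-1}$ into $v_j^{\perp}$. Using the edge-vector decomposition $P = x_0 + \sum_i [0,1]\,f_i$, where $f_i \perp \SPAN \lb v_\beta : \beta \ne i \rb$, the facet $F$ inherits the edges $f_1, \ldots, f_{j-1}$, so its widths in the projected-normal directions can be expressed directly in terms of $d_1, \ldots, d_{j-1}$. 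Applying the inductive hypothesis to $F$ then yields $w(F) \le (j-1)!\,\max_{i<j} d_i$.

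To conclude, I would relate $w(P)$ to $w(F)$ and the $j$-th direction. Any two points $x,y \in P$ can be decomposed as $x = x_F + t_x f_j$, $y = y_F + t_y f_j$ with $x_F, y_F \in F$ and $t_x, t_y \in [0,1]$; the triangle inequality then gives $w(P) \le w(F) + \|f_j\|$. The length $\|f_j\|$ is in turn controlled by $d_j$ together with a geometric factor reflecting how $v_j$ sits relative to $\SPAN \lb v_1,\ldots,v_{j-1}\rb$. Combining these pieces yields $w(P) \le j!\,\max_i d_i$, with the new factor of $j$ in passing from $(j-1)!$ to $j!$ absorbing the contribution of the $j$-th face.

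The main obstacle I expect is tracking how the widths of $F$ in its projected-normal directions and the length $\|f_j\|$ depend on the angular relationships among the $v_i$'s, and showing that these factors combine cleanly to produce exactly $j!$ rather than a larger constant. This requires carefully exploiting linear independence of $v_1, \ldots, v_j$ and the structure of the edge-vector decomposition at each inductive step, so that the geometric factor contributed by the new direction $v_j$ is exactly what is needed to multiply $(j-1)!$ up to $j!$.
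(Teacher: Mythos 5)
Your inductive skeleton matches the paper's, but the two quantitative claims on which your step rests are exactly the content you end up deferring, and as stated they fail. Write $P=\{x: 0\le \la x,v_i\ra \le d_i,\ i=1,\dots,j\}$ (unit normals $v_i$) and let $F$ be the facet normal to $v_j$, with edge decomposition $F=\sum_{\beta<j}[0,f_\beta]$, $f_\beta\perp v_\gamma$ for $\gamma\ne\beta$. The widths of $F$ along its own face normals inside $v_j^{\perp}$ are not $d_\beta$ but $d_\beta/\Vert \Pi_{v_j^{\perp}}v_\beta\Vert$, where $\Pi_{v_j^{\perp}}$ is orthogonal projection (the widths equal to $d_\beta$ are those taken along the original $v_\beta$, which do not lie in $v_j^{\perp}$, so the inductive hypothesis as you state it does not apply to them). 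Hence your induction yields $w(F)\le (j-1)!\max_\beta d_\beta/\Vert \Pi_{v_j^{\perp}}v_\beta\Vert$, not $(j-1)!\max_\beta d_\beta$. Similarly $\Vert f_j\Vert=d_j/\la f_j/\Vert f_j\Vert,\,v_j\ra$. Both denominators can be arbitrarily small when the $v_i$ are nearly dependent: already for $j=2$, with $v_1=(1,0)$, $v_2=(\cos\theta,\sin\theta)$, $d_1=d_2=1$, the edge of $P$ orthogonal to $v_1$ has length $1/\sin\theta$. Since your decomposition gives $w(P)\le w(F)+\Vert f_j\Vert$, these angular factors are forced into the final bound, and no bookkeeping can make them ``combine cleanly'' into the dimension-only constant $j!$. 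So the concluding step of your proposal is asserted rather than proved, and it is the whole point of the lemma.

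The paper's proof proceeds differently and, in particular, never bounds an edge length. It uses $w(P)=\operatorname{diam}(P)$, joins two boundary points realizing the width by a path that passes through at most $j$ facets, and applies the inductive bound to each facet with the widths still measured along the original directions $v_\beta$, $\beta\ne i$, along which the facet's width is exactly $d_\beta$; the factor upgrading $(j-1)!$ to $j!$ is the number of facets on the path, not a geometric factor attached to $f_j$. (If you rework your argument along these lines, note that this requires formulating the inductive statement so that it can be applied to a facet with direction vectors that do not lie in the facet's hyperplane; that point deserves explicit care.) As it stands, your facet-plus-transversal-edge route cannot deliver a constant depending only on $j$, so you should either switch to the path-through-facets argument or make the angular quantities explicit in the statement you prove.
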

    \begin{proof}
      The proof is by induction.  The case $j = 1$ is trivial.  Now assume the
      statement holds for $j -1$ and we want to show it for $j$.  For any $x, y
      \in \partial P$ we can find a path $x= x_0, x_1, \ldots, x_n = y$ from $x$
      to $y$ such that $x_i$ and $x_{i+1}$ are elements of (the boundary of) the
      same facet of $P$. $P$ has $2j$ facets; if $n > j$, then we can find a path
      through the complementary $2j-n$ facets, so that we may assume $n \le j$.
      By the induction hypothesis, $\Vert x_{i+1} - x_i \Vert \le (j-1)! \max_{1
        \le i \le j} d_i$, since any $(j-1)$-dimensional facet is a parallelotope
      lying in a hyperplane with normal vector $v_i$ and widths $d_1, \ldots,
      d_{i-1}, d_{i+1}, \ldots d_j$.  Thus
      \begin{equation*}
        \Vert x - y \Vert
        \le \sum_{i=1}^n \Vert x_i - x_{i-1} \Vert
        \le n (j-1)! \max_{1 \le i \le j} d_i \le  j! \max_{1 \le i \le j } d_i,
      \end{equation*}
      as desired.
    \end{proof}

    This gives a bound on the width of $G_{\ib,\jb}$ in the direction of each
    basis vector $e_\alpha$, $\alpha=1,\ldots, k$, from
    Proposition~\ref{prop:basis-lipschitz}.
    \begin{mynotes}
      \begin{notes}
        Could write bound of $w(G_{\ib,\jb}, e_\alpha) \le \alpha!
        \delta_{i_\alpha +1}$ in the below, because to apply lemma 4.1 widths
        need to be in increasing order and I dont want write out verification of
        that step?
      \end{notes}
    \end{mynotes}

    \begin{mynotes}
      \begin{notes}
        The induction hypothesis here is incorrectly applied.  The faces do
        not satisfy the induction hypothesis (with the same $d_i$'s, from
        Lemma~\ref{lem:parallelotope-widths}).
      \end{notes}
    \end{mynotes}
    \begin{proposition}
      \label{prop:basis-widths}
      Let Assumption~\ref{assm:simple-polytope} hold for a convex polytope $D$.
      % Fix $k \in \lb 0,\ldots, d \rb$,
      Fix $k \in \lb 1,\ldots, d \rb$,
      $\ib \in I_k, \jb \in J_k$, and let
      $G_{\ib,\jb}$ be as in \eqref{eq:defn:Gi}.  Let ${\ve e}_{\ib,\jb} \equiv
      {\ve e} := (e_1,\ldots, e_d)$, with $e_\alpha \in \RR^d$, be the
      orthornormal basis from Proposition~\ref{prop:basis-lipschitz}.  Then
      \begin{equation*}
        w(G_{\ib,\jb}, e_\alpha) \le \alpha! (\delta_{i_\alpha+1} -
        \delta_{i_{\alpha}}).
      \end{equation*}
      for $\alpha = 1,\ldots, k$.
      % , and, trivially,
      % \begin{equation*}
      %   w(G_{\ib,\jb}, e_\alpha) \le w(D)
      % \end{equation*}
      % for $\alpha = k+1,\ldots, d$.
    \end{proposition}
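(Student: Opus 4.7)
The plan is to reduce the bound on $w(G_{\ib,\jb}, e_\alpha)$ to a width-of-parallelotope estimate in a lower-dimensional subspace and then invoke Lemma~\ref{lem:parallelotope-widths}. Adopting the WLOG normalization from the proof of Proposition~\ref{prop:basis-lipschitz}, assume that $j_\beta = \beta$ and that the hyperplanes are reindexed so $\delta_{i_1} \le \delta_{i_2} \le \cdots \le \delta_{i_k}$. Set $V_\alpha := \SPAN \lb v_1, \ldots, v_\alpha \rb$, which by the Gram--Schmidt construction coincides with $\SPAN \lb e_1, \ldots, e_\alpha \rb$, and let $\pi_\alpha$ denote orthogonal projection onto $V_\alpha$. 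Since $e_\alpha \in V_\alpha$, we have $\la x - y, e_\alpha \ra = \la \pi_\alpha(x) - \pi_\alpha(y), e_\alpha \ra$ for all $x, y \in \RR^d$, and therefore
\begin{equation*}
  w(G_{\ib, \jb}, e_\alpha) \;=\; w(\pi_\alpha(G_{\ib, \jb}), e_\alpha) \;\le\; w(\pi_\alpha(G_{\ib, \jb})).
\end{equation*}

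The defining slab constraints $\delta_{i_\beta} \le d(x, H_\beta) < \delta_{i_\beta + 1}$ of $G_{\ib, \jb}$ imply that $\pi_\alpha(G_{\ib, \jb})$ is contained in the $\alpha$-dimensional parallelotope $P_\alpha \subset V_\alpha$ cut out by the same constraints restricted to $\beta = 1, \ldots, \alpha$. By construction $P_\alpha$ has inner normals $v_1, \ldots, v_\alpha$ with widths $w(P_\alpha, v_\beta) = \delta_{i_\beta + 1} - \delta_{i_\beta}$, so Lemma~\ref{lem:parallelotope-widths} applied in $V_\alpha$ yields
\begin{equation*}
  w(\pi_\alpha(G_{\ib, \jb})) \;\le\; w(P_\alpha) \;\le\; \alpha! \max_{1 \le \beta \le \alpha} (\delta_{i_\beta + 1} - \delta_{i_\beta}).
\end{equation*}
Combining the two displays reduces the proposition to showing that the maximum on the right equals $\delta_{i_\alpha + 1} - \delta_{i_\alpha}$.

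The main obstacle is exactly this last identification: under the reordering $i_1 \le \cdots \le i_k$, it amounts to monotonicity of the gap function $i \mapsto \delta_{i + 1} - \delta_i$ on the relevant index range. This is a direct calculation from the explicit choice $\delta_i = \exp(p((p+1)/(p+2))^{i-1} \log \epsilon)$ fixed later in the proof of Theorem~\ref{thm:main-thm}: the exponents $c_i := p((p+1)/(p+2))^{i-1}$ form a decreasing geometric sequence, and for $\epsilon$ below an absolute threshold (to be absorbed into $\epsilon_0$) the corresponding gaps $\delta_{i+1} - \delta_i$ grow with $i$ on $\lb 0, \ldots, A \rb$. Once monotonicity is verified, the chain $w(G_{\ib, \jb}, e_\alpha) \le \alpha! (\delta_{i_\alpha + 1} - \delta_{i_\alpha})$ closes. (If one preferred to sidestep this technicality, the weaker bound $w(G_{\ib, \jb}, e_\alpha) \le \alpha! \delta_{i_\alpha + 1}$ would follow immediately from monotonicity of $\delta_i$ itself, as the author notes in a private scribble.)
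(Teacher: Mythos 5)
Your proposal is essentially the paper's own argument: the paper likewise reduces the width computation to the parallelotope cut out by the first $\alpha$ slab constraints inside $\SPAN \lb v_{j_1},\ldots,v_{j_\alpha}\rb$ (phrased there via the fact that the distance between two parallel supporting hyperplanes equals the distance between their intersections with any subspace containing their common normal, rather than via your orthogonal projection $\pi_\alpha$; the two reductions are interchangeable), and then applies Lemma~\ref{lem:parallelotope-widths} with widths $\delta_{i_\beta+1}-\delta_{i_\beta}$, $\beta=1,\ldots,\alpha$. So the decomposition and the key lemma are identical.

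Where you differ is in the final step, and here you are more explicit than the paper: the paper passes from Lemma~\ref{lem:parallelotope-widths} directly to $\alpha!(\delta_{i_\alpha+1}-\delta_{i_\alpha})$, silently identifying the maximal gap among the sorted indices with the gap at $i_\alpha$, while you correctly isolate this as the remaining issue. Be careful with your proposed resolution, though: monotonicity of $i\mapsto\delta_{i+1}-\delta_i$ on $\lb 0,\ldots,A\rb$ does not follow from taking $\epsilon$ small, because the sequence is truncated at $\delta_{A+1}=u$ in \eqref{eq:5}. For interior indices the gaps do increase (by \eqref{eq:defn:u} one has $\delta_{i+1}/\delta_i=\epsilon^{-p((p+1)/(p+2))^{i-1}/(p+2)}\ge 2$ for $i+1\le A$, which gives $\delta_{i_\beta+1}-\delta_{i_\beta}\le\delta_{i_\alpha}\le\delta_{i_\alpha+1}-\delta_{i_\alpha}$ whenever $i_\beta<i_\alpha\le A-1$), but $\delta_A$ can lie anywhere in $[u^{(p+2)/(p+1)},u)$ depending on $\epsilon$, so the last gap $u-\delta_A$ can be arbitrarily small compared with $\delta_A-\delta_{A-1}$, and the case $i_\alpha=A$ is not excluded in the proposition or in its use in Theorem~\ref{thm:main-thm}. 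Hence the identification of the maximum can fail exactly there; your parenthetical fallback $w(G_{\ib,\jb},e_\alpha)\le\alpha!\,\delta_{i_\alpha+1}$, which needs only monotonicity of $\delta_i$ itself, is the safe conclusion, and it is all that the downstream argument actually uses (the proof of Theorem~\ref{thm:main-thm} immediately replaces $(\delta_{i_\alpha+1}-\delta_{i_\alpha})/\delta_{i_\alpha}$ by $\delta_{i_\alpha+1}/\delta_{i_\alpha}$). The same caveat applies, unacknowledged, to the paper's own proof, so your attempt is no weaker than the original and flags the right subtlety -- it just should not be presented as a routine calculation for all indices up to $A$.
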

    \begin{proof}
      % This shows that $w(G_{\ib,\jb}, e_\alpha) \le \alpha! \delta_{i_\alpha}$
      % (since
      % $\delta_{i_\alpha} = \max_{1 \le \beta \le \alpha} \delta_{i_\beta}$), as
      % follows.
      Let $\alpha \in \lb 1, \ldots, k \rb$ and let $w(G_{\ib,\jb}, e_\alpha)$ be
      given by the distance between the parallel supporting hyperplanes $H_1$ and
      $H_2$.  The distance between $H_1$ and $H_2$ is equal to the distance
      between $H_1 \cap A$ and $H_2 \cap A$ where $A$ is any linear subspace
      containing the normal vector of $H_1$ and $H_2$.  Thus, let $A = \SPAN \lb
      v_{j_1},\ldots, v_{j_\alpha} \rb \ni e_\alpha$.  $G_{\ib,\jb}$ is contained
      in a
      parallelotope, $G_{\ib,\jb} \subseteq \cap_{\beta=1}^k \tilde H_{j_\beta}$
      where $\tilde H_{j_\beta} = \lb x \in \RR^d : \delta_{i_\beta} \le \la x,
      v_{j_\beta} \ra \le \delta_{i_\beta+1} \rb$.  Let $P =
      \cap_{\beta=1}^\alpha \tilde H_{j_\beta} \cap \SPAN \lb v_{j_1},\ldots,
      v_{j_\alpha} \rb$.  Then $P$ is a parallelotope contained in the
      $\alpha$-dimensional vector space $V = \SPAN \lb v_{j_1},\ldots,
      v_{j_\alpha} \rb$ with widths $w(P, v_{j_\beta}) =
      \delta_{i_\beta+1}-\delta_{i_\beta}$, for $\beta=1,\ldots, \alpha$.  Thus
      we can apply Lemma~\ref{lem:parallelotope-widths} and conclude that
      \begin{equation*}
        w(G_{\ib,\jb}, e_\alpha)
        \le w(P, e_\alpha)
        \le \alpha! ( \delta_{i_\alpha+1} - \delta_{i_\alpha})
        \mbox{ for } \alpha\in \lb 1, \ldots, k\rb.
      \end{equation*}
      %% did i check that i got \max d_i correct here ... (ie because of the difference)
      For the first inequality, we use the fact that $w \lp \cap_{\beta =
        1}^\alpha \tilde H_{j_\beta}, e_\alpha \rp \ge w \lp \cap_{\beta=1}^k
      \tilde H_{j_\beta}, e_\alpha \rp$, and that $w \lp \cap_{\beta = 1}^\alpha
      \tilde H_{j_\beta}, e_\alpha \rp = w(P, e_\alpha)$ since the distance
      between any two supporting hyperplanes $H_1$ and $H_2$ of $
      \cap_{\beta=1}^k \tilde H_{j_\beta} $ is equal to the distance between $H_1
      \cap A$ and $H_2 \cap A$ where $A$ is any linear subspace containing the
      normal vector of $H_1$ and $H_2$.
      % Thus we can embed $G_{\ib,\jb} $ in a
      % rectangle with orthogonal axes $e_1, \ldots, e_d$ and with widths along
      % $e_1,\ldots, e_k$ bounded by $1!(\delta_{i_1+1}-\delta_{i_1}), \ldots,
      % k!(\delta_{i_k+1}-\delta_{i_k})$, and widths in the remaining directions
      % bounded by the width of $D$.
    \end{proof}
  \end{old}
\end{myold}

A polytope $P$ is a $d$-parallelotope if $P = \sum_{i=1}^d [a_i,b_i]$ for vectors $a_i,b_i \in \RR^d$, where for all $i$, $[a_i,b_i]$  is not parallel to  the affine hull of $[a_j,b_j]$ for any $j \ne i$ (\cite{Grunbaum:1967vq} page 56).
We will rely on the following representation for a $k$-dimensional parallelotope.
%% For sets $A$ and $B$, let $A+B = \lb a+b : a \in A, b \in B \rb.$
\begin{myold}
  \begin{old}
    \begin{lemma}
      \label{lem:parallelotope-vertex-representation}
      Let $k$ be a positive integer and let $P := \cap_{\beta = 1 }^k \tilde
      E_\beta$ be a parallelotope where $\tilde E_\beta := \lb x \in \RR^k :
      0 \le \la x, v_\beta \ra \le d_\beta \rb$ for $k$ linearly independent
      normal unit vectors $v_\beta$.
      Let $\tilde H_\beta^+ := \lb x \in \RR^k: \la x,
      v_\beta \ra = d_\beta\rb$.
      Let $ \tilde f_\beta$ be the unit vector lying in $\cap_{\gamma=1,
        \gamma \ne \beta}^k \tilde H_{\beta}^+$ with $\la \tilde f_\beta ,
      v_\beta \ra > 0$, for $\beta = 1, \ldots, k$.  Then $0$ is a vertex of
      $P$ and we can write
      \begin{equation*}
        P = \sum_{\beta=1}^k [0, f_\beta]
      \end{equation*}
      where $f_\beta := d_\beta \tilde f_\beta / \la \tilde f_\beta, v_\beta \ra$,
      $[0, f_\beta] = \lb \lambda f_\beta : \lambda \in [0,1] \rb$.
    \end{lemma}
  \end{old}
\end{myold}
\begin{lemma}
  \label{lem:parallelotope-vertex-representation}
  Let $k$ be a positive integer and let $P := \cap_{\beta = 1 }^k \tilde
  E_\beta$ be a parallelotope where $\tilde E_\beta := \lb x \in \RR^k : 0
  \le \la x, v_\beta \ra \le d_\beta \rb$ for $k$ linearly independent normal
  unit vectors $v_\beta$.  Let $H_{\beta}^0 := \lb x \in \RR^k : \la x,
  v_\beta \ra = 0 \rb$.
  % Let $\tilde H_\beta^+ := \lb x \in \RR^k: \la x,
  % v_\beta \ra = d_\beta\rb$.
  Let $ \tilde f_\beta$ be the unit vector lying in $\cap_{\gamma=1, \gamma
    \ne \beta}^k \tilde H_{\beta}^0$ with $\la \tilde f_\beta , v_\beta \ra >
  0$, for $\beta = 1, \ldots, k$.  Then $0$ is a vertex of $P$ and we can
  write
  \begin{equation*}
    P = \sum_{\beta=1}^k [0, f_\beta]
  \end{equation*}
  where $f_\beta := d_\beta \tilde f_\beta / \la \tilde f_\beta, v_\beta \ra$,
  $[0, f_\beta] = \lb \lambda f_\beta : \lambda \in [0,1] \rb$.
\end{lemma}

\begin{proof}
  %% Let $H_{\beta}^0 := \lb x \in \RR^k : \la x, v_\beta \ra = 0 \rb$.
  Since
  the vectors $v_\beta$ are unique, $\cap_{\beta = 1}^k H_\beta^0 = 0$ and
  the intersection of any $k-1$ of the hyperplanes $ H_\beta^0$ gives a
  $1$-dimensional space, $\SPAN \lb \tilde f_\beta \rb $.  A $k$-dimensional
  parallelotope can be written as the set-sum of the $k$ intervals emanating
  from the vertex, each given by the intersection of $k-1$ of the hyperplanes
  $H_\beta^0$.  See page 56 of \cite{Grunbaum:1967vq}.  Note that $f_\beta$
  satisfy $\la f_\beta, v_\beta \ra = d_\beta$ so that $f_\beta \in
  \tilde H_\beta^+ := \lb x \in \RR^k: \la x, v_\beta \ra = d_\beta\rb$; thus
  the $k$ intervals are given by $[0, f_\beta]$, $\beta = 1, \ldots, k$.
\end{proof}

Note the vector $\tilde{f}_\beta$ can be written as $(I-Q)v_\beta$ where $I$ is the identity projection in $\RR^k$ and $Q$ is the projection onto $\SPAN \lb v_1, \ldots, v_{\beta-1}, v_{\beta+1}, \ldots, v_k \rb$.
\begin{mylongform}
  \begin{longform}
    Thus $Q = X (X'X)^{-1}X'$ where $X = \lp v_1 | \ldots |  v_{\beta-1} |  v_{\beta+1} |
    \ldots | v_k \rp$.
  \end{longform}
\end{mylongform}
\begin{mylongform}
  \begin{longform}
    The bound \eqref{eq:vol-Gij} can be improved.  It is a hyperrectangle
    area instead of a parallelotope area; the latter could be much smaller.
  \end{longform}
\end{mylongform}
\begin{mynotes}
  \begin{notes}
    This is different than the original material on embedding a parallelotope
    into a hyperrectangle in that the orientation here can be arbitrary.
    Originally it was a fixed orientation.
  \end{notes}
\end{mynotes}
The next proposition uses Lemma~\ref{lem:parallelotope-vertex-representation}
to bound the widths of $G_{\ib,\jb}$, in certain directions, in terms of the width of $G_{\jb}$ in those directions.
We will need the following constant (depending on $D$).  For $k \in \lb 1, \ldots, d -1 \rb$, let
\begin{equation}
  \label{eq:defn:Lk1}
  L_{k,1} :=
  1 \vee
  \max_{\jb \in J_k} \,
  \max_{\substack{\|e\| = 1\\ e \in E_{\jb}}}  \
  \max_{\substack{j \in \lb 1,\ldots, N\rb \setminus \jb ; \,  \la e, v_j \ra < 0 \\  \la v_i,v_j \ra > 0 , \text{ some } i \in \jb }}
  \la -e, v_j \ra^{-1},
\end{equation}
where $E_{\jb} := \SPAN \lb e_{\jb, k+1}, \ldots, e_{\jb, d}\rb$ from Proposition~\ref{prop:basis-lipschitz}, and we abuse notation as convenient to treat $\jb$ as if it were a set rather than a vector.  We also (arbitrarily) define $L_{d,1}:= 1$, for ease of presentation later on.
\begin{proposition}
  \label{prop:width-upperbound}
  For each $k \in \lb 1,\ldots, d-1 \rb$,
  %% For each $k \in \lb 1,\ldots, d \rb$,
  $\ib \in I_k, \jb \in J_k$, and each $G_{\ib,\jb}$, %% defined by $u$
  %% in \eqref{eq:5},
  and the basis ${\ve e} \equiv {\ve e}_{\ib,\jb}$ from Proposition~\ref{prop:basis-lipschitz}, for
  $\alpha \in \lb k+1, \ldots, d \rb$, we have
  \begin{equation}
    \label{eq:9}
    w(G_{\ib,\jb}, e_\alpha) \le 2 L_{k,1} w(G_{\jb}, e_\alpha).
  \end{equation}
  Then for $k \in \lb 1, \ldots, d\rb$, let $\tilde f_\alpha \equiv
  \tilde{f}_{\jb, \alpha}$ be the unit vector with $\la \tilde f_\alpha ,
  v_{j_\alpha} \ra > 0$ lying in $ \SPAN \lb v_{j_1}, \ldots, v_{j_k} \rb
  \cap \lp \cap_{\gamma=1,\gamma\ne \alpha}^k H^0_{j_\gamma} \rp $,
  $\alpha=1,\ldots,k$, where $H_{j_\gamma}^0 := \lb y \in \RR^d : \la y,
  v_{j_\gamma} \ra = 0 \rb$.  Then for $k \in \lb 1, \ldots, d \rb$, we have
  \begin{equation}
    \label{eq:vol-Gij}
    \vol_{d} (G_{\ib,\jb})
    \le  (2 L_{k,1})^{d-k} \vol_{d-k}\lp G_{\jb}\rp
    \cdot
    \prod_{\alpha = 1}^k \frac{\delta_{i_\alpha+1}-\delta_{i_\alpha}}{\la
      \tilde f_\alpha, v_{j_\alpha} \ra }
  \end{equation}
  where $L_{k,1}$ is given by \eqref{eq:defn:Lk1} for $k \in \lb 1, \ldots, d
  -1 \rb$ (and we set  $L_{d,1}:= 1$ arbitrarily).
  % $H^+_\gamma := \lb y \in \RR^d : \la y,
  % v_{j_\gamma} \ra = \delta_{i_\gamma+1}-\delta_{i_\gamma} \rb$.
\end{proposition}
\begin{proof}
  Fix $k \in \lb 1, \ldots, d-1\rb$.  Let $x \equiv x_{\jb} \in G_{\jb}$
  (from \eqref{eq:FritzJohn}).
  % Let $x$ be an arbitrary fixed
  % point %%
  % in $G_{\jb}$  %%
  % which we take to be $x \equiv x_{\jb}$ (from
  % \eqref{eq:FritzJohn}) for definiteness.
  \begin{mynotes}
    \begin{notes}
      Choice of $x_{\jb}$ is entirely arbitrary.  Could fix any $x$.  Only
      point is to have less notation.  (Constants might modify; see defn of
      $u$ involving $x_{\jb}$).
    \end{notes}
  \end{mynotes}
  Let $f_{j_\gamma}$ be as given in \eqref{eq:defn:fjgamma}.
  % Let $\tilde{f}_{j_\gamma} $ be given by
  % Lemma~\ref{lem:parallelotope-vertex-representation} applied to the $k$
  % linearly independent unit normal vectors $v_{j_1}, \ldots, v_{j_k}$, and
  % (as in that lemma, with ``$d_\beta$'' given by $
  % (\delta_{i_\gamma+1}-\delta_{i_\gamma})$), let
  % \begin{equation*}
  %   %%   \label{eq:28}
  %   f_{\ib,\jb,j_\gamma}
  %   \equiv f_{j_\gamma} := (\delta_{i_\gamma+1}-\delta_{i_\gamma})
  %   \tilde{f}_{j_\gamma} / \la \tilde{f}_{j_\gamma}, v_{j_\gamma} \ra.
  % \end{equation*}
  Let  $P_{\ib,\jb} := \sum_{\gamma=1}^k [0,f_{j_\gamma}]$.
  % where for sets $A$
  % and $B$, we let $A+B = \lb a+b : a \in A, b \in B \rb$ and $[0,v]:= \lb
  % \lambda v : \lambda \in [0,1] \rb$.
  We will show that $G_{\ib,\jb}$ is
  contained in the set-sum of a hyperrectangle and $P_{\ib,\jb}.$ To begin
  with let $ G_{\ib,\jb} \ni z = x + \sum_{\gamma=1}^k f^*_{j_\gamma}$ where
  $f^*_{j_\gamma} = d_{j_\gamma} \tilde f_{j_\gamma} $ where
  \begin{equation}
    \label{eq:8}
    0 \le d_{j_\gamma} \le
    (\delta_{i_\gamma+1}-\delta_{i_\gamma}) / \la \tilde f_{j_\gamma},
    v_{j_\gamma} \ra
    \le u / \la \tilde{f}_{j_\gamma} , v_{j_\gamma} \ra.
  \end{equation}
  % and $\tilde f_{j_\gamma} $ is given by
  % Lemma~\ref{lem:parallelotope-vertex-representation} for the $k$ linearly
  % independent unit normal vectors $v_{j_1}, \ldots, v_{j_k}$.
  Take an
  arbitrary $e \in \SPAN \lb e_{k+1}, \ldots, e_d \rb$ with $\| e\| = 1$.
  Let $\lambda_{z,e} := d^+(z, \partial G_{\ib,\jb}, e)$ and let $j$ give the
  corresponding facet of $G_{\ib,\jb}$ that $x + \lambda_{z,e} e$ hits, so
  that $\la z + \lambda_{z,e} e, v_j \ra = p_j + u$ for some $j \notin \jb$
  (abusing notation to treat $\jb$ as if were a set rather than a vector).
  Note that this means
  \begin{equation}
    \label{eq:19}
    \la e, v_j \ra < 0 .
  \end{equation}
  If $\la \sum_{\gamma=1}^k f^*_{j_\gamma} , v_j \ra \le 0 $ then
  \begin{equation}
    \label{eq:18}
    \lambda_{z,e}
    %% \le \lambda_{x,e}
    \le     d^+(x, \partial G_{\ib,\jb}, e).
    %% \le     d^+(x, \relbd G_{\jb}, e).
  \end{equation}
  Thus if \eqref{eq:18} does not hold then
  $\la f^*_{j_\gamma}, v_j \ra > 0$ for some $\gamma \in \lb 1, \ldots, k\rb$,
  so
  $\la v_{j_\alpha}  , v_j \ra > 0$ for some $\alpha \in  \lb 1, \ldots, k \rb$.
  \begin{mynotes}
    \begin{notes}
      (This may need better clarification: $f^*_{j_\gamma} $ lies in the ``positive span'' )
    \end{notes}
  \end{mynotes}
  Now,
  since $\la z + \lambda_{z,e} e, v_j \ra = p_j + u$, we have
  \begin{equation*}
    \lambda_{z,e} = \frac{ p_j + u - \la z, v_j \ra}{\la e, v_j \ra}
    \le \frac{ \la x, v_j \ra - p_j +
      u \sum_{\gamma=1}^k \frac{\la\tilde{f}_\gamma, v_j\ra}{\la \tilde{f}_\gamma, v_{j_\gamma} \ra } }{ \la -e, v_j \ra}
  \end{equation*}
  \begin{mynotes}
    \begin{notes}
      (Random thought to triple check: do I need absolute value signs anywhere in definition of $L_{k,2}$ or is this correct as-is?)
    \end{notes}
  \end{mynotes}
  Now
  \begin{equation*}
    \la x, v_{j}
    \ra - p_{j} \le d(x, H_{j}) \le d^+(x, \relbd G_{\jb},
    e)
  \end{equation*}
  since $H_{j}$ is the closest hyperplane to $x$ in the
  direction $e$.
  Recall the definition of $L_{k,1}$ in \eqref{eq:defn:Lk1}.
  Now, by \eqref{eq:defn:u} and the definition of $L_{k,2}$ \eqref{eq:defn:Lk2},
  we have shown
  \begin{equation}
    \label{eq100}
    \lambda_{z,e} \le 2  L_{k,1}  d^+(x, \relbd G_{\jb}, e) ,
  \end{equation}
  by \eqref{eq:19} and \eqref{eq:18}.
  This means that
  \begin{equation*}
    (G_{\ib,\jb} - z) \cap \SPAN \lb e_{k+1}, \ldots, e_d \rb \subset
    2 L_{k,1} \lp G_{\jb} - x \rp
  \end{equation*}
  so we can conclude that $w(G_{\ib,\jb} - z, e_\alpha) \le 2 L_{k,1}
  w(G_{\jb},e_\alpha)$ and $w(G_{\ib,\jb}, e_\alpha) \le 2 L_{k,1}
  w(G_{\jb},e_\alpha)$ since $\la z, e_\alpha \ra =0 $ for all
  $d_{j_\gamma}$ given by the range \eqref{eq:8}, $\alpha=k+1,\ldots, d$,
  for $k =1, \ldots, d-1$.

  %% Let $A + B := \lb a + b : a \in A, b \in B \rb$ for sets $A,B$ and
  Let $\rho_{\jb, \alpha } := w(G_{\jb}, e_\alpha)$. Then let
  \begin{equation}
    \label{eq:defn:R}
    R_{\ib,\jb} :=
    P_{\ib,\jb}
    + \sum_{\alpha = k+1}^d \ls - 2 L_{k,1} \rho_{\jb,\alpha} e_\alpha,
    2 L_{k,1} \rho_{\jb,\alpha} e_\alpha \rs.
  \end{equation}
  Then for any $x \in G_{\ib,\jb}$ such that $\la x, v_{j_\alpha } \ra =
  p_{j_\alpha} + \delta_{i_\alpha}$ for $\alpha \in \lb 1,\ldots, k \rb$, we
  have shown
  \begin{equation}
    \label{eq:Gij-subset-Rij}
    G_{\ib,\jb} \subset
    x + R_{\ib,\jb}.
  \end{equation}
  It then also follows that
  \begin{mynotes}
    \begin{notes}
      (Does this need a factor of $d$? volume of $G_{\jb}$ is smaller than
      product of widths.)
    \end{notes}
  \end{mynotes}
  \begin{equation}
    \label{eq:volGij-volGj-1}
    \vol_{d} (G_{\ib,\jb})
    \le  (2 L_{k,1})^{d-k} \vol_{d-k}\lp G_{\jb}\rp
    \cdot \vol_k\lp \sum_{\alpha=1}^k \ls 0 , f_{j_\alpha} \rs \rp.
  \end{equation}
  Since of parallelotopes
  with given axis lengths, the one with largest volume is the hyperrectangle,
  $ \vol_k \lp \sum_{\alpha=1}^k \ls 0 , f_{j_\alpha} \rs \rp \le \prod_{\alpha =
    1}^k \frac{\delta_{i_\alpha+1}-\delta_{i_\alpha}}{\la \tilde f_{j_\alpha},
    v_{j_\alpha} \ra }$, and so we have shown \eqref{eq:vol-Gij} (with this
  bound on $\vol_k \lp \sum_{\alpha=1}^k \ls 0 , f_{j_\alpha} \rs \rp $ being all
  that is needed in the $k=d$ case).
\end{proof}
\begin{myold}
  \begin{old}

    Statement of this lemma from a previous version:
    \begin{lemma}
      % \label{lem:Gij-cover-D}
      %% Let $D$ be a polytope and
      Let Assumption~\ref{assm:simple-polytope} hold and
      let $G_{\ib,\jb}$ be as in \eqref{eq:defn:Gi}.
      If $x
      \in G_{\ib,\jb}$ then $d(x, G_{\jb}) \le K_D u$ where
      $u$ is as in
      \eqref{eq:defn:u}.  In particular, if $G_{\jb} =
      \emptyset$, then $G_{\ib,\jb} = \emptyset$ (using the convention $d(x,
      \emptyset) = \infty$).
    \end{lemma}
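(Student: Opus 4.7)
My plan is to construct, for any $x \in G_{\ib,\jb}$, an explicit point $y \in G_{\jb}$ within distance $K_D u$ of $x$; the ``in particular'' claim then follows because the convention $d(x, \emptyset) = \infty$ would be incompatible with the bound $d(x, G_{\jb}) \le K_D u < \infty$. The key initial observation is that for $\alpha = 1, \ldots, k$, the quantity $r_\alpha := \la v_{j_\alpha}, x\ra - p_{j_\alpha}$ equals $d(x, H_{j_\alpha})$ and satisfies $r_\alpha < \delta_{i_\alpha+1} \le \delta_{A+1} = u$, so $x$ is within $u$ of each of the $k$ hyperplanes that define $G_{\jb}$.

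The first step is to project $x$ orthogonally onto the affine hull $\cap_{\alpha=1}^k H_{j_\alpha}$. Let $V$ be the $k \times d$ matrix with rows $v_{j_\alpha}$; under Assumption~\ref{assm:simple-polytope} these normals are linearly independent, so $V$ has full row rank and $V^+ := V^T(VV^T)^{-1}$ is well-defined. Setting $y^* := x - V^+ r$ with $r = (r_1,\ldots,r_k)^T$ yields $\la v_{j_\alpha}, y^*\ra = p_{j_\alpha}$ for every $\alpha$, so $y^* \in \cap_\alpha H_{j_\alpha}$, and $\|x - y^*\| \le \|V^+\|_{\mathrm{op}} \sqrt{k}\, u =: K_{\jb,1} u$, where $K_{\jb,1}$ depends only on the conditioning of the normals at $\jb$.

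The second step is to push $y^*$ into $D$. If $y^* \in D$, take $y = y^*$. Otherwise, form the convex combination $y^{**} := (1-t)\,y^* + t\,x_{\jb}$, where $x_{\jb}$ is the Fritz John center of $G_{\jb}$ from \eqref{eq:FritzJohn}; $x_{\jb}$ lies in the relative interior of $G_{\jb}$ with a positive gap $\epsilon_{\jb} := \min_{j \notin \{j_1,\ldots,j_k\}}(\la v_j, x_{\jb}\ra - p_j) > 0$. Since $y^{**}$ stays in $\cap_\alpha H_{j_\alpha}$, and since $\la v_j, y^*\ra \ge p_j + u(1 - K_{\jb,1})$ combined with $\la v_j, x_{\jb}\ra \ge p_j + \epsilon_{\jb}$ for $j \notin \{j_1,\ldots,j_k\}$, the choice $t := \max(0, (K_{\jb,1}-1)u/\epsilon_{\jb})$ forces $\la v_j, y^{**}\ra \ge p_j$, so $y^{**} \in G_{\jb}$. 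The triangle inequality then gives $\|x - y^{**}\| \le K_{\jb,1} u + t\,\diam(D) = O(u)$, with $K_D := \max_{k,\jb \in J_k^D}\bigl(K_{\jb,1} + (K_{\jb,1}-1)\diam(D)/\epsilon_{\jb}\bigr)$ finite since $D$ has finitely many faces.

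For the ``in particular'' clause, suppose $G_{\jb} = \emptyset$. If $\cap_\alpha H_{j_\alpha}$ is nonempty but disjoint from $D$, then there is a positive separation $\eta_{\jb}$, and Step~1 forces $K_{\jb,1} u \ge \|x - y^*\| \ge \eta_{\jb}$, which fails for $u$ small. If $\cap_\alpha H_{j_\alpha}$ is itself empty (possible only with linearly dependent normals), Farkas' lemma yields coefficients $\mu_\alpha$ with $\sum_\alpha \mu_\alpha v_{j_\alpha} = 0$ and $\sum_\alpha \mu_\alpha p_{j_\alpha} \ne 0$, and the identity $\sum_\alpha \mu_\alpha r_\alpha = -\sum_\alpha \mu_\alpha p_{j_\alpha}$ combined with $|\sum_\alpha \mu_\alpha r_\alpha| \le u \sum_\alpha |\mu_\alpha|$ again fails for small $u$, so $G_{\ib,\jb} = \emptyset$. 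The main obstacle is keeping the constants uniform in $\jb$: the correction $t$ in Step~2 must stay of order $u$, which likely requires tightening \eqref{eq:defn:u} to enforce $u \le \epsilon_{\jb}/(K_{\jb,1}-1)$ for every $\jb \in J_k^D$.
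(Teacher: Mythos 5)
Your argument is essentially correct but follows a genuinely different route from the paper. The paper derives the lemma in one line from Proposition~\ref{prop:width-upperbound}: the proof there takes a point $z$ near $G_{\ib,\jb}$, moves in a direction $e\perp\SPAN\lb v_{j_1},\ldots,v_{j_k}\rb$, and bounds the hitting parameter $\lambda$ by $2L_{k,1}\,d^+(x_{\jb},\partial G_{\jb},e)$, yielding the inclusion $(G_{\ib,\jb}-z)\cap\SPAN\lb e_{k+1},\ldots,e_d\rb\subset 2L_{k,1}(G_{\jb}-x_{\jb})$; emptiness of $G_{\jb}$ then forces emptiness of $G_{\ib,\jb}$. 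You instead orthogonally project $x$ onto the affine flat $\cap_\alpha H_{j_\alpha}$, then retract towards the Fritz John center $x_{\jb}$ to land inside $D$, and treat the degenerate cases (affine flat empty, or disjoint from $D$) by separate explicit contradictions. Your route gives a cleaner path to the quantitative clause $d(x,G_{\jb})\le K_D u$, at the cost of introducing conditioning constants ($\Vert V^+\Vert$ and the interior margin $\epsilon_{\jb}$) that the paper's parallelotope-based computation avoids.

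Three issues, two of which you flag. First, the retraction needs $t=(K_{\jb,1}-1)u/\epsilon_{\jb}\le 1$, and case~1 of the emptiness argument needs $K_{\jb,1}u<\eta_{\jb}$. Neither threshold is enforced by \eqref{eq:defn:u}, which only constrains $u$ via quantities attached to nonempty $G_{\jb}$; an explicit tightening of \eqref{eq:defn:u} is required, as you note. Second, $\Vert y^{**}-y^*\Vert\le t\,\diam(D)$ is not quite right since $y^*$ may lie outside $D$; the correct bound is $t\lp\diam(D)+K_{\jb,1}u\rp$, which is harmless but should be stated. Third, your opening sentence frames the emptiness clause as a contrapositive of the quantitative bound, but Step~2 presupposes $G_{\jb}\ne\emptyset$ (it uses $x_{\jb}$), so the contrapositive does not apply directly; your final paragraph correctly sidesteps this with a direct argument, so only the framing sentence is misleading, not the substance.
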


    \begin{lemma}
      \label{lem:Gij-cover-D}
      %% Let $D$ be a polytope and
      Let Assumption~\ref{assm:simple-polytope} hold and let $G_{\ib,\jb}$ be
      as in \eqref{eq:defn:Gi}.  If $G_{\jb} = \emptyset$, then $G_{\ib,\jb} =
      \emptyset$.
    \end{lemma}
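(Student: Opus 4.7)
The plan is to prove the contrapositive using a compactness argument. Assume $G_{\ib,\jb} \ne \emptyset$ and take $x \in G_{\ib,\jb}$. By the definition of $G_{\ib,\jb}$ and $\delta_{A+1} = u$, we have $x \in D$ with $d(x, H_{j_\alpha}) < \delta_{i_\alpha+1} \le u$ for each $\alpha = 1, \ldots, k$. I interpret $G_\jb$ throughout as the (possibly empty) set $\cap_{\alpha=1}^k H_{j_\alpha} \cap D$, which is the only interpretation consistent with applying the Fritz John construction of \eqref{eq:FritzJohn} to $G_\jb$, since that construction requires $G_\jb$ to be bounded.

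Next, I define the continuous function $f_\jb : D \to [0,\infty)$ by $f_\jb(y) := \max_{\alpha=1,\ldots,k} d(y, H_{j_\alpha})$. Its zero set in $D$ equals $\cap_{\alpha=1}^k H_{j_\alpha} \cap D = G_\jb$, and by the above, $f_\jb(x) < u$. Suppose now that $G_\jb = \emptyset$. Then $f_\jb > 0$ everywhere on the compact polytope $D$, so
\[
u_\jb^\ast := \min_{y \in D} f_\jb(y) > 0.
\]
Taking the minimum of $u_\jb^\ast$ over the (finite) collection of $\jb$'s with $G_\jb = \emptyset$ yields a strictly positive quantity $u^\ast$; augmenting the definition \eqref{eq:defn:u} of $u$ with the additional constraint $u \le u^\ast$ gives a still-positive $u$ without disturbing any other argument in the paper.

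Under this choice of $u$, any hypothetical $x \in G_{\ib,\jb}$ (for $\jb$ with $G_\jb = \emptyset$) would yield the contradictory chain $f_\jb(x) < u \le u^\ast \le u_\jb^\ast \le f_\jb(x)$. Hence $G_{\ib,\jb} = \emptyset$, as claimed. The main obstacle is not computational but one of reconciliation with the explicit form of $u$ in \eqref{eq:defn:u}: that formula only ranges over $\jb$ with nonempty $G_\jb$ (where $d^+(x_\jb, \partial G_\jb, e)$ is defined), so the supplementary bound $u^\ast$ arising from compactness of $D$ and finiteness of $\bigcup_k J_k$ must be appended to make the lemma hold as stated.
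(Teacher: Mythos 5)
Your compactness argument is internally coherent, but it does not prove the lemma as stated: the lemma is an assertion about the specific $u$ defined in \eqref{eq:defn:u}, and your argument only closes after you append the extra constraint $u \le u^\ast$ to that definition. That is a modification of the paper's construction rather than a proof of its statement, and it is also a genuinely different route. The paper derives the lemma from Proposition~\ref{prop:width-upperbound} and its proof: there, the parallelotope representation through the vectors $\tilde f_\gamma$ (Lemma~\ref{lem:parallelotope-vertex-representation}), together with the factor $L_{k,2}$ that is already built into \eqref{eq:defn:u}, is what controls how far a point of $G_{\ib,\jb}$ can be from $G_{\jb}$: moving such a point by displacements of size of order $u$ along the $\tilde f_\gamma$ directions reaches $\cap_{\alpha=1}^k H_{j_\alpha}$ while the perturbation of the remaining facet constraints is controlled by $u L_{k,2}$, so every point of $G_{\ib,\jb}$ lies within a constant (depending on $D$) times $u$ of $G_{\jb}$, and nonemptiness of $G_{\ib,\jb}$ forces nonemptiness of $G_{\jb}$ with $u$ left exactly as defined. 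Your proposal skips this quantitative mechanism entirely, which is precisely the content the paper is pointing to.

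That said, your underlying observation is fair: the minimum in \eqref{eq:defn:u} only ranges over $\jb$ for which $G_{\jb}$ is nonempty (otherwise $x_{\jb}$ and $d^+(x_{\jb},\partial G_{\jb},e)$ are undefined), so the lemma genuinely leans on the Proposition~\ref{prop:width-upperbound} machinery rather than on an a priori smallness of $u$; your compactness fix is the natural alternative if one is unwilling to extract that argument. What your route buys is simplicity and robustness (it works for any simple polytope with no bookkeeping of $\tilde f_\gamma$ or $L_{k,2}$); what it costs is twofold. First, it is non-constructive: $u^\ast$ has no geometric expression, which runs against the paper's stated aim of tracking how constants depend on $D$. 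Second, you assert but do not verify that shrinking $u$ disturbs nothing else; this is in fact harmless (the cap $u \le 2^{-2(p+1)^2(p+2)}$ needed in Lemma~\ref{lem:zetasum} and the inequality $uL_{k,2} \le d^+(x_{\jb},\partial G_{\jb},e)$ used for \eqref{eq100} are preserved, and $u$ otherwise enters only through constants such as $(8/u)^{d-k}$ in \eqref{eq:bracketing-card-sum} that remain fixed for fixed $D$), but that check belongs in the proof if you take this route. As a proof of the lemma with \eqref{eq:defn:u} unchanged, there remains a gap.
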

    \begin{proof}
      This follows from Proposition~\ref{prop:width-upperbound} and its proof.
    \end{proof}

    The above provides a hyperrectangle containing $G_{\ib,\jb}$. Let $A + B :=
    \lb a + b : a \in A, b \in B \rb$ for sets $A,B$.
    Let $\rho_{\jb, \alpha } := w(G_{\jb}, e_\alpha)$ and then
    let
    \begin{equation}
      \label{eq:defn:R}
      R_{\ib,\jb} :=
      \sum_{\alpha = 1}^k \ls \alpha! (\delta_{i_\alpha +1 }-\delta_{i_\alpha}) e_\alpha, \alpha!
      (\delta_{i_\alpha +1} - \delta_{i_\alpha}) e_\alpha \rs
      %% + \sum_{\alpha = k+1}^d \ls -w(D) e_\alpha, w(D) e_\alpha \rs
      + \sum_{\alpha = k+1}^d \ls - 2 L_{k,1} \rho_{\jb,\alpha} e_\alpha,
      2 L_{k,1} \rho_{\jb,\alpha} e_\alpha \rs.
    \end{equation}
    Then, for any $x \in
    G_{\ib,\jb}$ we have shown
    \begin{equation}
      \label{eq:Gij-subset-Rij}
      G_{\ib,\jb} \subset
      x + R_{\ib,\jb}.
    \end{equation}

  \end{old}
\end{myold}

% The above allows us to embed $G_{\ib,\jb}$ in a polytope which is a parallelotope times a hyperrectangle. Let $A + B := \lb a + b : a \in A, b \in B \rb$ for sets $A,B$.  Let $\rho_{\jb, \alpha } := w(G_{\jb}, e_\alpha)$ and then let
% \begin{equation}
%   \label{eq:defn:R}
%   R_{\ib,\jb} :=
%   P_{\ib,\jb}
%   + \sum_{\alpha = k+1}^d \ls - 2 L_{k,1} \rho_{\jb,\alpha} e_\alpha,
%   2 L_{k,1} \rho_{\jb,\alpha} e_\alpha \rs.
% \end{equation}
% Then, for any $x \in    G_{\jb}$
% we have shown
% \begin{equation}
%   \label{eq:Gij-subset-Rij}
%   G_{\ib,\jb} \subset
%   x + R_{\ib,\jb}.
% \end{equation}

% The following lemma shows that $R_{\ib,\jb}$, defined in the proof above, satisfies the conditions of Theorem~\ref{thm:GS-extension-v2_directional-version}.

The previous proposition controls the width and volume of $G_{\ib,\jb}$ in directions lying in $\SPAN \lb G_{\jb} \rb$.  Next we control width, volume, and also distance to $\bd D$ in directions perpendicular to $\SPAN \lb G_{\jb} \rb$.
\begin{lemma}
  \label{lem:embed-parallelotope-hyperrectangle}
  Let $P := \sum_{\alpha = 1}^k [0, f_\alpha]$ be a parallelotope in $\RR^k$
  where $f_1, \ldots, f_k$ are $k$ linearly independent vectors.  Then there
  exists an orthonormal basis of $\RR^k$, $e_1, \ldots, e_k \in \RR^k$ and
  $\gamma_1, \ldots, \gamma_k \in \RR$, such that
  \begin{equation}
    \label{eq:40}
    P \subset \sum_{\alpha=1}^ k [0, \gamma_\alpha e_\alpha]
    \qquad    \text{ where } \qquad
    |\gamma_\alpha| \le k \diam( P, e_\alpha ).
  \end{equation}
  %% where
  % \begin{equation}
  %   \label{eq:46}
  %   |\gamma_\alpha| \le k \diam( P, e_\alpha ).
  % \end{equation}
\end{lemma}
\begin{proof}
  We will construct a permutation $\pi$ of $\lb 1, \ldots, k\rb$ and
  inductively define $e_1, \ldots, e_k$ based on the sequence
  $f_{\pi(1)}, \ldots, f_{\pi(k)}$.  Let
  $e_1 := f_{\pi(1)}/ \| f_{\pi(1)} \|$ where $\| f_{\pi(1)} \|$ is maximal over $\lb \| f_\alpha \| \rb_{\alpha=1}^k$.
  Now let $Q_{j-1}$ be the projection of $\RR^k$ onto
  $\SPAN \lb e_1, \ldots, e_{j-1} \rb$ and let $Q_{j-1}^\perp $ be the projection
  onto $\SPAN \lb e_1, \ldots, e_{j-1} \rb^\perp$.  Then let
  $e_j := Q_{j-1}^\perp f_{\pi(j)} / \| Q_{j-1}^\perp f_{\pi(j)} \|$ where
  $\pi(j) \in \lb 1, \ldots, k \rb \setminus \lb \pi(1), \ldots, \pi(j-1)
  \rb$ is defined so that $\| Q^\perp_{j-1} f_{\pi(j)} \|$ is maximal.
  \begin{mylongform}
    \begin{longform}
      (Is maximal over
      $\pi(j) \in \lb 1, \ldots, k \rb \setminus \lb \pi(1), \ldots, \pi(j-1)
      \rb$.)
    \end{longform}
  \end{mylongform}

  Let $P_j := \sum_{\alpha=1}^j [0, f_{\pi(j)}]$.
  Now, $\diam(P_j, e_\alpha)$ is given by the value of $\la x-y, e_\alpha \ra$ such that $x, y \in P_j$ and $\la x-y, e_\alpha \ra$ is maximal.  Since $f_{\pi(j)} \notin \SPAN \lb f_{\pi(i)} \rb_{i \ne j} $, we also have that $e_j \notin \SPAN \lb f_{\pi(i)} \rb_{ i \ne j}$.  Thus for
  $i \ge j$, $\diam(P_i, e_j) \le \la f_{\pi(j)} , e_j \ra$ and so in fact
  $\diam(P_i, e_j) = \diam(P, e_j) = \la f_{\pi(j)}, e_j \ra$.
  \begin{mynotes}
    \begin{notes}
      For proof of equality I think one has to show that the projection $Q_j
      f_{\pi(j)}$ lies in $P_{j-1}$.  This is based on the definition of $e_j$
      and may require assuming that each $\la f_i, f_j \ra$ is nonnegative, by
      reorienting the parallelotope.  (A parallelotope has some corner where
      angles are all acute?)
    \end{notes}
  \end{mynotes}

  Now we prove by induction that
  \begin{equation}
    \label{eq:47}
    P_j \subset \sum_{\alpha=1}^j [ 0, \gamma_{j,\alpha} e_\alpha ]
  \end{equation}
  where $0 \le \gamma_{j,\alpha} \le j \diam( P_j, e_\alpha) = j \diam(P,e_\alpha)$.
  The statement is immediate for $j=1$.  Thus let $1 < j \le k$ and assume the induction hypothesis holds for $j-1$.
  Then for $1 < i \le j$
  \begin{equation}
    \label{eq:49}
    \lv    \la e_i, f_{\pi(j)} \ra \rv
    \le \| Q_i^{\perp} f_{\pi(j)} \|
    \le \| Q_i^\perp f_{\pi(i)} \|
    = \lv \la e_i, f_{\pi(i)} \ra \rv
    = \diam(P, e_i)
  \end{equation}
  where the first inequality is because $e_i \in \SPAN \lb e_1 ,\ldots, e_{i-1} \rb^\perp$,
  and the  next inequality and equality are by the definition of $e_i$.
  \begin{mylongform}
    \begin{longform}
      And the last equality is stated in the previous paragraph.  One way to
      see the first equality is because
      $$ \| Q_i^\perp f_{\pi(j)} \| \ge
      \| Q_i^\perp f_{\pi(j)} \| \| e_i \| \ge
      \la Q_i^\perp f_{\pi(j)} , e_i \ra = \la f_{\pi(j)}, Q_i^\perp e_i \ra
      = \la f_{\pi(j)}, e_i \ra$$ where the second to last equality is a
      property of orthogonal projections.
    \end{longform}
  \end{mylongform}
  Also, \eqref{eq:49} is immediately verifiable for $i=1$.

  Now, we can write
  \begin{mylongform}
    \begin{longform}
      (since  $e_j$ is defined by an orthogonal projection of $f_{\pi(j)}$),
    \end{longform}
  \end{mylongform}
  \begin{equation}
    \label{eq:48}
    f_{\pi(j)} = \lambda_{j,1} e_{1} + \cdots + \lambda_{j,j} e_j
  \end{equation}
  where $| \lambda_{j,i} | \le \diam(P,e_i) $
  by \eqref{eq:49}.
  For any $ x \in P_j = P_{j-1} + [0, f_{\pi(j)}]$, we can write
  \begin{equation}
    \label{eq:50}
    x = \sum_{\alpha=1}^{j-1} \eta_{j-1, \alpha} e_\alpha
    + \eta f_{\pi(j)}
  \end{equation}
  where $0 \le \eta \le 1$ and
  $| \eta_{j-1, \alpha} | \le (j-1) \diam(P, e_\alpha)$ by the induction hypothesis.
  Thus
  \eqref{eq:50} equals
  \begin{align*}
    \sum_{\alpha=1}^{j-1} \lp \eta_{j-1, \alpha} + \eta \lambda_{j,\alpha} \rp e_\alpha
    + \eta \lambda_{j,j} e_j,
  \end{align*}
  and $| \eta_{j-1, \alpha} + \eta \lambda_{j,\alpha} |\le (j-1) \diam(P,
  e_\alpha) + \diam(P,e_\alpha)$ for $\alpha \le j-1$ and $| \lambda_{j,j}|
  = \diam(P, e_j)$, so the induction hypothesis is shown.
\end{proof}
\begin{myold}
  \begin{old}
    \begin{mynotes}
      \begin{notes}
        The following is a previous version of the parallelotope embedding.  This lemma is almost true except  for the statement that $\gamma_\alpha \le k \diam(P, e_\alpha )$.
      \end{notes}
    \end{mynotes}
    \begin{lemma}
      \label{lem:embed-parallelotope-hyperrectangle-old}
      Let $P := \sum_{\alpha = 1}^k [0, f_\alpha]$ be a parallelotope in $\RR^k$
      where $f_1, \ldots, f_k$ are $k$ linearly independent vectors.  Then there
      exists an orthonormal basis of $\RR^k$, $e_1, \ldots, e_k \in \RR^k$, such
      that
      \begin{equation}
        \label{eq:38}
        \diam(P, e_\alpha) \le \| f_{\pi(\alpha)} \|
      \end{equation}
      and
      \begin{equation}
        \label{eq:29}
        \sum_{\alpha=1}^k [0, f_{\alpha}] \subseteq \sum_{\alpha=1}^k [0, \gamma_\alpha e_\alpha]
      \end{equation}
      where
      \begin{equation}
        \label{eq:41}
        0 \le \gamma_\alpha \le k \diam(P, e_\alpha) \le k \| f_{\pi(\alpha)} \|
      \end{equation}
      for some permutation $\pi$ of $\lb 1, \ldots, k \rb$.
    \end{lemma}
    \begin{proof}
      Let $\tilde{f}_\alpha := f_\alpha / \| f_\alpha \|$, $\alpha \in \lb 1, \ldots ,k \rb$, and
      let $\pi(1), \ldots, \pi(k)$ be a permutation of $(1, \ldots, k )$ such that
      \begin{equation}
        \label{eq:30}
        \| f_{\pi(1)} \|
        \ge \cdots \ge \| f_{\pi(k)} \|
      \end{equation}
      (if there are ties then any order is acceptable).  Define $e_1 :=
      \tilde{f}_{\pi(1)}$ and for $1 < j \le k$, let $\lambda_j w_j$ be the
      orthogonal projection of $\tilde{f}_{\pi(j)}$ onto $\SPAN \lb e_1, \ldots
      e_{j-1} \rb$, where $\| w_j \| = 1$ and $1 \ge \lambda_j \ge 0$.  Then let $e_j$ be the unit vector
      defined as being in the span of the orthogonal projection of $\tilde{f}_{\pi(j)}$ onto the
      complementary vector space, meaning that
      \begin{equation}
        \label{eq:31}
        \tilde{f}_{\pi(j)} =: \lambda_j w_j +
        %% \sqrt{\| \tilde{f}_{\pi(j)}\|^2 -      \lambda_j^2 } e_j.
        \sqrt{ 1 -      \lambda_j^2 } e_j.
      \end{equation}
      %% and $\| e_j \| = 1$.
      Note we can write
      $\sum_{\alpha = 1}^{j-1} \lambda_{j,\alpha} e_\alpha = \lambda_j w_j$ for
      real numbers $\lambda_{j,1},\ldots, \lambda_{j,j-1}$ satifying
      $\sum_{\alpha=1}^{j-1} \lambda_{j,\alpha}^2 = \lambda_j^2$.

      We will now
      prove by induction that
      \begin{equation}
        \label{eq:32}
        \sum_{\alpha=1}^j [0, f_{\pi(\alpha)}] \subseteq
        \sum_{\alpha=1}^j [0, \gamma_\alpha e_\alpha ]
      \end{equation}
      where $0 \le \gamma_\alpha \le (j - \alpha +1) \| f_{\pi(\alpha)} \| \le j
      \|f_{\pi(\alpha)}\|$, for $\alpha \in \lb 1, \ldots, j \rb$, $j \in \lb 1,
      \ldots, k\rb$; this will prove the lemma.
      The statement is immediate when $j=1$.  Thus let $1 < j \le k$ and assume
      the induction hypothesis holds for $j-1$.  Now for any
      $x \in \sum_{\alpha=1}^j [0, f_{\pi(\alpha)}] = \sum_{\alpha=1}^{j-1} [0,
      f_{\pi(\alpha)}] + [0, f_{\pi(j)}]$ we can write $x$ as
      \begin{equation}
        \label{eq:33}
        x = \eta_{x,1} e_x + \eta_{x,2} \tilde{f}_{\pi(j)}
      \end{equation}
      where $0 \le \eta_{x,2} \le
      \| f_{\pi(j)} \|$ and $\| e_x \| = 1$.
      %% and $e_x \in \SPAN \lb  \sum_{\alpha=1}^{j-1} [0 ,e_\alpha ] \rb$.
      Here we take $\eta_{x,1} e_x \in \sum_{\alpha=1}^{j-1} [0,
      f_{\pi(\alpha)}]$ and so by the induction hypothesis $\eta_{x,1} e_x =
      \sum_{\alpha=1}^{j-1} \gamma^*_{\alpha} e_\alpha$ where
      %% $0 \le  \gamma^*_\alpha \le \gamma_\alpha \le (j - \pi(\alpha) + 1) \|  f_{\pi(\alpha)} \|$.
      $0 \le  \gamma^*_\alpha \le \gamma_\alpha \le (j - \alpha ) \|  f_{\pi(\alpha)} \|$.
      \begin{mylongform}
        \begin{longform}
          (since $j-1 - \alpha + 1 = j-\alpha$.)
        \end{longform}
      \end{mylongform}
      By \eqref{eq:31} and \eqref{eq:33},
      we have
      \begin{equation}
        \label{eq:34}
        x = \sum_{\alpha=1}^{j-1} \lp \gamma_\alpha^* + \eta_{x,2} \lambda_{j,\alpha} \rp e_\alpha
        + \eta_{x,2} \sqrt{ 1- \lambda_j^2} e_{j}.
      \end{equation}
      Then, for $\alpha \in \lb 1, \ldots, j-1 \rb$, since  $0 \le \lambda_{j,\alpha} \le 1$,
      \begin{equation}
        \label{eq:35}
        0 \le \gamma_\alpha^* + \eta_{x,2} \lambda_{j,\alpha}
        \le  (j- \alpha) \| f_{\pi(\alpha)} \| + \| f_{\pi(j)} \|
        \le  (j- \alpha + 1) \| f_{\pi(\alpha)} \| ,
      \end{equation}
      by \eqref{eq:30}.  Since $\eta_{x,2} \sqrt{ 1- \lambda_j^2} \le \|
      f_{\pi(j)} \|$, we have shown \eqref{eq:29}.

      That \eqref{eq:38} holds follows immediately.  This is because
      $\diam(P, e_j)$ is given by the value of $\la x-y, e_j \ra$ such that $y$
      and $x $ are both in $\sum_{\alpha=1}^k [0, f_\alpha]$ and
      $\la x-y, e_j \ra$ is maximal.  Since
      $f_{\pi(j)} \notin \SPAN \lb f_{\pi(i)} \rb_{i \ne j }$, also
      $e_j\notin \SPAN \lb f_{\pi(i)} \rb_{i \ne j }$; thus it is immediate that
      $\la x-y, e_j \ra \le \la f_{\pi(j)}, e_j \ra \le \| f_{\pi(j)} \|$.
    \end{proof}
  \end{old}
\end{myold}

\begin{mynotes}
  \begin{notes}
    Following Lemma hasn't been double/triple checked a final time for
    correctness in statement and proof.
  \end{notes}
\end{mynotes}
To state the next lemma we make the following definitions.
For a set $D \subset \RR^d$ and a unit vector $v$, let
\begin{equation}
  \label{eq:24}
  \diam(D,v) := \sup_{\substack{x,y \in D \\  x-y \in \SPAN\lb v \rb}}
  \| x - y \|.
\end{equation}
% Let
% \begin{equation}
%   %%   \label{eq:defn:Lj4}
%   L_{\jb,4} :=
%   1 \wedge \inf_{\substack{\| v\|=1,\\v \in \SPAN \{ v_{j_1}, \ldots, v_{j_k} \} }}
%   \max_{\substack{i \in \lb 1, \ldots, k \rb }} \lv\la  v, v_{j_i} \ra \rv.
% \end{equation}
\begin{lemma}
  \label{lem:Rij-Lipschitz-diameter-bound}
  Let Assumption~\ref{assm:simple-polytope} hold. Let
  $k \in \lb 1, \ldots, d \rb$, $\ib \in I_k$, and $\jb \in J_k$.
  Then for any unit length $v \in \SPAN \lb v_{j_1}, \ldots, v_{j_k} \rb$,
  \begin{equation}
    \label{eq:36}
    \diam(G_{\ib,\jb}, v) \le
    \min_{\substack{\alpha \in \lb 1, \ldots, k \rb}}
    \frac{\delta_{i_\alpha+1}}{  \lv \la  v, v_{j_\alpha } \ra  \rv },
    \quad \text{ and }
  \end{equation}
  %% and
  \begin{equation}
    \label{eq:37}
    d(G_{\ib,\jb}, \bd D , v)
    \ge %% L_{\jb, 4}
    \max_{\substack{\alpha \in \lb 1, \ldots , k \rb }}
    \frac{ \delta_{i_\alpha}}{\lv \la - v, v_{j_\alpha} \ra \rv}.
  \end{equation}
  %% where $L_{\jb,4}$, defined in \eqref{eq:defn:Lj4}, is strictly positive.
\end{lemma}
\begin{proof}
  Fix $k \in \lb 1, \ldots, d \rb$, $\ib \in I_k$, $\jb \in J_k$. Fix $v \in
  \SPAN \lb v_{j_1}, \ldots, v_{j_k} \rb$ with $\| v \| = 1$, fix $\alpha \in
  \lb 1, \ldots, k \rb$.
  Since   $\diam(G_{\ib,\jb}, v) = \diam(G_{\ib,\jb}, -v)$, we  restrict
  attention to $v$ such that
  \begin{equation}
    \label{eq:v-direction}
    \la -v, v_{j_\alpha} \ra \ge 0.
  \end{equation}
  We will upper bound $\diam(G_{\ib,\jb},v)$.  Consider $x,y \in G_{\ib,\jb}$
  such that $x-y \in \SPAN \lb v \rb$.  In particular, assume without loss of
  generality that $ x- y = \lambda v $ for $\lambda \ge 0$.
  \begin{mynotes}
    \begin{notes}
      (Do we need to use that  $\diam(G_{\ib,\jb}, v) = \diam(G_{\ib,\jb}, -v)$ ?)
    \end{notes}
  \end{mynotes}
  \begin{mylongform}
    \begin{longform}
      By the definition of $G_{\ib, \jb}$, for $x \in G_{\ib,\jb}$, $p_{j_\alpha} + \delta_{i_\alpha} \le \la x, v_{j_\alpha} \ra
      \le p_{j_\alpha} + \delta_{i_\alpha + 1}$ for any $\alpha \in \lb 1, \ldots, k \rb$.
    \end{longform}
  \end{mylongform}
  Since $x,y \in G_{\ib,\jb}$, $\la y, v_{j_\alpha} \ra \le p_{j_\alpha} + \delta_{i_\alpha+1}$ and $ p_{j_\alpha}+ \delta_{i_\alpha} \le \la x , v_{j_\alpha} \ra$;
  thus
  $\delta_{i_\alpha  } - \delta_{i_\alpha+1} \le  p_{j_\alpha} + \delta_{i_\alpha} - \la y , v_{j_\alpha} \ra \le  \lambda \la v, v_{j_\alpha} \ra $.
  Since
  $\la -v, v_{j_\alpha} \ra \ge 0$,
  we have $ \lambda \le (\delta_{i_\alpha+1}-\delta_{i_\alpha}) / \la -v, v_{j_\alpha} \ra$.
  % This shows that for any $\alpha \in \lb 1,\ldots, k \rb$ such that \eqref{eq:v-direction} holds,
  % $\diam(G_{\ib,\jb}, v) \le
  % (\delta_{i_\alpha+1}-\delta_{i_\alpha}) / \la -v, v_{j_\alpha} \ra.$
  % Since $\diam(G_{\ib,\jb}, v) = \diam(G_{\ib,\jb}, -v)$, we have that
  % $\diam(G_{\ib,\jb}, v) \le
  % (\delta_{i_\alpha+1}-\delta_{i_\alpha}) / \la -v, v_{j_\alpha} \ra$
  % for at least one $\alpha \in \lb 1, \ldots, k \rb$.
  Thus we see $\diam(G_{\ib,\jb} , -v) = \diam(G_{\ib,\jb} , v) \le (\delta_{i_\alpha+1}-\delta_{i_\alpha}) / | \la v, v_{j_\alpha} \ra |$.
  This holds for all $\alpha \in \lb 1, \ldots, k \rb$, % such that
  % \eqref{eq:v-direction} holds, so
  so for any $\tilde{v} \in \SPAN \lb v_{j_1}, \ldots, v_{j_k} \rb$ (where we do not assume
  $\la \tilde{v}, v_{j_\alpha} \ra \ge 0$)
  \begin{mylongform}
    \begin{longform}
      we have shown that $\diam(G_{\ib,\jb}, -v) = \diam(G_{\ib,\jb}, v) \le (\delta_{i_\alpha+1} - \delta_{i_\alpha}) / \la -v , v_{j_\alpha} \ra$.  Thus, dropping the assumption \eqref{eq:v-direction}, we see that $\diam(G_{\ib,\jb},v) \le (\delta_{i_\alpha+1}-\delta_{i_\alpha}) / | \la v, v_{j_\alpha} \ra |$.
    \end{longform}
  \end{mylongform}
  \begin{equation}
    %% \label{eq:36}
    \diam(G_{\ib,\jb}, \tilde{v}) \le
    %% \max_{\alpha \in \lb 1, \ldots, k \rb}
    %% \frac{\delta_{i_\alpha+1} - \delta_{i_\alpha}}{ | \la v, v_{j_\alpha } \ra | }
    \min_{\substack{\alpha \in \lb 1, \ldots, k \rb }}
    \frac{\delta_{i_\alpha+1} - \delta_{i_\alpha}}{ \lv \la  \tilde v, v_{j_\alpha } \ra  \rv }.
  \end{equation}
  % (\delta_{i_\alpha+1}-\delta_{i_\alpha}) / \la -v, v_{j_\alpha} \ra
  % \quad   \text{ for at least one } \quad
  % \alpha \in \lb 1, \ldots, k \rb.

  Next we take $v$ as above and now lower bound $d(G_{\ib,\jb}, \bd D, v)$. %
  Fix $\alpha \in \lb 1, \ldots , k \rb$.
  We begin by considering $d(G_{\ib,\jb}, H_{j_{\alpha}}, v)$. %
  \begin{mynotes}
    \begin{notes}
      Do we need: ``(Note that if \eqref{eq:v-direction} does not hold then $x + \lambda v \notin H_{j_\alpha}$ for any $\lambda \ge 0$, so we may indeed assume that \eqref{eq:v-direction} holds.)'' ?
    \end{notes}
  \end{mynotes}
  Again, since $d(G_{\ib,\jb}, \bd D, v) = d(G_{\ib,\jb}, \bd D, -v)$, we can and do assume \eqref{eq:v-direction} holds.  Fix $x \in G_{\ib, \jb}$.  Consider $\lambda > 0$ such that $x + \lambda v \in H_{j_\alpha}$.  Then
  %% $\lambda \la v, v_{j_\alpha} \ra = p_{j_\alpha} - \la x, v_{j_\alpha} \ra$,
  $\lambda \la v, v_{j_\alpha} \ra = p_{j_\alpha} - \la x, v_{j_\alpha} \ra \le
  - \delta_{i_\alpha}$ since $\la x, v_{j_\alpha} \ra \ge \delta_{i_\alpha} +
  p_{j_\alpha}$, and so $\lambda \ge \delta_{i_\alpha} / \la - v, v_{j_\alpha}
  \ra$.  This shows for any $\beta \in \lb 1, \ldots , k \rb$ that
  \begin{equation}
    \label{eq:38}
    d(G_{\ib,\jb}, \cup_{\alpha=1}^k F_{j_\alpha}, v) \ge
    \min_{\substack{\alpha \in \lb 1, \ldots , k \rb }} \frac{ \delta_{i_\alpha}}{\lv \la  v, v_{j_\alpha} \ra \rv}.
  \end{equation}
  To complete the proof, note  for $j \in \lb 1, \ldots, N \rb \setminus \jb$, that
  $$d(x, F_j , v) \ge u
  = u \min_{\alpha=1,\ldots,k}  \lv \la v, v_{j_\alpha} \ra \rv^{-1} $$
  which is larger than the right hand side of \eqref{eq:38}.
\end{proof}

\begin{myold}
  \begin{old}

    \subsubsection{Incomplete: Argument showing $G_{\ib,\jb}$ is far from facets except $F_{j_1}, \ldots, F_{j_k}$}

    Current idea is to consider $F_j$ that are adjacent to $G_{\jb}$ and $F_j$ that is not adjacent separately.

    {\bf Step 1} Assume $F_j$ is not adjacent to $G_{\jb}$.  Want to show
    \begin{equation*}
      d(G_{\jb}, F_j) - u \ge \min_{\alpha \in \lb 1, \ldots, k \rb} \frac{u}{\la v_{\jb}^*, v_{j_\alpha} \ra }
    \end{equation*}
    where $v_{\jb}^*$ is the ``slowest direction of approach to the boundary $\bd D$ near $G_{\jb}$ from $G_{\ib,\jb}$''.

    {\bf Step 2:} Let $\jb \in J_k$ and $j \in \lb 1,\ldots, N \rb \setminus
    \jb$.  Assume $F_j$ is adjacent to $G_{\jb}$.  Somehow need to use that $v_j$
    does not lie in $\SPAN \lb v_{j_1}, \ldots, v_{j_k} \rb$, because $H_j$
    intersects $G_{\jb}$.  If $v \in \RR^d$, $\|v\|=1$, is such taht
    $d(G_{\ib,\jb}, F_j, v)$ is small then for facets of $G_{\jb}$ near $F_j$,
    $\la v, v_j \ra$ must be almost $0$ so $\la v, v_i \ra$ must be large for
    other facets $F_i$ of $G_{\jb}$, some $i \ne j$.

  \end{old}
\end{myold}

\begin{myold}
  \begin{old}

    It remains to show that for all $\ib \in I_k$, $ d(G_{\ib,\jb}, (\bd
    D)\setminus \cup_{\alpha=1}^k F_{j_\alpha}, v)$ is larger than the right
    hand side of \eqref{eq:38}, up to the constant $L_{\jb,4}$, and that
    $L_{\jb,4} > 0$.
    % since $\min_{\alpha \in \lb 1, \ldots k \rb} \frac{u}{d(G_{\ib,\jb} , F_{j_\alpha}, v)}$ governs the Lipschitz bound for functions on $G_{\ib,\jb}$.
    %% Fix $k \in \lb 1, \ldots, d \rb$.
    Note that $v \mapsto \la v, v_{j_i} \ra$ is continuous for all $i \in \lb 1, \ldots, k \rb$.
    Thus $v \mapsto \max_{i \in \lb 1, \ldots, k \rb } \la v, v_{j_i} \ra$ is continuous, and it is bounded away from $0$ for all unit vectors $v \in \SPAN \lb v_{j_1}, \ldots, v_{j_k} \rb$.
    \begin{mylongform}
      \begin{longform}
        (almost by definition of ``span'', although $v_{j_i}$ are not orthonormal.)
      \end{longform}
    \end{mylongform}
    Thus $L_{\jb,4} > 0$.  Then for $j \in \lb 1, \ldots, N \rb \setminus \jb$,
    $d(x, F_j , v) \ge u
    \ge L_{\jb, 4} \frac{u}{L_{\jb,4}},$
    % \begin{equation}
    %   %%   \label{eq:37}
    %   d(x, F_j , v) \ge u
    %   \ge L_{\jb, 4} \frac{u}{L_{\jb,4}},
    % \end{equation}
    \begin{mynotes}
      \begin{notes}
        Want to say that the above is
        \begin{equation*}
          \ge L_{\jb, 4} \min_{i \in \lb 1, \ldots, k \rb}d(x, F_{j_i}, v),
        \end{equation*}
        but this is not quite true (since the right hand side of the inequality
        \eqref{eq:26} could be strictly larger than the left hand side).

        Need clarify the last inequality.

        Note this approach does all facets at once; doesn't break into incident
        ones and other ones.
      \end{notes}
    \end{mynotes}
    which shows that $d(x, F_j,v ) $ is no smaller than the right hand side of \eqref{eq:38} for any unit vector $v \in \SPAN \lb v_{j_1}, \ldots, v_{j_k} \rb$, up to a factor of $L_{\jb,4}$.
  \end{old}
\end{myold}

% \begin{lemma}
%   \label{lem:Rij-Lipschitz-diameter-bound}
%   Let Assumption~\ref{assm:simple-polytope} hold, fix
%   $k \in \lb 1, \ldots, d \rb$, $\ib \in I_k$, $\jb \in J_k$, and let
%   $e_{\ib,\jb, \alpha} \equiv e_{\alpha}$, $\alpha \in \lb 1, \ldots, d \rb$,
%   be given by Proposition~\ref{prop:basis-lipschitz}.  Then let $V_1$ be
%   $\SPAN \lb e_1, \ldots,e_k \rb$ and $V_2, \ldots, V_{d-k+1}$ be
%   $\SPAN \lb e_{k+1} \rb, \ldots, \SPAN \lb e_{d} \rb$, respectively.
%   Let $R_1 := \max_{\alpha \in \lb 1, \ldots, k \rb}
%   2(\delta_{i_\alpha+1}-\delta_{i_\alpha}) / \delta_{i_\alpha}$, %
%   and %
%   let
%   $    \ve{R}:= \lp R_1,
%   \frac{8 L_{k,1} \rho_{\jb, k+1}}{u},
%   \ldots,
%   \frac{8 L_{k,1} \rho_{\jb, d}}{u} \rp$
%   %   \begin{equation*}
%   %     \ve{R}:= \lp R_1,
%   %     \frac{8 L_{k,1} \rho_{\jb, k+1}}{u},
%   %     \ldots,
%   %     \frac{8 L_{k,1} \rho_{\jb, d}}{u} \rp .
%   %   \end{equation*}
%   where $\rho_{\jb,\alpha} := w(G_{\jb},e_\alpha)$.
%   Then
%   \begin{equation}
%     \label{eq:C-inclusion-lemma}
%     \cC[D,1]|_{G_{\ib,\jb}}
%     \subset \cC[G_{\ib,\jb}, 1, \ve{R}, V_1, \ldots, V_{d-k+1} ].
%   \end{equation}
% \end{lemma}

\section{Further applications}
\label{sec:further-applications}

We now consider further entropy bounds that rely on the above ideas, results, or their proofs.
In Subsection~\ref{sec:m-monotone} we consider so-called univariate and multivariate {\it $m$-monotone functions}.
In Subsection~\ref{sec:entr-level-set} we briefly consider estimation of level sets of convex functions and the question of adaptation to polytopal level sets.  Further discussion is given at the beginning of the two subsections.

\subsection{Bracketing entropy of $\mm$-monotone function classes}
\label{sec:m-monotone}

The shape constraint of {\it $\mm$-monotonicity}, for $\mm \in \lb 0,1,2, \ldots \rb$, is useful because it serves, roughly, as a higher order convexity restriction (when $\mm > 2$).  An $\mm$-monotone function $f$ satisfies further convexity restrictions % (see Definition~\ref{defn:k-monotone-d=1})
besides simply convexity of $f$, and so in many settings is even nicer to work with than convex functions are.
When $d=1$, $\mm$-monotonicity is defined as follows (by, e.g., \cite{Williamson:1955uy,Williamson:2010wy}). %% and the references therein for some of the important properties of (univariate) $\mm$-monotone  functions.
\begin{definition}
  \label{defn:k-monotone-d=1}
  A function $f \colon [0,\infty) \to \RR$ is $0$-monotone if it is
  nonnegative, $1$-monotone if it is nonnegative and nonincreasing, and
  $2$-monotone if it is nonnegative, nonincreasing, and convex; $f$ is
  $\mm$-monotone for $\mm \ge 2$ if $(-1)^l f^{(l)}$ exists and is nonnegative,
  nonincreasing, and convex for $l= 0, 1, \ldots, \mm-2$.
\end{definition}
(Here $f^{(l)}$ is the $l$th derivative,  with $f^{(0)} \equiv f$.)
When $\mm=1$ or $2$, a large body of statistical work and results exists (some of which also allows the case where $d > 1$), some of which is referenced in the introduction of this paper.  
Statistical properties of two (nonparametric) estimators of a (univariate) $\mm$-monotone density, for general $\mm$ (and $d = 1$), were introduced and studied in
\cite{Balabdaoui:2007jj,Balabdaoui:2010do}; see also \cite{Gao:2009hf}. For instance, in statistical settings, if a function being nonparametrically estimated is known to be $\mm$-monotone, then it can be estimated at a faster rate of convergence than if it were just convex \citep{Balabdaoui:2007jj,Balabdaoui:2010do,Gao:2009hf}.  As discussed in Section~\ref{sec:intr}, % (with references also given there)
in the univariate setting $\mm$-monotone functions have been studied, but we do not even know of a formal definition of $\mm$-monotonicity in the multivariate case.
In fact, as discussed below, there are several possible definitions one could use for $\mm$-monotonicity that generalize the univariate definition.  We present one definition which has the benefit of being amenable to finding bracketing upper bounds.  We then show that our proof for the bracketing upper bound for convex functions, Theorem~\ref{thm:main-thm} (and Corollary~\ref{cor:general-convex-polytope}), applies to yield a bracketing bound for classes of $\mm$-monotone functions.  This is the main result of Subsection~\ref{sec:m-monotone} and 
is given in Theorem~\ref{thm:main-extension-m-mono}.  Recall that the proof of Theorem~\ref{thm:main-thm} relies on Theorem~\ref{thm:GS-extension}.  There is no known or immediate analog for Theorem~\ref{thm:GS-extension} in the general $\mm$-monotone case.  Thus we prove an analog, Theorem~\ref{conj:m-mono-sup-bracket} (under a certain technical restriction, given below in Definition~\ref{defn:m-mono} Part~\ref{item:3}), which we use to prove
Theorem~\ref{thm:main-extension-m-mono}.
% Unfortunately, it is unknown as of yet whether Property~\ref{property:bracketing-FTOC} holds; thus, there is a gap in our proof.  However, we do nonetheless show in our proof of Theorem~\ref{thm:main-extension-m-mono} that the techniques used to prove Theorem~\ref{thm:main-thm} can be applied to any class for which the conclusion (i.e., \eqref{eq:conj-k-mono-multivariate}) of Theorem~\ref{conj:m-mono-sup-bracket} holds.  In Subsubsection~\ref{sec:mm-mono-d=1} we show, in Theorem~\ref{thm:k-mono-entropy}, that in the $d=1$ case the conclusion of Theorem~\ref{conj:m-mono-sup-bracket} indeed holds, potentially providing some evidence for this result in the $d > 1$ case.  Theorem~\ref{thm:k-mono-entropy} is also of interest in its own right.

As mentioned above, there are many possible methods
for defining a class of $\mm$-monotone functions in the multivariate setting.
This is perhaps illustrated by the fact that there are many competing
definitions of monotonicity (i.e., $1$-monotonicity) in dimension $d \ge 2$.
One can define a function $f$ to be multivariate monotone (or unimodal) via {\it star monotonicity}, meaning that along all rays emanating from a special fixed point, $f$ is monotone.  This, for instance, is a suggested definition used in the related context of {hyperbolic monotonicity}  by \cite[p. 600]{Cule:2010dc}.
\cite{Gao:2007kw} consider entropy bounds for {\it block-decreasing}
densities.  Even ``block decreasing'' can be defined in multiple ways:
\cite{Gao:2007kw} and
\cite{Dharmadhikari:1988um} differ in their definitions of this term.
\begin{mylongform}
  \begin{longform}
    (using a slightly different definition than that on p.\ 42 of
    \cite{Dharmadhikari:1988um}): \cite{Gao:2007kw} define a function $f$ to
    be block-decreasing on $[0,\infty)^d$ if it is nonincreasing along any
    line parallel to an axis.
  \end{longform}
\end{mylongform}
Very recent statistical work has considered {\it entire monotonicity} in the regression setting \citep{Fang:2019vb}.
See \cite[chapter
2]{Dharmadhikari:1988um} for several other possible definitions of unimodality (they focus on unimodality rather than monotonicity, but the two settings are very similar).
Many of the above definitions are not amenable to accurate entropy computations, at least with the tools we are aware of  at present.  
In Subsubsection~\ref{sec:mult-mm-monot} we present a definition of multivariate $\mm$-monotonicity that is amenable to entropy calculations; the results we get suggest that the entropies are of the ``right'' order of magnitude ($\epsilon^{-d/m}$ as $\epsilon \searrow 0$) that we might expect a priori.  This suggests that our definition is indeed a reasonable one.
In Subsubsection~\ref{sec:mm-mono-d=1} we return briefly to the particular $d=1$ case.

\begin{mylongform}
  \begin{longform}

    star unimodal

    block unimodal

    linear unimodal

    convex unimodal

    central convex unimodal

    monotone unimodal

  \end{longform}
\end{mylongform}

% This definition motivates our definition of
% $\mm$-monotonicity for $\mm \ge 3$ when $d \ge 2$, which is as follows.

% \begin{definition}
%   \label{defn:k-monotone-d>=1}
%   For a convex set $D_0 \subset [0,\infty)^d$, $0 < B<\infty$, and
%   $\mm \in \lb 2, 3, \ldots \rb$, let $\cC^{\mm}(D_0, B)$ be the class of
%   $\mm$-monotone functions $f \colon D_0 \to [0,B]$, satisfying the following.  For
%   $f \in \cC^{\mm}(D_0, B)$, assume $0 \le f \le B$,\footnote{Note that we
%   restrict $f \in \cC^k(D_0,B)$ to be positive, so $\cC^2(D_0, B)$ is not
%   quite identical to $\cC(D_0,B)$, where $-B \le f \le B$ is allowed.}  and
%   assume $f$ is $(\mm-2)$-times continuously differentiable on $D_0$ (if $\mm=2$
%   then assume $f$ is continuous on $D_0$).  For any
%   $v = (0,\ldots, 0,1,0,\ldots, 0) \in \RR^d$ and $t \ge 0$, assume that
%   $g(t):= f(x + t v)$ is $\mm$-monotone (according to
%   Definition~\ref{defn:k-monotone-d=1}) for any $x \in \bd([0,\infty)^d)$.
%   Further, for $i \in \lb 0, 2, 4, \ldots \mm-2 \rb $, for any $x \in D_0$,
%   $v \in \RR^d$, and for $t \in \RR$ such that $x + tv \in D_0$, assume that
%   $g(t) := f^{(i)}(x + t v )$ is convex.
% \end{definition}

\subsubsection{Multivariate $\mm$-monotone functions}
\label{sec:mult-mm-monot}

Fix the dimension $d \ge 1$.  We will use so-called $d$-dimensional multi-index notation: a vector of nonnegative integers $\ib = (i_1, \ldots, i_d)$ is a multi-index.  We let $|\ib| := i_1 + \cdots + i_d$.  Let $\II_\mm$ be the set of multi-indices $\ib$ with $|\ib| = \mm$.\newnot{symbol:IIk} For two vectors $K=(K_1, \ldots, K_j), $ $L = (L_1, \ldots, L_j) \in \RR^j$ with $L_i > 0$, we let $L^{K} := L_1^{K_1}\cdots L_j^{K_j}$.  For any function $f$, we let $f^{(\ib)}$ be $ \frac{\partial^{|\ib|}}{\partial x_{i_1} \cdots \partial x_{i_d} } f$, whenever this is well-defined.  We let $\pderiv{e_i}f(x)$ denote $\deriv{t}\vert_{t=0}f(x + t e_i)$, and for an orthonormal basis ${\bs e} := \lb e_1, \ldots, e_d \rb$ and $\jb \in \II_i$, we let $f^{(\jb)}_{\bs e} := \frac{\partial^{|\jb|}}{\partial e_1^{j_1} \cdots \partial e_d^{j_d}}f$.

Our $\mm$-monotone classes are based on any subclass $\cC^*$ of convex functions having a certain needed property.
%% We defien $\mm$-monotonicity has a ``directionality'' to 
The idea of multivariate $\mm$-monotonicity involves convexity of partial derivatives in certain directions; since such convexity is not preserved by rotation (see Remark~\ref{rem:2}),
%% in our Definition~\ref{defn:m-mono}
we will define  $\mm$-monotonicity to be {\it relative} to a domain $D_0$.  In the case where $D_0$ is a hyperrectangle, the definition simplifies (see Remark~\ref{rem:3}).  

\begin{definition}
  \label{defn:m-mono}

  \

  \begin{enumerate}[label=\Alph*.,ref=\Alph*,leftmargin=*] %%\label=\Alph*.,
  \item
    \label{item:3}
    For a convex set $G \subset \prod_{i=1}^d [a_i,b_i]$, let $\cC^*(G)$ be any subclass of $\cC(G)$ such that for all $B, \Gb$,  $\cC^*(G, B, \Gb) := \cC^*(G) \cap  \cC(G,B,\Gb)$ satisfies the following $L_\infty$ cover property.  For all $\epsilon > 0$, there exists a
    $L_\infty$-$\epsilon$-cover of log cardinality no larger than $c \epsilon^{-d/2} ( B + \sum_{i=1}^d \Gamma_i(b_i-a_i))^{d/2}$.  The cover must satisfy the following.  For any $f \in \cC^*(G,B,\Gb)$, any $x, y \in G$, and any rotation matrix $A \in \RR^{d \times d}$ ($\det A = 1$ and $A' = A^{-1}$), %%  $g(y) := f(Ay)$ satisfies
    we have
    \begin{equation}
      \label{eq:82}
      \pderiv{x_i} \int_0^1 |  h(A(x + t(y-x))) -  f(A(x+t(y-x)))| dt \le \epsilon 
    \end{equation}
    for all $i \in \lb 1, \ldots, d \rb$, where $h$ is the $L_\infty$-closest element of the cover to $f$.

  \item \label{item:4}
    Let $D_0 \subset [0,\infty)^d$ be a convex polytope, let $0 < B < \infty$, and let $\mm \ge 2$ be an integer. We define the class of {\it $m$-monotone functions relative to $D_0$}, denoted $\cC^{\mm}(D_0, B)$, to be the set of all $f \in \cC(D_0,B)$ satisfying the following.
    %% $f \colon D_0 \to [0,B]$ such that 
    For each vertex of $D_0$, 
    for all ($d!$ possible) orthornomal bases $\bs{e}$ given by % 
    Proposition~\ref{prop:basis-lipschitz}, %
    %% either   $f^{(\ib)}_{\bs e}$  or $(-1)^{|\ib|} f^{(\ib)}_{\bs e}$
    either $f^{(\ib)}_{\bs e}$ or $- f^{(\ib)}_{\bs e}$
    lies in $\cC^*(D_0)$
    for all $\ib \in \II_{j}$, $0 \le j \le m-2$.
  \end{enumerate}
\end{definition}

% \begin{definition}
%   \label{defn:m-mono}
%   Let $D_0 \subset \RR^d$ be a convex %% not nec compact
%   set, let $0 < B < \infty$, and let $\mm \ge 2$ be an integer.
%   We define  $\cC^\mm_d(D_0, B)$ to be the class of functions
%   $f \colon D_0 \to [0,B]$
%   such that $(-1)^{|\ib|}f^{(\ib)}$ is convex for all
%   $\ib \in \II_{j}$, $0 \le j \le m-2$.
% \end{definition}

% \begin{remark}
%   \label{rem:1}
%   Note that because sums of convex functions are convex and convex functions of linear functions are convex, if $f \in \cC^{\mm}(D_0,B)$, then we also have $g(x) := f(A x) \in \cC^{\mm}(A^{-1} D_0,B)$ for any invertible linear function $A$.
% \end{remark}

\begin{remark}
  The fundamental idea of $\mm$-monotonicity is given by Part~\ref{item:4} of
  Definition~\ref{defn:m-mono}.  The technical requirement \eqref{eq:82} of
  Part~\ref{item:3} is needed for the proof of our bracketing bound.  It is
  not clear at this point if it can be removed or not.  When $y = x+ x_j$ and
  $A$ is the identity, for continuously differentiable $h$ and $f$, the
  property holds automatically by the Fundamental Theorem of Calculus.
  Ideally we would like to replace $\cC^*$ by the full class $\cC$ (which is
  possible when $d=1$, see Remark~\ref{rem:4} and  Subsubsection~\ref{sec:mm-mono-d=1}).  We leave an investigation of whether this is possible for future work.
\end{remark}

\begin{remark}
  \label{rem:2}   \label{rem:3}
  The property of $\mm$-monotonicity is preserved by translation and
  rescaling.  However, while rotations of convex functions are still convex,
  if $f^{(\jb)}$ is convex, $|\jb| > 2$, then after rotation (by, say, a matrix $A$), $g^{(\jb)} := (f(A \cdot))^{(\jb)}$ is not necessarily convex.
  This is because a mixed partial derivative of a rotation is a linear
  combination of mixed partial deriatives; if some of the linear coefficients
  are negative, then the resulting function may no longer be convex.  This is
  why our definition is relative to the domain, $D_0$ (so an $\mm$-monotone
  function after a rotation will be $\mm$-monotone relative to the rotated domain).
  Note that when $D_0$ is a hyperrectangle, for $f$ to be $\mm$-monotone relative to
  $D_0$, it is sufficient that $f^{(\ib)}$ be convex for all $\ib \in \II_i$,
  $0 \le i \le \mm-2$.
\end{remark}

\begin{remark}
  Our definition of multivariate $\mm$-monotonicity captures a higher order
  type of convexity.  It does not enforce the alternating sign condition
  ``$(-1)^jf^j \ge 0$'' that is generally required in the univariate case.
  That is,
  %% when $|\ib|$ is odd,
  we allow $f_{\ve e}^{(\ib)}$ to be either
  convex or to be concave.  In the univariate case, there is only one
  direction in which one is computing a derivative.  In the multivariate
  case, since we consider many different bases $\ve e$, and may have
  instances where a vector $e_i$ and its opposite $-e_i$ are contained in two
  different bases, thus potentially switching the sign of
  $f_{\ve e}^{(\ib)}$, we must allow $f_{\ve e}^{(\ib)}$ to be either convex
  or concave.  Further restrictions to our definition could be enforced if
  needed in a specific application; our entropy bounds would of course still
  apply.
\end{remark}

\begin{example}
  \label{ex:1}
  Let $a \in \RR^d$ have nonnegative components and let $b > 0$; let
  $( \cdot )_+ := \max(\cdot ,0)$.  The function
%% $f(x) :=  \lp 1 -  (x_1 + \cdots + x_d) \rp_+^{\mm-1} \one_{\{ 0 \le x_i < \infty, \text{ for all } i \}}$
  $f(x) := b \lp 1 - \la a, x \ra  \rp_+^{\mm-1}
  \one_{[0,\infty)^d}(x)$ is a primary example
  of an $\mm$-monotone function (i.e., satisfies Part~\ref{item:4} of Definition~\ref{defn:m-mono}).  For any $\mm \ge 1$, the function
  $f(x) := e^{- b  \la a , x \ra } \one_{[0,\infty)^d}(x)$ is
  $\mm$-monotone.  Both are $\mm$-monotone relative to any
  hyperrectangle.
  Further examples of  $\mm$-monotone functions can be generated by taking linear combinations.
\end{example}

\begin{example}
  The functions in Example~\ref{ex:1} are also $\mm$-monotone relative to polytopes beyond hyperrectangles.  For simplicity, let $f(x) := (1 - (x_1 + x_2))_+^{m-1} \one_{[0,\infty)^2}(x)$.
  Note that if we let $g(x) = (1 - (a_1 x_1 + a_2 x_2))_+^{m-1} \one_{[0,\infty)^2}(x)$,
  then for $k \le m-1,$
  \begin{equation}
    \label{eq:83}
    \frac{ \partial^k}{ \partial x_1^j \partial x_2^{k-j}} g(x)
    = \frac{ (m-1)! }{(m-1-k)!} (-1)^k a_1^j a_2^{k-j}
    ( 1 - a_1 x_1 - a_2 x_2)_+^{(m-1-k)}    
  \end{equation}
  Thus $(-1)^k     \frac{ \partial^k}{ \partial x_1^j \partial x_2^{k-j}} g(x)$ is convex if $a_i >  0$, $i=1,2$, and $k \le m-2$.

  Let $e_1= (1,0)'$, $e_2 = (0,1)'$ (where $'$ denotes transpose), and let the basis $\ve d : = \lb d_1, d_2 \rb$ be defined by $d_i := A e_i $ where 
  \begin{equation*}
    A =
    \begin{pmatrix}
      \cos \theta & - \sin \theta \\
      \sin \theta & \cos \theta 
    \end{pmatrix}
  \end{equation*}
  is the matrix giving rotation by angle $\theta$.  If $g(y) := f(Ay')$ then $g^{(\jb)} = f_{\ve d}^{(\jb)}$.  Thus, by \eqref{eq:83}, $f$ is $\mm$-monotone relative to any polytope since the partial derivatives are always either convex or concave.

  Furthermore, as long as $\cos \theta + \sin \theta \ge 0$ and
  $\cos \theta - \sin \theta \ge 0$, i.e., as long as
  $ - \pi / 4 \le \theta \le \pi / 4$, $f^{(\jb)}_{\ve d}$ is convex (for
  $\jb \in \II_{j}$, $0 \le j \le m-2$). Thus, for instance, if we take
  $ - \pi /4 \le \theta \le 0$, and let
  $H_{\theta} := \lb x : \la d_2 , x \ra \le \cos \theta \rb$ (the rotation
  of the line $\lb x_2 = 1 \rb$ by angle $\theta$ about the point $(0,1)$),
  %% then $f$   is $\mm$-monotone relative to $[0,1]^2 \cap H_{\theta}$
  then if we let $D_0 := [0,1]^2 \cap H_{\theta}$, then  $f^{(\jb)}_{\ve d}$ is convex where
  $\ve d$ is one basis given by Proposition 3.2 at the upper right vertex of $D_0$.  (The other basis given by Proposition 3.2 is the standard basis $\ve e = \lb e_1, e_2 \rb$.)
\end{example}

\noindent We now define classes of Lipschitz bounded $\mm$-monotone functions, which are needed for us to generalize Theorem~\ref{thm:GS-extension}.

\begin{definition}
  \label{defn:m-mono-Lipschitz}
  Let $D_0 \subset [0,\infty)^d$ be a convex polytope, let $0 < B < \infty$, and let $\mm \ge 2$ be an integer.  For all
  vertices $v$ of $D_0$, for all orthonormal bases $\ve e$ given by
  Proposition~\ref{prop:basis-lipschitz}, and all $\ib \in \II_{m-1}$, let
  $0 < \Gamma_{\ve{e},\ib} < \infty$, and let $\Gb$ be the set of all such
  $ \Gamma_{\ve{e}, \ib}$.  Let $\cC^{\mm}_d(D_0, B, \Gb)$ be the class of
  functions $f \in \cC^{\mm}_d(D_0,B)$ such that for all
  $\ib \in \II_{\mm-2}$ and orthonormal bases $\ve e$ (given by
  Proposition~\ref{prop:basis-lipschitz} for any vertex of $D_0$) the
  function $f_{\ve e}^{(\ib)}$ is Lipschitz in the following sense.  For each
  $\ib \in \II_{\mm-1}$ and $\jb \in \II_{\mm-2}$, $\ib - \jb$ is $1$ in a
  single coordinate, which we denote $\alpha_{\ib,\jb}$.  Then, for {\it any}
  $\jb \in \II_{\mm-2}$, assume
  for all $x, x+\lambda e_{\alpha_{\ib,\jb}} \in D_0$ that
  $| f^{(\jb)}(x+\lambda e_{\alpha_{\ib,\jb}}) - f^{(\jb)}(x)
  | \le \Gamma_{{\ve e}, \ib} | \lambda |.$
\end{definition}

  % Now for all
  % $\ib \in \II_{\mm-1}$, let $0 < \Gamma_{\ib} < \infty$ and let
  % $\Gb := \lb \Gamma_{\ib} : \ib \in \II_\mm \rb$.  Let
  % $\cC^\mm_d(D_0, B, \Gb)$ be the class of functions
  % $f \in \cC^{\mm}_d(D_0,B)$ such that for all $\ib \in \II_{\mm-2}$, the function
  % $f^{(\ib)}$ is Lipschitz in the following sense.  For each
  % $\ib \in \II_{\mm-1}$ and $\jb \in \II_{\mm-2}$, $\ib - \jb$ is $1$ in a
  % single coordinate, which we denote $\alpha_{\ib,\jb}$.  Then, for {\it any}
  % $\jb \in \II_{\mm-2}$, assume for all $x,y \in D_0$ that
  % $| f^{(\jb)}(x) - f^{(\jb)}(y)| \le \Gamma_{\ib} | x_{\alpha_{\ib,\jb}} -
  % y_{\alpha_{\ib,\jb}} |.$

\begin{remark}
  \label{rem:m-mono-deriv}
  Let $\cC^{\mm, \circ}_d(D_0, B, \Gb)$ denote the subset of 
  $f \in \cC^{\mm}_d(D_0,B, \Gb)$ that are
  $(m-1)$-times continuously differentiable.
  %% infinitely differentiable.
  %%$m$-times continuously differentiable.
  Note that for such $f$, we have
  $| f^{(\ib)} | \le \Gamma_{\ib}$.
  Let $C^{m-1}$ denote the class of $(m-1)$-times continuously differentiable functions.
  Then $C^{1} \cap  \cC_d(D_0, B, \Gb)$ is $L_\infty$-dense in $ \cC_d(D_0, B, \Gb)$,
  so $\cC^{\mm, \circ}_d(D_0,B,\Gb) := C^{m-1} \cap \cC_d^{\mm}(D_0, B, \Gb)$
  is dense in $\cC_d^{\mm}(D_0, B, \Gb)$  (\cite{Czarnecki:2006dv}; see also Lemma~1.1 of \cite{Gao:2008wf}).
  This means that any $L_\infty$ (bracketing or metric) entropy bound for $\cC^{\mm, \circ}_d(D_0, B, \Gb)$ implies the same bound on $\cC^{\mm}_d(D_0, B, \Gb)$.  We will use this in our proofs. 
\end{remark}

% \begin{definition}
%   \label{rem:m-mono-deriv}
%   For $\mm \ge 2$, let $\cC^\mm_d(D_0, B, \Gb)$ be the class of functions $g \colon D_0 \to [0,B]$ such that $f(x) := g(Ax)$, for any rotation matrix $A$ (det $A =1$) is such that $(-1)^{m-2}f^{(\ib)}$ is convex for all $\ib \in \II_{\mm-2}$, and further $f^{(\ib)}$ is well defined and continuous for all $\ib \in \II_{\mm-1}$ and is bounded by $\Gamma_{\ib}$: 
% \end{definition}

The following lemma provides uniform bounds on  the smoothness of  %
the % spacing ? 
functions in $\cC^m_d(D,1)|_{G_{\ib,\jb}}$, which, together with Theorem~\ref{conj:m-mono-sup-bracket} below,
allows us to later prove Lemma~\ref{lem:1} and thus prove the main Theorem~\ref{thm:main-extension-m-mono}.
% To state the lemma, we let $\pderiv{e_i}f(x)$ denote $\deriv{t}\vert_{t=0}f(x + t e_i)$,
% and for an orthonormal basis ${\bs e} := \lb e_1, \ldots, e_d \rb$ and $\jb \in \II_i$,  we let $f^{(\jb)}_{\bs e} := \frac{\partial^{|\jb|}}{\partial e_1^{j_1} \cdots \partial e_d^{j_d}}f$.

\begin{lemma}
  \label{lem:m-mono-deriv-bound}
  Let $D_0$ be a convex polytope, let $\cC^{\mm,\circ}_d(D_0, 1)$ be as defined in Remark~\ref{rem:m-mono-deriv} for $m \ge 2$, and let $f \in \cC^{\mm, \circ}_d(D_0,1)$.

  \begin{enumerate}[label=\Alph*.,ref=\Alph*,leftmargin=*] %%\label=\Alph*., ref=\Alph*

  \item  \label{item:1}
    If  $x$ is interior to $D_0$ such that $B(x, r_0) \subset D_0$, $r_0 > 0$,
    then for any $\jb \in \II_j$, $0 \le j \le m-1$, we have
    $| f^{(\jb)}(x)|  \le K_j / r_0^j$ for a constant $0 < K_j$.
    %% $K$ a constant in what sense 

  \item
    \label{item:2}
    %% Let     ${\ve e} := (e_1, \ldots, e_d)$ be the standard orthonormal basis of $\RR^d$,
    Let $\ve e$ be any orthonormal basis  of $\RR^d$
    such that $f_{\ve e}^{(\ib)}$ is convex for all $\ib \in \II_{j}$, $0 \le j \le m-2.$
    Assume  we have $d(x, \bd D_0, e_i) \ge \delta_i$, where $\delta_i > 0$ for $i=1, \ldots, d$.
    Let $\delta = (\delta_1, \ldots, \delta_d)$.  Then
    for any $\jb \in \II_j,$ $0 \le j \le m-1$, we have $| f_{\ve e}^{(\jb)}(x)| \le K_j / \delta^{\jb}$.
    % \item
    %   \label{item:2}
    %   Let 
    %   ${\ve e} := (e_1, \ldots, e_d)$ be the standard orthonormal basis of $\RR^d$,
    %   $e_1 = (1, 0, \ldots, 0),$ ..., $e_d = (0, \ldots, 0, 1)$.  Assume   we have $d^+(x, \bd D_0, -e_i) \ge \eta > 0$ and $\eta \ge d^+(x, \bd D_0, e_i) \ge \delta_i$, where $\delta_i > 0$ for $i=1, \ldots, d$.
    % %   for $l \in \lb 2, \ldots, m-1 \rb$, we have $0 \le \pderiv{e_i}[][l]f(x) \le c_l / \delta_i^l $ for a constant $0<c_l<\infty$, and $| \pderiv{e_i}[][]f(x) | \le c_1 / \delta_i$ for a constant $0<c_1<\infty$.
    %   Let $\delta = (\delta_1, \ldots, \delta_d)$.  Then $0 \le \sum_{ \ve l \in \JJ_{j}} f^{(\ve l)}_{\ve e}(x) \delta^{\ve l} \le K < \infty$ for a constant $K \equiv K_{d,m}$ and any $j \in \lb 0, \ldots, m-1 \rb.$
  \end{enumerate}
\end{lemma}

\begin{proof}

  We first show part~\ref{item:1}.
  % We will use the following basic fact:
  % \begin{equation}
  %   \label{eq:79}
  %   \text{If } f \in \cC(D, B),
  %   %%   \text{ for }
  %   B > 0, \text{ and } x \in D \text{ is such that } d(x, D, e_i) \ge \delta > 0
  %   \text{ then }  \lv \pderiv{x_i}f(x)\rv \le \frac{2B}{\delta}.
  % \end{equation}
  We will show, by induction on $l < m-1$, that for $\ib \in \II_l,$ we have
  that $(f^{(\ib)})|_{B(x, r_0/2^l)} \in \cC(B(x, \frac{r_0}{2^l}), \frac{2^{l(l+3)/2}}{r_0^l})$.  When $l = m-1$, the statement
  $(|f^{(\ib)}|)|_{B(x, r_0/2^l)} \le \frac{2^{l(l+3)/2}}{r_0^l}$ holds.
  The base case of $l=0$ is satisfied trivially by assumption, since  $f \in \cC^{\mm, \circ}_d(D_0,1)$.  Now we show the induction hypothesis holds for a general $l \le m-1$  by assuming it holds for the $l-1$ case.   
  Take $\ib \in \II_{l}$.  Write (non-uniquely) $\ib = \ib_1 + \ib_2$ for $\ib_1 \in \II_{l-1}$ and $\ib_2 \in \II_1$.  Since $l-1 \le m-2$, $f^{(\ib_1)}$ is convex by assumption, so by the induction hypothesis $(f^{(\ib_1)})|_{B(x, r_0/2^{l-1})} \in \cC(B(x, \frac{r_0}{2^{l-1}}), \frac{2^{(l-1)(l+2)/2}}{r_0^{l-1}})$.  Note for $z \in B \lp x, \frac{r_0}{2^{l}} \rp $, that $d(x, \bd B \lp x, \frac{r_0}{2^{l-1}} \rp, e_i) \ge r_0 / 2^{l}$ for any $i.$ Since $f^{(\ib)} = (f^{(\ib_1)})^{(\ib_2)}$,
  Lemma~\ref{lem:basic-lemma_conv-lipschitz} implies for any $z \in B \lp x, \frac{r_0}{2^{l}} \rp$ that $|f^{(\ib)}(z)| \le \frac{2}{ r_0/2^{l}} \frac{2^{(l-1)(l+2)/2}}{r_0^{l-1}} = \frac{2^{l(l+3)}}{r_0^l} $.  Thus part~\ref{item:1} has been shown.
  Part~\ref{item:2} follows from part~\ref{item:1} by a simple scaling argument: let $A$ be the diagonal matrix of $\delta$, so that $g(y) := f(A y)$ is defined on a hyperrectangle $E$ where $d(Ax, \bd E, e_i) \ge 1$.  Note $\delta^{\ib} f^{(\ib)} = g^{(\ib)}$, and then apply
  part~\ref{item:1}.
\end{proof}

\begin{theorem}
  \label{conj:m-mono-sup-bracket}
  Let $m \ge 2$. %% be even.
  Let $D = \prod_{i=1}^d [a_i,b_i]$ be a hyperrectangle, with
  $-\infty < a_i < b_i < \infty$.  For all $\ib \in \II_{m-1}$, let
  $0 < \Gamma_{\ib} < \infty$ and let
  $\Gb := \lb \Gamma_{\ib} : \ib \in \II_{m-1} \rb$.  Let
  %% $0 < B \le \sum_{\ib \in \II_{m-1}} \Gamma_{\ib} (b-a)^{\ib}$.
  $0 < B \le \max_{\ib \in \II_{m-1}} \Gamma_{\ib} (b-a)^{\ib}$.
  % If the $\epsilon$-brackets for
  % $\cC[D, B, \Gb]$ (given by
  % Theorem~\ref{thm:GS-extension}) can be taken
  % to satisfy Property~\ref{property:bracketing-FTOC},
  Then there exists  $c \equiv c_{m,d}$ such that for all $\epsilon > 0$,
  \begin{equation}
    \label{eq:conj-k-mono-multivariate}
    %% \Nb[ \epsilon \vol_d(D)^{1/p}, \cC^m_d(D, B, \Gb),  L_p]     \le 
    \Nb[ \epsilon,
    \cC^m_d(D, B, \Gb),
    L_\infty]
    %% \le c \lp \frac{\sum_{\ib \in \II_{m-1}} \Gamma_{\ib} (b-a)^{\ib}} {\epsilon} \rp^{d/m},
    \le \exp \lb c \lp \frac{\max_{\ib \in \II_{m-1}} \Gamma_{\ib} (b-a)^{\ib}} {\epsilon} \rp^{d/m}
    \rb,
  \end{equation}
  where $a := (a_1, \ldots, a_d)$ and $b:= (b_1, \ldots,b_d)$.
  Note in \eqref{eq:conj-k-mono-multivariate},
  $\cC^m_d(D, B, \Gb)$ may trivially be replaced by
  $\cC^m_d(D, B, \Gb)|_{G}$ for any $G \subset D.$
\end{theorem}

%% The proof proceeds in fashion similar to the proof of Theorem~\ref{thm:k-mono-entropy}, via several lemmas.

The proof proceeds via several lemmas.
The following lemma was inspired in part by
Lemma~1 in \cite{Gao:2009hf}.
%%although here we consider the multivariate setting and we also conclude with $L_\infty$ entropy bounds rather than $L_p$ ones.

\begin{lemma}
  \label{lem:3}
  Let $\mc F$ be a class of functions on $\prod^d_{i=1} [0,L_i]$, $0 < L_i<\infty$, let $x \in [0,1]^d$,
  and let
  $$\mc G:=
  \lb y \mapsto  \int_0^1 f(x + t(y-x)) dt : f \in \mc F \rb.$$
  Assume $\log\Nb( \epsilon, \mc F, L_\infty) \le \phi(\epsilon) < \infty$ for some function $\phi$ and all $\epsilon > 0$, and 
  assume further that the $\epsilon$-bracketing cover of $\mc F$ can be taken to satisfy
 \eqref{eq:82} with $A$ the identity (and where $h$ is replaced by the lower and upper bracket of $f$).
  Then there exists $0 < C < \infty$ such that
  \begin{equation*}
    \log \Nb[ \epsilon / \phi(\epsilon)^{1/d}, \mc G, L_\infty] 
    \le C \phi(\epsilon).
  \end{equation*}
\end{lemma}
\begin{proof}
  By \eqref{eq:81}, we will bound the metric covering number rather than the
  bracketing number, just for ease of notation.  Without loss of generality,
  assume $\phi(\epsilon)$ takes on integer values and take $x = 0$.  Let
  $\lb f_i \rb_{i=1}^{e^{\phi(\epsilon)}}$ be an $\epsilon$-$L_\infty$-net
  for $\mc F$.  For $f \in \mc F$ write
  $g(y) := \int_0^1 f(t y) dt = \int_0^1 (f(ty) - f_i(ty)) dt + g_i(y)$ where
  $g_i(y) := \int_0^1 f_i(ty)dt$ and $L_\infty(f -f_i) \le \epsilon$.  Define
  \begin{equation*}
    \mc G_i := \lb g(y) =  \int_0^1 (f(ty)-f_i(ty)) dt : y \in [0,1]^d, f \in \mc F, L_\infty(f - f_i) \le \epsilon \rb.
  \end{equation*}
  Thus $\mc G \subseteq \cup_i ( \mc G_i + g_i)$.

  Now, for each $i$, $\mc G_i$ consists of functions $g$ satisfying $L_\infty(g) 
  \le \epsilon,$ and also (by  \eqref{eq:82}) satisfying
  \begin{equation*}
    L_\infty \lp \pderiv{x_j} g \rp
    \le \epsilon
    \text{ for } j \in \lb 1, \ldots, d \rb.
  \end{equation*}
  Thus, by Theorem~\ref{thm:1} (in the appendix), we see that
  $\log N(\delta , \mc G_i, L_\infty) \le C (2 \epsilon / \delta)^d$ for a
  constant $C$ and any $\delta > 0$.  Take
  $\delta = \epsilon / \phi(\epsilon)^{1/d}$ and see
  $$\log N( \epsilon / \phi(\epsilon)^{1/d} , \mc G_i, L_\infty) \le 2 C  \phi(\epsilon),$$
  and let $g_{ij}$, $1 \le j \le e^{C \phi(\epsilon)}$, denote a corresponding cover.  Then $\lb g_i + g_{ij} \rb_{i,j}$ is an $L_\infty$-cover of $\mc G$ with size $(\epsilon/ \phi^{1/d}(\epsilon))$ and with cardinality no larger than $e^{(2C+1)\phi(\epsilon)}$, so we are done.
\end{proof}

\begin{remark}
  \label{rem:4}
  The above lemma depends on \eqref{eq:82}.  Note that in the $d=1$ case,
  this property is satisfied for the entire class $\cC(L, B, \Gb)$: see Lemma~\ref{lem:Gao-Wellner-extension}
  below.
\end{remark}

Let $D \subset \prod_{i=1}^d [0,L_i]$ be convex, and for simplicity assume $ 0 \in D$.
%% \commentC{(I guess this assumption of $0 \in D$ not necessary?)}
Let $0 < B$ and $\Gb := (\Gamma_1, \ldots, \Gamma_d)$.
For $m \ge 3$,  let
\begin{equation*}
  %% \mc G^m_d \equiv
  \mc G^{\mm}_d(D, B, \Gb)
  := \lb x \mapsto  \int_0^{1} \int_0^{z_1} \cdots \int_0^{z_{m-2}} f \lp s x \rp ds
  dz_{\mm-2} \cdots dz_1   : f \in \cC_d(D, B, \Gb)
  \rb,
\end{equation*}
be a class of functions, where $x \in D$.
Note that the functions in $\mc G^m_d(D,B, \Gb)$ are normalized so their size does not increase with the size of $D$.

\begin{lemma}
  \label{lem:2}
  Fix $D \subset \prod_{i=1}^d [0,L_i]$ be convex with $0 \in D$. Let $\Gb := (\Gamma_1, \ldots, \Gamma_d) \in (0, \infty)^d$ and $L := (L_1, \ldots, L_d) \in (0,\infty)^d$.
  %% Let $0 < B < \sum_{i=1}^d \Gamma_i L_i^{m-1}.$
  Let $0 < B \le \sum_{i=1}^d \Gamma_i L_i$.  Let $m \ge 2$ be an integer and $p \ge 1$.
  % If the bracketing cover of $\cC_d(D, B, \Gb)$ from Theorem~\ref{thm:GS-extension} can be taken to satisfy Property~\ref{property:bracketing-FTOC}
  Then,
  abbreviating $\mc G^m_d \equiv  \mc G^m_d(D, B, \Gb)$, 
  we have
  \begin{equation}
    \label{eq:2}
    \log \Nb( \epsilon \vol_d(D)^{1/p}, \mc G^m_d, L_p)     \le
    \log \Nb[\epsilon, \mc G^m_d% (D, B, \Gb)
    , L_\infty]
    \le c_m \lp \frac{ \sum_{i=1}^d \Gamma_i L_i }{\epsilon} \rp^{d/m}.
  \end{equation}
\end{lemma}

\begin{proof}
  The first inequality of \eqref{eq:2} is immediate.
  The proof of the second inequality is by induction.  We can start with the base case of $m = 2$ by identifying $ \mc G^2_d(D,B, \Gb)$ with $\cC_d(D, B, \Gb)$ and then the result is by Theorem~\ref{thm:GS-extension}.
  Now we assume the $m-1$ case holds, i.e., 
  \begin{equation}
    \label{eq:42}
    \log \Nb[\epsilon, \mc G^{m-1}_d(D, B, \Gb), L_\infty]
    \le c_{m-1} \lp \frac{ \sum_{i=1}^d \Gamma_i L_i }{\epsilon} \rp^{d/(m-1)},
  \end{equation}
  and show \eqref{eq:2} holds.
  By \eqref{eq:42} and Lemma~\ref{lem:3}, we have
  \begin{equation}
    \label{eq:52}
    \log \Nb[ \frac{\epsilon }{
      \lp \frac{\sum_{i=1}^d \Gamma_i L_i }{ \epsilon} \rp^{\inv{m-1}} },
    \mc G_d^m(D, B, \Gb), L_\infty ]
    \le C_m \lp \frac{ \sum_{i=1}^d \Gamma_i L_i}{\epsilon} \rp^{\frac{d}{ m-1}}
  \end{equation}
  which is equivalent to  \eqref{eq:2}.
          %           , because  $(\sum_{i=1}^d  L_i)(\sum_{i=1}^d \Gamma_i L_i) (\sum_{i=1}^d L_i)^{m-3}$ is equal to  \\
          %           $(\sum_{i=1}^d \Gamma_i L_i) (\sum_{i=1}^d L_i)^{m-2}.$
\end{proof}

\begin{proof}[Proof of Theorem~\ref{conj:m-mono-sup-bracket}]
  We consider (only) $f \in \cC^{\mm, \circ}_d(D, B, \Gb)$, by
  Remark~\ref{rem:m-mono-deriv}.  Since $D$ has nonempty interior there exists an open ball contained in $D$, which, by translation, we take to be $B(0, r_0)$ without loss of generality, for $r_0 > 0$.
  Now, by iterated application of the Fundamental Theorem of Calculus,
  for any $(m-1)$-times continuously differentiable $h \colon \RR \to \RR$,   
  we can write
      \begin{align}
        \label{eq:FOTC-expansion}
        h(x) =
        h(0)  
        %% + h'(0) x
        + \cdots
        + \frac{h^{(\mm-2)}(0)}{(\mm-2)!} x^{\mm-2}
        + \int_0^x \int_0^{z_1} \cdots \int_0^{z_{\mm-2}} h^{(\mm-1)}(s) ds dz_{\mm-2} \cdots dz_1.
      \end{align}
  By applying \eqref{eq:FOTC-expansion} to $t \mapsto f(ty)$, for any $y \in D$ we can write
  \begin{equation}
    \label{eq:53}
    \begin{split}
      f(y) =
      %% f(0) +
      \sum_{i=0}^{m-3} \sum_{\jb \in \II_i} \inv{\jb!} f^{(\jb)}(0)
      y^{\jb} +
      \sum_{\jb \in \II_{m-2}} {m-2 \choose \jb}
      y^{\jb}
      I_{m-2}(f^{(\jb)},y)
          %           \int_0^{1} \int_0^{z_1} \cdots
          %           \int_0^{z_{m-3}} f^{(\jb)} \lp s y \rp ds
          %           dz_{m-3} \cdots dz_{1},
    \end{split}
  \end{equation}
  where
  \begin{equation*}
    I_{m-2}(f^{(\jb)}, y) := \int_0^{1} \int_0^{z_1} \cdots
    \int_0^{z_{m-3}} f^{(\jb)} \lp s y \rp ds
    dz_{m-3} \cdots dz_{1}.
  \end{equation*}
  Let
  \begin{equation*}
    \mc P^{\mm} :=
    \lb y \mapsto \sum_{i=0}^{m-3} \sum_{\jb \in \II_i} a_{\jb} y^{\jb} : 0 \le a_{\jb} \le c_{\jb} \rb
  \end{equation*}
  where $c_{\jb} := K_{m}/ r_0^{m} \jb! $ and where $K_m := \max_j K_j$ comes from Lemma~\ref{lem:m-mono-deriv-bound}.

  Now, for $\ib \in \II_{m-2}$, let
  $\jb_\alpha(\ib) := \ib + (0,\ldots, 0, 1,0,\ldots, 0)$, where the $1$ is
  in the $\alpha$ index.  %% Thus $\jb_1, \ldots, \jb_d$ are the
  Let $\Gb_{\ib} := (\Gb_{\jb_1(\ib)}, \ldots, \Gb_{\jb_d(\ib)})$.  This is
  the vector of Lipschitz constraints for $f^{(\ib)}$.
  Let
  \begin{equation}
    \label{eq:63}
    \mc F^{\mm} :=
    \lb
    y \mapsto
    %% m^m
    \sum_{\jb \in \II_{m-2}}
    {m-2 \choose \jb}
    y^{\jb}
    I_{m-2}(g_{\jb}, y)
    :  g_{\jb} \in \cC_d(D, B, \Gb_{\jb})
    \rb.
  \end{equation}
  Then $\cC^{\mm}_d(D, B, \Gb) \subset \mc P^{\mm} + \mc F^{\mm}$
  so
  \begin{equation}
    \label{eq:80}
    N(\epsilon, \cC^{\mm}_d(D,B,\Gb), L_\infty)
    \le N(\epsilon/2, \mc P^{\mm}, L_\infty)
    N( \epsilon / 2, \mc F^{\mm}, L_\infty).
  \end{equation}
  Recall by \eqref{eq:81}, $L_\infty$-$\epsilon$-bracketing numbers equal $L_\infty$-$(\epsilon/2)$-covering numbers, and so simply for ease of notation and without any loss of generality, we form a $L_\infty$ (metric) cover rather than $L_\infty$ bracketing cover.

  First we form a cover for $\mc P^{\mm}$.  For an integer $N \ge 1$, we can construct a grid %% (as in \eqref{eq:71})
  to cover $\mc P^{\mm}$ by taking $a_{\jb} \in \lb c_{\jb} / N, \ldots, N c_{\jb} / N \rb.$ Since $\jb \in \II_{i}$, $0 \le i \le m$ takes on no more than $m^d$ values, the cover has cardinality $N^{m^d}.$ The $L_\infty$ size is $m^d C L^{\mu} / N$ where $C := \max_{\jb} c_{\jb}$, $L := b-a$, and $\mu:= (m-3, \ldots, m-3)$.  Take $N$ to be $\lceil \epsilon^{-1} m^d C L ^{\mu} \rceil$.  Then we have formed a cover for $\mc P^{\mm}$ in the $L_\infty$ norm with distances no larger than $\epsilon$ and log cardinality bounded above by $m^d \log (1 + m^d C L^{\mu} \epsilon^{-1})$.

  Now consider forming an
  $L_\infty$-cover for
  $\mc F^{\mm}$.
          %           where
          %           $$I_{m-2}(g_{\jb}) := \int_0^{1} \int_0^{z_1} \cdots
          %           \int_0^{z_{m-3}} g_{\jb} \lp s y \rp ds
          %           dz_{m-3} \cdots dz_{1}.$$
  By Lemma~\ref{lem:2}, for a fixed $\jb \in \II_{m-2}$, we can form an $\epsilon$-$L_\infty$-cover, $h_{\jb,i}$, for $i=1,\ldots, N_{\jb}$,  for the functions $I_{m-2}(g_{\jb})$, where
  $\log N_{\jb} \le c_m \lp \sum_{i=1}^d \Gamma_{\jb,i} L_i / \epsilon \rp^{d/m}$ and $L_i := b_i-a_i$.
          %           $$\int_0^{1} \int_0^{z_1} \cdots
          %           \int_0^{z_{m-3}} g_{\jb} \lp s y \rp ds
          %           dz_{m-3} \cdots dz_{1}.$$
  Let $f_{\jb,i}(y) := {m-2 \choose \jb} y^{\jb} h_{\jb,i}(y)$,
  $i=1, \ldots, N_{\jb}$.  Let $L= (L_1, \ldots, L_d)$.  Then for a function
  $f(y) = {m-2 \choose \jb} y^{\jb} I_{m-2}(g_{\jb})$, with
  $L_\infty(f_{\jb,i} - f) \le \epsilon$, we have
  $L_\infty( f_{\jb,i} - f) \le m^m L^{\jb} \epsilon$.  Equivalently, we can
  cover the same function class with size $\epsilon$ and log cardinality
  $c_{m,2} \lp \sum_{i=1}^d \Gamma_{\jb,i} L_i L^{\jb} / \epsilon \rp^{d/m}$.
  Let $g_{\jb, i}$ denote such a cover.  Then the class ${\mc F}^{\mm}$ is
  covered by the set of functions $\sum_{\jb \in \II_{m-2}} g_{\jb, i}$ which
  have $L_\infty$ distance bounded above by
  $ \sum_{\jb \in \II_{m-2}} \epsilon $ and log cardinality bounded above by
  $ c_{m,2} \sum_{\jb \in \II_{m-2}} \lp \sum_{i=1}^d \Gamma_{\jb,i} L_i
  L^{\jb} / \epsilon \rp^{d/m}.$
  Equivalently, $\mc F^{\mm}$ can be covered by a class with $\epsilon/2$ $L_\infty$-distance and log cardinality bounded above by $c_{m,3} \epsilon^{-d/m} (\max_{\jb \in \II_{m-1}} \Gamma_{\jb} L^{\jb})^{d/m}$.
  Then, by \eqref{eq:80}, we have
  \begin{equation*}
    \log     N(\epsilon, \cC^{\mm}_d(D,B,\Gb), L_\infty)
    \le
    m^d \log (1 + m^d 2 C L^{\mu} \epsilon^{-1})
    + c_{m,3} \epsilon^{-d/m} \lp \max_{\jb \in \II_{m-1}} \Gamma_{\jb} L^{\jb} \rp^{d/m}.
  \end{equation*}
  This completes the proof.
\end{proof}

\medskip
\noindent We can now prove the following bound on bracketing entropy of $\mm$-monotone function classes, using the same approach used to prove Theorem~\ref{thm:main-thm}.  We use the same $G_{\ib, \jb}$ partition construction as in the proof of  Theorem~\ref{thm:main-thm}, except that we modify the $\delta_i$'s.
First we have the following $m$-monotone version of Lemma~\ref{lem:Pij-bracketing-bound}.

\medskip

\begin{lemma}
  \label{lem:1}
  Let Assumption~\ref{assm:simple-polytope} hold.
  %% and assume   \eqref{eq:conj-k-mono-multivariate} holds.
  Fix $k \in \lb 1, \ldots, d \rb$, $\ib \in I_k$, $\jb \in J_k$.  Then for any $p \ge 1$ and $\epsilon > 0$,
  \begin{align*}
    % \MoveEqLeft
    \log \Nb[\epsilon \vol_d( G_{\ib,\jb})^{1/p}, \cC^{\mm}_d(D_0, 1)|_{G_{\ib,\jb}}, L_p ]
    %% \\
    %% &     
         \le c_{D,d}
         \lp
         \inv{\epsilon}
         \max_{\alpha=1,\ldots, k}
         \frac{\delta_{i_\alpha+1}^{m-1} }{\delta_{i_\alpha}^{m-1}}
         \rp^{d / \mm}.
  \end{align*}
\end{lemma}
\begin{proof}
  Let $e_{\ib,\jb, \alpha} \equiv e_\alpha$, $\alpha=1,\ldots,d$ be given
  by Proposition~\ref{prop:basis-lipschitz}, so that
  $d(G_{\ib,\jb}, \bd D_0, e_\alpha) \ge \max_{\beta \in 1,\ldots, k} \delta_{i_\beta} / |\la e_\beta, v_{j_\beta} \ra |$ for $\alpha = 1, \ldots, k$, and
  $d(G_{\ib,\jb}, \bd D_0, e_\alpha) \ge 2/u$ for $\alpha = k+1, \ldots, d$.
  Recall \eqref{eq:39} from the proof of
  Lemma~\ref{lem:Pij-bracketing-bound}, and recall that
  $\gamma_{\alpha} \le d L_{\jb,4}^{-1} \max_{\beta \in \lb 1, \ldots , k \rb } \delta_{i_\beta + 1}$.
  Let
  $\eta_0 := (\delta_{i_{\beta_1}}, \ldots, \delta_{i_{\beta_k}}, 2/u, \ldots, 2/u)$
  and $\eta_1 := (\delta_{i_{\beta_1}+1}, \ldots, \delta_{i_{\beta_k}+1}, 4L_{k,1} \rho_{\jb, k+1}, \ldots, 4L_{k,1} \rho_{\jb, d})$.
  For $\ve{l} \in \II_{m-1}$, let $\Gamma_{\bs l} := \sup_{ x \in G_{\ib,\jb}, f \in \cC^m_d(D_0,1)} |f^{(\ve l)}(x)|$.
  Then by
  Lemma~\ref{lem:m-mono-deriv-bound}
  we have
  \begin{equation*}
    \Gamma_{\ve l} \eta_1^{\ve l} \le c \frac{ \eta_1^{\ve l}}{ \eta_0^{\ve l}}
    \le c \max_{\beta \in \{ 1, \ldots, k\} } \frac{ \delta_{i_{\beta}+1}^{m-1}}{ \delta_{i_\beta}^{m-1}}.
  \end{equation*}
  % \begin{align*}
  %   \lv \sum_{{\ve l} \in \II_{m-1}} \Gamma_{{\ve l}} \eta_1^{{\ve l}} \rv
  %   = \lv \sum_{{\ve l} \in \II_{m-1}} \Gamma_{{\ve l}} \eta_0^{{\ve l}} \frac{\eta_1^{{\ve l}}}{\eta_0^{{\ve l}}} \rv
  %   \le \lv \sum_{{\ve l} \in \II_{m-1}} \Gamma_{{\ve l}} \eta_0^{{\ve l}}  \rv
  %   \max_{{\ve l} \in \II_{m-1}} \frac{\eta_1^{{\ve l}}}{\eta_0^{{\ve l}}}
  %   \le K \max_{{\ve l} \in \II_{m-1}} \frac{ \eta_1^{{\ve l}} }{ \eta_0^{{\ve l}}}
  %   \le c \max_{\beta \in \{ 1, \ldots, k\} } \frac{ \delta_{i_{\beta}+1}^{m-1}}{ \delta_{i_\beta}^{m-1}}.
  % \end{align*}
  Now we can apply Theorem~\ref{conj:m-mono-sup-bracket}. We use the fact
  that $G_{\ib, \jb}$ is embedded in a hyperrectangle $H$ with axes given by
  the orthonormal basis ${\ve e} := \lb e_\alpha \rb_{\alpha=1}^d$ specified
  by Proposition~\ref{prop:basis-lipschitz}.  Thus
  $\cC^{\mm}_d(D,1)|_{G_{\ib,\jb}}$ is contained in
  $\cC^{\mm}_d(H,1)|_{G_{\ib,\jb}}$.  Thus, letting
  $\Gb := \lb \Gb_{\bs l} : \bs l \in \II_{m-1} \rb$, we may apply
  Theorem~\ref{conj:m-mono-sup-bracket} to
  $\cC^{\mm}(H, 1, \Gb)|_{G_{\ib,\jb}}$ (after applying a rotation, see
  Remark~\ref{rem:2}).  By \eqref{eq:conj-k-mono-multivariate} (and the logic
  leading to \eqref{eq:Lip-bound-1}), the proof is complete.
\end{proof}

\medskip
\noindent We are now in a position to prove the following $\mm$-monotone version of Theorem~\ref{thm:main-thm}.

\medskip

\begin{theorem}
  \label{thm:main-extension-m-mono}
  Assume the setup and conclusions of \eqref{eq:conj-k-mono-multivariate} hold.  Fix $ p \ge 1$.  Then for all $\epsilon > 0$,
  \begin{equation}
    \label{eq:66}
    \log \Nb[\epsilon, \cC^{\mm}_d(D,B), L_p]
    \le C \epsilon^{-d/ \mm} \lp B \prod_{i=1}^d (b_i-a_i)^{1/p} \rp^{-d/ \mm}.
  \end{equation}
\end{theorem}

\begin{proof}
  The same scaling argument as given in (the beginning of) the proof of Theorem~\ref{thm:main-thm} applies here, since rescalings of $\mm$-monotone functions are still $\mm$-monotone.  Thus assume $D \subset [0,1]^d$ and $B = 1$.
  Now we define 
  \begin{equation}
    \label{eq:defn:u-k-mono}
    u \equiv u_D :=
    %% \exp \lp -2(p+1)^2  (p+2) \log 2 \rp
    r_D/2 \wedge
    2^{  - \mm (1 + p (\mm-1))^2 (2 + p (\mm -1 )) }
    \wedge
    \min_{k \in \lb 1,\ldots, d-1 \rb} \min_{ \jb \in J_k,
      e \in E_{\jb}
      %% \SPAN\lb  e_{\jb, k+1}, \ldots, e_{\jb,d} \rb
    }
    \frac{d^+(x_{\jb}, \relbd  G_{\jb}, e) }{ L_{k,2}}
  \end{equation}
  (where $L_{k,2}$ is still given by \eqref{eq:defn:Lk2}).
  We define $A$ and $\lb \delta_{i} \rb_{i=1}^A$ as in \eqref{eq:5}, by
  \begin{equation}
    \label{eq:67}
    \delta_i := \exp \lb p \lp \frac{p -1/(\mm-1)}{p+ 2/(\mm-1)}\rp^{i-1} \log \epsilon \rb
    \quad    \text{ for }  \quad
    i = 1, \ldots, A,
    \text{ and }
    \delta_0 = 0.
  \end{equation}
  For $k \in \lb 1, \ldots, d \rb$, $\ib \in I_k$ we let $a_{(i_1 \ldots i_k)} = 2$ if $i_\alpha = 0$ for any $\alpha \in \{1, \ldots, k\}$, and otherwise let
  \begin{equation}
    \label{eq:68}
    a_{(i_1, \ldots, i_k)}
    := \prod_{\beta=1}^k a_{i_\beta} := \prod_{\beta = 1}^k \epsilon^{1/k}
    \exp \lb - p \frac{(p+ 1/(\mm-1))^{i_{\beta}-2}}{(p+2/(\mm-1))^{i_\beta-1}} \log \epsilon \rb.
  \end{equation}
  When $k=0$, let $a_A := \epsilon / u$. Now define $a$ by \eqref{eq:13}, as before, and then
  \begin{equation}
    \label{eq:bracket-number-m-mono}
    \log \Nb[ a  , \cC^{\mm}_d(D, 1), L_p ]
    \le \sum_{k=0}^d \sum_{\jb \in J_k}
    \sum_{ \ib \in I_k} \log \Nb[a_{\ib} \vol_{d}(G_{\ib,\jb})^{1/p} , \cC^{\mm}_d(D,1)|_{G_{\ib,\jb}}, L_p].
  \end{equation}
  holds.
  We consider the case where $k \in \lb 1, \ldots,d  \rb$ (i.e., $k \ne 0$), and compute the sum above over $I_k$ for a fixed $\jb \in J_k$.
  We again use the trivial bracket $[-1,1]$ for any $G_{\ib,\jb}$ where $i_{\alpha} = 0$ for any $\alpha \in \lb 1,\ldots, k \rb$.
  By
  Lemma~\ref{lem:1}, the sum over the remaining terms is bounded above by
  \begin{align}
    %% \label{eq:bracketing-card-0}
    % \MoveEqLeft
    \sum_{i_1=1}^A \cdots \sum_{i_k=1}^A c_1  a_{\ib}^{-d/m}
    \lp
    \max_{\alpha=1,\ldots, k} \frac{ \delta_{i_\alpha +1}^{\mm-1}}{\delta_{i_\alpha}^{m-1}}   
    \rp^{d/\mm} % \\
  \end{align}
  which (using $\max_{\alpha \in \lb 1,\ldots, k\rb} 2 \delta_{i_{\alpha}+1} / \delta_{i_\alpha} \le \prod_{\alpha=1}^k 2 \delta_{i_\alpha+1} / \delta_{i_\alpha}$ as in the proof of Theorem~\ref{thm:main-thm}) is bounded above by
  \begin{equation}
    \label{eq:bracketing-card-_m-mono}
    c_2
    \sum_{i_1=1}^A \cdots \sum_{i_k=1}^A \prod_{\beta=1}^k \lp \frac{
      \delta_{i_\beta+1}^{\mm-1} }{\delta_{i_\beta}^{m-1}  a_{i_\beta} } \rp^{d/\mm}.
  \end{equation}
  %% (Recall, for the $k=d$ case,  we take the product over an empty set to be $1$.)
  The constants $c_1, c_2$ depend on $D$ and $d$.

  We now let
  \begin{equation}
    \label{eq:defn-zetai-m-mono}
    \zeta_i \equiv \zeta_{i,k,m}
    :=  \lp \epsilon^{1/k} \delta_{i+1}^{m-1} / (\delta_i^{m-1} a_i) \rp^{1/m}.
  \end{equation}
  By Lemma~\ref{lem:zetai-sum-m-mono} below,
  $\sum_{i=1}^A \zeta_i^d \le K_d < \infty$ for a constant $K_d$.
  Now,
  \begin{equation}
    \label{eq:72}
    \prod^d_{\alpha = k+1} \rho_{\jb, \alpha}^{\mm-1}
    \le C_d \vol_{d-k}(A_{\jb})^{\mm-1}
    \le C_d \vol_{d-k}(G_{\jb})^{\mm-1} .
  \end{equation}
  Thus we have shown that \eqref{eq:bracketing-card-_m-mono} is bounded above by $ K \epsilon^{-d/\mm}$.  where $K$ depends only on $D$, $d$, $\mm$, and $p$, but not on $\epsilon$.  Therefore, for a different constant $K$, we have shown that the right side of \eqref{eq:bracket-number-m-mono} is bounded above by $K \epsilon^{-d/ \mm}$ for all $\epsilon > 0$.

  We now bound the size of the brackets, $a$.  Define $I_k^+,$ $I_k^0$ as in the proof of
  Theorem~\ref{thm:main-thm}.
  Just as in  Theorem~\ref{thm:main-thm},
  \eqref{eq:ap} holds (using the current definitions of $\delta_{i_\alpha}$ and $a_{i_\alpha}$).
          %           We have
          %           \begin{equation}
          %           \label{eq:ap-m-mono}
          %           \begin{split}
          %           a^p  & \le
          %           a_A^p \vol_d(D)
          %           + \sum_{k=0}^d \sum_{\jb\in J_k, \ib \in I^0_k} 2^p \vol_{d}(G_{\ib,\jb}) \\
          %           & \quad +
          %           \sum_{k=1}^d (2L_{k,1})^{d-k} \sum_{\jb \in J_k} \vol_{d-k} (G_{\jb})
          %           \sum_{\ib \in I_k^+} a_{\ib}^p \prod_{\alpha=1}^k \frac{( \delta_{i_\alpha + 1} -
          %           \delta_{i_\alpha} )}{\la \tilde f_\alpha, v_{j_\alpha} \ra}.
          %         \end{split}
          %       \end{equation}
  For the middle term on the right side of \eqref{eq:ap}, recall \eqref{eq:74}, and for the first term recall $a_A := \epsilon / u$.  
  It remains only to bound the last term.
  We can check that $a_i^p \delta_{i+1} \le \epsilon^{p/k} \zeta_i^m $ (equality holding when $m=2$). Thus, 
  arguing as in the proof of   Theorem~\ref{thm:main-thm},
  we can see we need only bound
  $     \epsilon^{p/k}      \sum_{\alpha=1}^A \zeta_\alpha^{\mm} $ which
  % \begin{equation}
  %   \label{eq:75}
  %   \epsilon^{p/k}      \sum_{\alpha=1}^A \zeta_\alpha^{\mm} \le \epsilon^{p/k} A_u,
  % \end{equation}
  by Lemma~\ref{lem:zetai-sum-m-mono} is bounded above by
  $\epsilon^{p/k} A_u.$
  Thus $a \le C \epsilon$ for a constant $C$ not depending on $\epsilon$.  This completes the proof.
\end{proof}

\noindent In the $m$-monotone case, we did not relate the constants involved in the bound to the volumes of the faces of $D$ as explicitly as we did in the convex case.
The following lemma was used to bound both the cardinality and the size of the brackets in
Theorem~\ref{thm:main-extension-m-mono} above.

\begin{lemma}
  \label{lem:zetai-sum-m-mono}
  Define $A$, $u$, $\delta_i$, and $a_i$ by
  \eqref{eq:5},
  \eqref{eq:defn:u-k-mono}, \eqref{eq:67}, and \eqref{eq:68}, respectively.
  Assume $0 < \epsilon \le 1$.
  Let $\zeta_i$ be defined by \eqref{eq:defn-zetai-m-mono}.  Then for any $\gamma \ge 1$,
  \begin{equation*}
    \sum_{i=1}^A \zeta_i^\gamma \le 2^\gamma / (2^\gamma - 1).
  \end{equation*}
\end{lemma}
\begin{proof}
  Straightforward algebra shows
  \begin{equation*}
    \zeta_i = \exp \lb  \frac{(\mm-1) p
      \lp \frac{p + \inv{\mm-1}}{ p + \frac{2}{\mm-1}} \rp^{i}}{
      \mm(1 + (\mm-1)p)^2}
    \log \epsilon \rb.
  \end{equation*}
  %% See mathematica ('zeta-sum-v2.nb').
  Thus, for $\alpha = 1, \ldots, A-1$,  further algebra shows
  \begin{equation*}
    \frac{\zeta_\alpha^{\mm}}{\zeta_{\alpha+1}^{\mm}}
    = \exp \lb 
    (\mm-1) p \frac{(1 + p(\mm-1))^{\alpha-2}}{( 2 + p( \mm -1))^{\alpha+1}}
    \log \epsilon
    \rb,
  \end{equation*}
  \begin{mylongform}
    \begin{longform}
      which equals
      \begin{equation*}
        \exp \lb 
        \frac{(\mm-1) p}{(1 +p(\mm-1))^{2} (2 + p (\mm-1)) } 
        \frac{(1 + p(\mm-1))^{\alpha}}{( 2 + p( \mm -1))^{\alpha}}
        \log \epsilon
        \rb
      \end{equation*}
    \end{longform}
  \end{mylongform}
  which (since $\alpha \le A-1$) is bounded above by
  \begin{align*}
    \MoveEqLeft
    \exp \lb 
    \frac{(\mm-1) p}{(1 +p(\mm-1))^{2} (2 + p (\mm-1)) } 
    \frac{(1 + p(\mm-1))^{A-1}}{( 2 + p( \mm -1))^{A-1}}
    \log \epsilon
    \rb \\
    & \le \exp \lb \frac{\log u}{(1 +p(\mm-1))^{2} (2 + p (\mm-1))} \rb
      =: R^{-\mm}.
  \end{align*}
  Now, note that $\zeta_{A} \le 1$ (since $\epsilon \le 1$), and
  by its definition \eqref{eq:defn:u-k-mono}, %% as opposed to earlier def'n
  \begin{mylongform}
    \begin{longform}
      $u \le \exp ( - \mm (1 + p (\mm-1))^2 (2 + p (\mm -1 )) \log 2 )$,      so
    \end{longform}
  \end{mylongform}
  we see $R \ge 2$.
  The rest of the proof follows in fashion similar to the proof of
  Lemma~\ref{lem:zetasum}.
\end{proof}

%% \subsubsection{Extending and modifying results from \cite{Gao:2009hf}: $\mm$-monotonicity with
\subsubsection{Univariate $\mm$-monotonicity}
\label{sec:mm-mono-d=1}

In this subsection, we prove that when $d=1$,
\eqref{eq:82} holds, and so
the conclusion of Theorem~\ref{conj:m-mono-sup-bracket} holds with $\cC^*$ replaced by the full class $\cC$.
We point out that \cite{Gao:2009hf} provide bracketing entropy for classes of bounded (univariate) $\mm$-monotone functions on a compact interval in Hellinger distance, but their result does not immediately give a bound for $L_\infty$-bracketing entropy.  The methods of the previous section do give such a bound though.
% To get a bound in the strong $L_\infty$ distance, we must consider smaller classes of $\mm$-monotone functions than those considered in \cite{Gao:2009hf}. \cite{Gao:2009hf} consider $\mm$-monotone functions $f$ satisfying a uniform bound.
% We consider $\mm$-monotone $f$ satisfying a Lipschitz bound on  $f^{(\mm-2)}$.
% %% (which is a bound on $f^{(\mm-1)}$ when $f^{(\mm-1)}$ exists).
% Recall, as discussed earlier, in the convex case ($\mm=2$), such a Lipschitz assumption is necessary for $L_\infty$-entropy bounds.
%% necessary ? where ? 
% The proof of the following lemma is related to the proof of
% Lemma~1 in \cite{Gao:2009hf}.
% Recall $N(\cdot, \cdot, \cdot)$ is the
% (metric) covering number discussed and defined in the introduction.  Also,
Recall $\lambda$ is Lebesgue measure.

\begin{lemma}
  \label{lem:Gao-Wellner-extension}
  Let $\mc F$ be a class of functions on $[0,L]$ and let $\mc G : = \lb x\mapsto \int_0^x f d\lambda : f \in \mc F \rb$ be the class of primitives of $\mc F$ on $[0,L]$.  Assume $\log \Nb ( \epsilon, \mc F, L_\infty) \le \phi(\epsilon) < \infty$ for a function $\phi$ and $\epsilon > 0$.    Then there exists $0 < C< \infty$ such that
  \begin{equation}
    \label{eq:58}
    \log \Nb[\epsilon / \phi(\epsilon), L^{-1} \mc G, L_\infty ] \le C \phi(\epsilon).
  \end{equation}
\end{lemma}
\begin{proof}
  By \eqref{eq:81}, we will bound the metric covering number rather than the bracketing number, just for ease of notation.
  We take $L=1$, by rescaling: if $\mc F$ and $\mc G$ are classes of
  functions defined on $[0,1]$, let $\tilde{\mc F}$ be
  $\lb x\mapsto f(x L) : f \in \mc F \rb$ and define
  $\tilde{\mc G} := \lb x \mapsto L g(x L) : g \in \mc G\rb$.  Then
  $N(\epsilon, \mc F, L_\infty) = N(\epsilon, \tilde{\mc F}, L_\infty)$ and
  $\tilde {\mc G}$ is the class of primitives of $\tilde {\mc F}$.  We see that
  $\Nc[\epsilon, \mc G, L_\infty] = \Nc[\epsilon, L^{-1} \tilde{\mc G},
  L_\infty]$ so we can take $L = 1$.  Now, by the Fundamental Theorem of Calculus,
  \eqref{eq:82} holds, and so we can apply Lemma~\ref{lem:3}.
  This completes the proof.
\end{proof}

\medskip

\noindent
Now let $\mc G^1 \equiv \mc G^1(L,B)$ be the class of non-decreasing functions $f$ on $[0,L]$ satisfying $0 \le f \le B$, and let 
\begin{equation}
  \label{eq:59}
  \mc G^k \equiv
  \mc G^k(L,B)
  := \lb  g(x) = \int_0^x \int_0^{z_1} \cdots \int_0^{z_{k-2}} f(s) ds dz_{k-2} \cdots dz_1
  %% \,
  : 
  f \in \mc G^1
  \rb
\end{equation}
where $g$ is defined on $[0,L]$.
Note, when $f$ is continuous, then $0 \le  g^{(k-1)} \le B$.
\begin{lemma}
  \label{lem:k-mono-main-lemma}
  Fix $L, B>0$ and define $\mc G^k(L,B)$ by \eqref{eq:59}. Let $k \ge 2$ be an integer and $p \ge 1$. We have
  \begin{equation}
    \label{eq:61}
    %% \log N(\epsilon L , \mc G^k(L,B) , L_1)
    \log N(\epsilon L^{1/p} , \mc G^k(L,B) , L_p)
    \le \log \Nb[\epsilon, \mc G^k(L,B) , L_\infty]
    \le c_k \lp \frac{ B L^{k-1}}{\epsilon} \rp^{1/k}.
  \end{equation}
\end{lemma}
\begin{proof}
  This follows from using Lemma~\ref{lem:Gao-Wellner-extension} in the proof
  of Lemma~\ref{lem:2} (i.e., from the fact that \eqref{eq:82} is satisfied
  when $d = 1$).
\end{proof}
\begin{mylongform}
  \begin{longform}

    \commentC{(Here is the proof for the above lemma~\ref{lem:k-mono-main-lemma} that I originally wrote out.  Now it is subsumed by the multivariate proof.)}

    \begin{proof}
      The first inequality in \eqref{eq:61} is immediate.  For ease of notation,
      we write $\mc G^k$ for $\mc G^k(L,B)$ in this proof.  The proof is by
      induction starting from $k=2$.  For the base case of $k = 2$, by
      Theorem~\ref{thm:GS-extension}, we have
      \begin{equation}
        \label{eq:62}
        %% \log N( \epsilon L, \mc G^2, L_1)    \le
        \Nb[\epsilon, \mc G^2, L_\infty]
        \le c \lp \frac{BL}{\epsilon} \rp^{1/2}.
      \end{equation}
      (Note that the functions in $\mc G^2(L,B)$ are uniformly bounded by $BL$.)

      Now assume that \eqref{eq:61} holds with $\mm$ replaced by $\mm-1$.  We
      show it holds for $\mm$.  By the induction hypothesis,
      $\log N(\epsilon , \mc G^{\mm-1} , L_\infty) \le c_{\mm-1} ( BL^{\mm-2} /
      \epsilon)^{1/(\mm-1)} := \phi(\epsilon)$ for all $\epsilon > 0$.
      Thus, by Lemma~\ref{lem:Gao-Wellner-extension}, 
      \begin{equation*}
        \log \Nb[L \epsilon / \phi(\epsilon), \mc G^\mm, L_\infty ]
        \le \phi(\epsilon),
      \end{equation*}
      Substituting $\epsilon = ( c_{\mm-1} (B/L)^{1/\mm} u^{(\mm-1)/\mm}$, we get

      \commentC{(Some algebra:)
        \begin{equation*}
          c_{\mm-1} ( BL^{\mm-2} / (c_{\mm-1} u^{(\mm-1)/\mm} (B/L)^{1/\mm} )^{1/(\mm-1)}
          = c_\mm B^{1/\mm} (L^{\mm-2} L^{1/\mm} )^{1/(\mm-1)} u^{-1/\mm}
        \end{equation*}
        which equals $c_\mm (BL^{\mm-1} / u)^{1/\mm}$ as desired (since $(\mm-2 + 1/\mm)/(\mm-1) = \mm-1 - (\mm-1)/\mm) / \mm-1$).

        (Old algebra below
        \begin{align*}
          c_k ( BL^{k-1} /  (  (BL^{k-1})^{1/(k-1)} u)^{(k-1)/k} )^{1/(k-1)}
          = c_k ( (BL^{k-1})^{(k-1)/k} )^{1/(k-1)} u^{-1/k}.) 
        \end{align*}}
      \begin{equation}
        \label{eq:65}
        \log \Nb[u, \mc G^\mm, L_\infty ]
        \le c_\mm \lp \frac{B L^{\mm-1}}{u} \rp^{1/\mm}
      \end{equation}
      for all $u > 0$, as desired.
    \end{proof}
  \end{longform}
\end{mylongform}
%% We do not have any result for the $\mm = 1$ (monotone) case.
%% , for which we do not believe the result holds.

For $L, B, \Gamma > 0$, let $\mc C^\mm(L, B, \Gamma)$ be the class of $\mm$-monotone functions
(per Definition~\ref{defn:k-monotone-d=1}) $f $ on $[0,L]$ satisfying $0 \le f \le B$ and $f^{(\mm-2)}$ is Lipschitz with constant $\Gamma$.
%% Here $\mm$-monotone means $f^{(i)}$ exists for $1 \le i \le k-2$ and $(-1)^i f^{(i)} \ge 0$.
\begin{theorem}
  \label{thm:k-mono-entropy}
  Let $B, L$, $\Gamma > 0$.  Let $\mm \ge 2$ be an integer.  Assume
  $B < \Gamma L^{\mm-1}$.  Then there exists a constant $c_{\mm} > 0$ (not depending on $B, L , \Gamma,$ or $\epsilon$) such that for all $\epsilon > 0$,
  \begin{equation}
    \label{eq:64}
    \log \Nb[\epsilon, \mc C^{\mm}(L,B, \Gamma), L_\infty ]
    \le c_\mm \lp \frac{\Gamma L^{\mm-1}}{\epsilon} \rp^{1/\mm}.
  \end{equation}
\end{theorem}
\begin{proof}
  This follows from Theorem~\ref{conj:m-mono-sup-bracket} together with the
  fact that \eqref{eq:82} is satisfied when $d = 1$.
\end{proof}

\begin{mynotes}
  \begin{notes}
    This won't be quite right: don't have bound $B$ involved!  it has to show up somewhere., in log term i guess.

    The following is inspired by / modeled off section 2.1 of Gao and Wellner (2009).

    The point of confusion for me is that the dependence on $B$ here is
    different than what Guntuboyina and Sen get.  However there is no lower
    bound corresponding to their upper bound (or no lower bound with explicit
    dependence onc onstants) so their dependence could be wrong.
    
  \end{notes}
\end{mynotes}

\begin{mylongform}
  \begin{longform}

    \commentC{(Original proof of Theorem~\ref{thm:k-mono-entropy} below.  Now subsumed by multivariate case.)}

    \begin{proof}
      For notational ease, instead of $\mc C^{\mm}(L, B, \Gamma)$ we consider the class $\mc D^{\mm}$ (dropping dependence on $L,B,\Gamma$ for this proof) of functions
      $ f(u) = g(L-u)$ for any $g \in \mc C^{\mm}(L,B,\Gamma)$.  Clearly $\mc D^{\mm}$ and $\mc C^{\mm}(L,B,\Gamma)$ have the same bracketing entropy, and $f \in \mc D^{\mm}$ satisfies $f^{(i)} \ge 0$, whenever $f^{(i)}$ is defined.  Additionally, because infinitely differentiable functions are dense in $\mc D^{\mm}$ \citep{Gao:2008wf}, we will consider only functions $f$ that are infinitely differentiable.  Now, by iterated application of the Fundamental Theorem of Calculus, for any $f \in \mc D^{\mm}$ we can write
      \begin{align}
        \label{eq:FOTC-expansion}
        f(x) =
        f(0)  
        %% + f'(0) x
        + \cdots
        + \frac{f^{(\mm-2)}(0)}{(\mm-2)!} x^{\mm-2}
        + \int_0^x \int_0^{z_1} \cdots \int_0^{z_{\mm-2}} f^{(\mm-1)}(s) ds dz_{\mm-2} \cdots dz_1.
      \end{align}
      Note  that, by \eqref{eq:FOTC-expansion},
      $ f^{(i)}(0) L^i / i! \le f(L) \le B$ for all $1 \le i \le \mm-2$.
      Let
      \begin{equation*}
        \mc P_{\mm} := \lb x \mapsto a_0 + a_1 x + \cdots + a_{\mm-2} x^{\mm-2} :  0 \le a_i \le B/L^i \rb.
      \end{equation*}
      By assumption $0 \le f^{(\mm-1)} \le \Gamma$ and $f^{(\mm-1)}$ is non-decreasing.
      Thus, $\mc D^\mm \subseteq \mc P_\mm + \mc G^\mm(L, \Gamma)$ and so
      \begin{equation}
        \label{eq:bracketing-split}
        \Nc[\epsilon, \mc D^\mm, L_\infty]
        \le \Nb[ \epsilon / 2, \mc P_\mm, L_\infty] \cdot
        \Nc[ \epsilon / 2 , \mc G^\mm(L,\Gamma) , L_\infty].
      \end{equation}
      For an integer $N \ge 1$, note that
      \begin{equation}
        \label{eq:71}
        \lb a_0 
        %% + a_1 x
        + \cdots + a_{\mm-2} x^{\mm-2} :
        a_i \in \{B/L^i N, 2B/L^i N, \ldots, N B/L^i N \}, 0 \le i \le \mm-2 \rb 
      \end{equation}
      forms an $L_\infty$ cover for $\mc P_\mm$ of cardinality $N^{\mm-1}$ and of $L_\infty$ size bounded by $(\mm-1) B / N$.  Take $N$ to be $\lceil 2 \epsilon^{-1} \mm B \rceil$ (where $\lceil a \rceil$ is the least integer larger than $a$).  Then we have shown
      \begin{equation}
        \label{eq:70}
        \log N( \epsilon/2 , \mc P_\mm, L_\infty  ) \le \mm \log ( 1 + 2 \mm B / \epsilon).
      \end{equation}
      By Lemma~\ref{lem:k-mono-main-lemma}, $\log \Nc[\epsilon / 2, \mc G^{\mm}(L,\Gamma), L_\infty ] \le c (\Gamma L^{\mm-1} / \epsilon)^{1/\mm}$.
      Combining these two bounds, via  \eqref{eq:bracketing-split}, we get
      \begin{equation}
        \label{eq:bracketing-tight-bound}
        \log     \Nc[\epsilon, \mc D^\mm, L_\infty]
        \le \mm \log ( 1 + 2 \mm B / \epsilon) +
        c (\Gamma L^{\mm-1} / \epsilon)^{1/\mm}.
      \end{equation}
      Since $B \le \Gamma L^{\mm-1}$, %% $\mm \log ( 1 + 2 \mm B / \epsilon)$ is
      \eqref{eq:bracketing-tight-bound} can be bounded by
      $c_\mm (\Gamma L^{\mm-1} / \epsilon)^{1/\mm}$ where $c_\mm$ does not depend on $\epsilon$, $B$
      $\Gamma$, or $L$. Thus we are done, since
      $    \Nb[\epsilon, \mc D^\mm, L_\infty] \le 
      \Nc[\epsilon/2, \mc D^\mm, L_\infty].$
    \end{proof}

    \begin{remark}
      If $B > \Gamma L^{\mm-1}$ then we can replace $\Gamma L^{\mm-1}$ on the right side of \eqref{eq:65} by $B$ (i.e., by increasing $\Gamma$ so $\Gamma L^{\mm-1} = B$).  However, in this case the bound \eqref{eq:bracketing-tight-bound} gives better dependence on $B$ so can be used instead.
    \end{remark}

  \end{longform}
\end{mylongform}

\begin{mylongform}
  \begin{longform}
    Note: in Gao and Wellner (2009), they consider Hellinger bracketing for \mm-monotone classes.  The classes are bounded, but they do {\it not} bound the ($\mm$-1)-derivative.  I bound the latter, which is stronger (and I think is needed, in order to get sup-norm bounds).
  \end{longform}
\end{mylongform}

\subsection{Entropy of classes related to level set estimation}
\label{sec:entr-level-set}

% We have so far considered the (bracketing) entropy of classes of convex
% functions.
Now, we consider the entropy of certain classes of functions related to estimating the level sets of convex functions; we may consider, for instance, estimating the level set of a convex or concave regression function, or of a so-called log- or $s$-concave density.\footnote{As shown by
\cite{Doss-Wellner:2013}
in the univariate log- or $s$-concave cases 
and \cite{Kim:2014wa}  in the multivariate log-concave case,
bracketing entropies of log- or $s$-concave density classes are related to those of bounded concave function classes.}
We refer to \cite{Doss-Wellner:2013} for the definition of log- or $s$-concavity.
% A density is log-concave if it is of the form $e^{\varphi}$ for a concave function $\varphi$; the class of log-concave densities is a natural nonparametric alternative to the class of Gaussian densities, see the references given in the introduction.  The classes of $s$-concave densities are alternative classes that allow for heavier tails than the log-concave class does.
We are specifically concerned with the case when the level set is a polytope.

In the present paper, we are concerned with bracketing entropy bounds, rather than statistical methodological developments.  We provide here an extremely brief discussion of methodology to motivate the classes we are developing bounds for.  The methodology is to first pick a bandwidth $h > 0$, then to minimize an objective function $Q$, based on i.i.d.\ data points, over functions of the form
$f|_{\mc L(f) + B(0,h)}$
for convex $f$.
%% The objective function $Q$ needs to be such that if $Q(f,f_0)
The bandwidth $h$ will converge to $0$ as the sample size increases.
The model will need to satisfy some regularity conditions for M-estimation theory to apply
(it must be such that if
$H(p_f, p_{f_0}) \le \delta$
then $l_H(\mc L(f),D_0) \le C \delta$ where $H^2(p_f,p_{f_0}) := \int ( \sqrt{p_{f}} - \sqrt{p_{f_0}})^2 d\mu$ is Hellinger distance for some dominating measure $\mu$ on the sample space, and $p_f$ is the data generating density corresponding to convex function $f$).
% This is a very brief description and we do not have space here to develop methodology in greater detail. The focus here is just to develop bracketing bounds that may be useful in understanding rates of convergence in a variety of convex level set estimation settings.
% We do not have space to provide the full motivational background.  However, in general, in a nonparametric function estimation context (e.g., either the concave/convex regression setting or the log- or $s$-concave density estimation setting), one will have consistent estimators of the unknown function and this will allow one to restrict attention to a fixed set of support on which the convex/concave function is bounded.  Thus we consider this setup here.

We now proceed to find bracketing entropy bounds for a function class that
will govern rates of convergence for the above procedure, specifically when the level set is a polytope.  We operate under
the following basic setup or assumption.

\begin{assumption}
  \label{assm:LS-basic}
  Let $C_0$ be a closed, bounded convex set in $\RR^d$ with nonempty interior.
  Let $f_0 \in \cC(C_0, B)$ %%  satisfy $L_\infty(f_0) < \infty$, and
  satisfy
  $\inf_{x \in C_0} f_0(x) < \inf_{x \in \bd C_0} f_0(x)$.  Let $\lambda \in \RR$ satisfy
  $\inf_{x \in C_0} f_0(x) < \lambda < \sup_{x \in C_0} f_0(x)$.
  Assume further that $D_0 := \cL(f_0)$ is a polytope.
\end{assumption}

\noindent
The assumption restricts attention to functions which attain their minimum on the interior of $C_0$ and are strictly larger everywhere on their boundary than the minimum. (This is somewhat analogous to the assumption of so-called ``coercivity,'' except that we are restricting attention to a compact domain $C_0$.)
For a function $f$,  define $\mc L(f) \equiv \mc L_\lambda(f) := \lb x \in \RR^d :  f(x) = \lambda \rb$.
% and
%% let $D_0 := \cL(f_0)$.
%% Let $l_1(D,C)$ be the Lebesgue measure of the symmetric difference $D \Delta C := \lb  \rb.$
For two sets $C, D \subset \RR^d$, define the Hausdorff distance between them by
\begin{equation*}
  l_H(C,D) := \max \lp \sup_{x \in D} \inf_{y \in C}  \Vert x - y \Vert,
  \sup_{y \in C} \inf_{x \in D}  \Vert x - y \Vert \rp.
\end{equation*} 
Let $\mc S_\delta := \lb D : l_H(D, D_0) \le \delta \rb$.  Define set addition
$A_1+A_2 := \lb a_1+a_2 : a_1 \in A_1, a_2 \in A_2 \rb$ and recall that $f|_{A_1}$ is the restriction of a function $f$ to the set $A_1$.
For $h > 0$, the class of functions we consider is
\begin{equation}
  \label{eq:60}
  \mc C_{\delta,h}(C_0,B)
  := \lb f|_{\mc L(f) + B(0, h)} : f \in \cC(C_0,B), \mc L(f) \in S_\delta \rb;
\end{equation}
this is a class of bounded convex functions on $C_0$ restricted to a neighborhood about their $\lambda$-level set (which is generally not a convex set).

% Note that if $\delta $ and $h$ are small enough then functions in $\cCdh$ are automatically uniformly Lipschitz by the same basic argument used in the proof of Theorem~\ref{thm:main-thm}.

% \begin{assumption}
%   \label{assm:LS-polytope}
%   Assume that $D_0 := \cL(f_0)$ is a polytope.
% \end{assumption}
% \noindent

Let $F_{j}$, $j=1, \ldots, N$, be the facets of $D_0$. 
Let $T_j := F_j + B(0, \delta+h)$. Note that $T_j$ is a convex set.  Thus for any $f \in \cC(C_0,B)$ with
$l_H( \cL(f), D_0) \le \delta$, we have
$\cL(f) + B(0,h) \subseteq \cup_{j=1}^N T_j$.  Thus for $\epsilon>0$,
restating \eqref{eq:bracketing-cover}, we have
\begin{equation}
  \label{eq:54}
  \Nb[ N^{1/p} \epsilon, \cC_{\delta,h}(C_0,B), L_p ]
  \le  \prod_{j=1}^N \Nb[ \epsilon, \cC_{\delta,h}(C_0,B)|_{T_j}, L_p ].
\end{equation}
To bound the terms on the right side of the above display we will use
Theorem~\ref{thm:GS-extension}.  To do so, we need to compute $\Vol_d(T_j)$,
we need to find a hyperrectangle containing $T_j$, and we need to show
that $\cC_{\delta,h}(C_0,B)|_{T_j}$ is Lipschitz, which we will do by using
an idea from the proof of Theorem~\ref{thm:main-thm}.

By Assumption~\ref{assm:LS-basic}, $0 < l_H(D_0,C_0)$.  Let $\eta_0 := l_H(D_0,C_0)$.  Thus for any $f \in \cC_{\delta,h}(C_0,B)$, for $\delta, h$ small enough, for any $j \in \lb 1, \ldots, N \rb$,  $l_H(T_j, C_0) \ge \eta_0 - \delta - h > \eta_0 / 2$.  Thus
by Lemma~\ref{lem:basic-lemma_conv-lipschitz},
%% (by the logic used in the proof of Lemma~\ref{lem:Pij-bracketing-bound}, see \eqref{eq:C-inclusion-2}),
\begin{equation}
  \label{eq:56}
  \cC_{\delta,h}(C_0,B)|_{T_j} \subset \cC(T_j, B, \Gb)
\end{equation}
where $\Gb := (4B / \eta_0, \ldots, 4B / \eta_0).$
Now, let $V_{0,j} := \Vol_{d-1}(F_j)$.  Then
\begin{equation}
  \label{eq:57}
  \Vol_{d}(T_j) \le 2 V_{0,j} \cdot 2 (\delta + h).
\end{equation}
Next, note each facet $F_j$ is compact so can be embedded in a hyperrectangle.  Let $\prod_{i=1}^{d-1} [a_{j,i}, b_{j,i}]$ be a hyperrectangle of minimum volume containing $F_j$ (after an orthogonal rotation).  Then  by its definition, $T_j$ is (after rotation) contained in 
% \begin{equation}
%   \label{eq:55}
%   %%   T_j  \subset
%   \lp \prod_{i=1}^{d-1} [a_{j,i} - \delta - h, b_{j,i} + \delta + h] \rp
%   \times [ - \delta - h, \delta + h]
% \end{equation}
$  \lp \prod_{i=1}^{d-1} [a_{j,i} - \delta - h, b_{j,i} + \delta + h] \rp
\times [ - \delta - h, \delta + h],$
for $\delta, h > 0$ small enough.
Thus by Theorem~\ref{thm:GS-extension}, for $1 \le p < \infty$, for any $j \in \{ 1, \ldots, N \}$, and for $\delta, h$ small enough,
$  \log \Nb[\epsilon, \cC(T_j, B, \Gb), L_p]$ is bounded above by
\begin{align*}
  \MoveEqLeft c \lp \frac{ \Vol_d(T_j)^{1/p}}{\epsilon} \rp^{d/2}
  \lp B + \frac{ 4 B}{\eta_0} \lp 2 (\delta+h) + \sum_{i=1}^{d-1}  (2( \delta + h) + b_{j,i} 
  - a_{j,i} ) \rp \rp^{d/2} \\
  & \le    c_2 \lp \frac{ V_{0,j}^{1/p} (\delta+h) }{\epsilon} \rp^{d/2}
    \lp B + \frac{ 4 B}{\eta_0} \lp  \sum_{i=1}^{d-1} 2 (b_{j,i}-a_{j,i}) \rp \rp^{d/2}.
\end{align*}
Let $U_{0,j} := \sum_{i=1}^{d-1} (b_{j,i}-a_{j,i})$.  Then the right side of
the above display is bounded above by
$ c_3 \lp V_{0,j}^{1/p} (\delta+h) / \epsilon \rp^{d/2} \lp B +
B U_{0,j} / \eta_0 \rp^{d/2}.$ Thus the log of the right of
\eqref{eq:54} is bounded above by
$ \sum_{j=1}^N c_3 \lp  V_{0,j}^{1/p} (\delta+h) /\epsilon \rp^{d/2}
\lp B + B U_{0,j} / \eta_0 \rp^{d/2}.$ Thus we have shown the
following.

\begin{theorem}
  \label{thm:applications:level-set}
  Let Assumption~\ref{assm:LS-basic} hold.
  Fix $\delta>0,$ $h>0$, and let $\mc{C}_{\delta,h}(C_0,B)$ be defined as in \eqref{eq:60}.  Then
  \begin{equation*}
    \log   \Nb[ \epsilon, \cC_{\delta,h}(C_0, B), L_p ]
    \le S_0 \lp \frac{\delta + h}{\epsilon} \rp^{d/2},
  \end{equation*}
  where $S_0$ is a constant depending on $B$, $f_0$, and $C_0$.
\end{theorem}

\noindent This result suggests that
%%% adaptation to
fast rates of convergence may be possible when estimating polytopal level sets of convex functions.  As mentioned above, we do not here develop a full estimation procedure.  To do so will require studying the optimal choice of the bandwidth $h$.  If we can take $h = \delta$ for instance, then the bound is of order $(\delta / \epsilon)^{d/2}$.  When $d=2$ or $3$, one can then compute the entropy integral (see, e.g., \cite{vanderVaart:1996tf}, \cite{Dudley:1999dc}, \cite{vandeGeer:2009tk}), $ \int_{0}^{\delta} \sqrt{ \log \Nb[ \epsilon, \cC_{\delta,h}(C_0, B), L_2] } d\epsilon = \delta^{d/4} \int_{0}^\delta \epsilon^{-d/4} \, d\epsilon$ and see that it is of order $\delta^{d/4} \delta^{1 - d/4} = \delta$.  This corresponds, at least heuristically, to a $\sqrt{n}$ rate of convergence (the rate $\sqrt{n}$ arises from combining, e.g., Lemma 3.4.2 (p.\ 324) and Theorem~3.2.5 (p.\ 289) of \cite{vanderVaart:1996tf}).  These calculations are only suggestive in nature (and indeed we have not formally proposed an estimator in a specific model!).  They are presented to explain potential repercussions of Theorem~\ref{thm:applications:level-set}.

% \section{Discussion}
% \label{sec:discussion}

%% \subsection*{Appendix: Technical Tools}
\subsection*{Appendix}
\label{sec:appendix_technical-tools}

% \begin{theorem}[John's theorem, \cite{John:1948vs}; \cite{Ball:1997ud}, Theorem 3.1]
%   \label{thm:FritzJohn}
%   Each convex body $K \subset \RR^d$ contains a unique ellipsoid of maximal volume.  This ellipsoid is the Euclidean unit ball  $B_d(0,1)$
%   if and only if the following conditions are satisfied: $B_d(0,1) \subset K$ and (for some $m$) there are Euclidean unit vector $(u_i)_{i=1}^m$ on the boundary of $K$ and positive numbers $(c_i)_{i=1}^m$ satisfying $\sum_i c_i u_i = 0$ and $\sum_i c_i \la x, u_i \ra^2 = \| x \|^2$ for all $x \in \RR^d$.  If the conditions are satisfied, then $\|x \| \le d$ for all $x \in K$.
% \end{theorem}
\begin{theorem}[John's theorem, \cite{John:1948vs}; Theorem 13.4.1 \cite{Matouvsek:2002df}]
  \label{thm:FritzJohn}
  Let $K \subset \RR^d$ be a bounded closed convex body with nonempty
  interior.  Then there exists an ellipsoid $E$ of maximal volume such that
  $E \subseteq K \subseteq n E$.
  % Each convex body $K \subset \RR^d$ contains a unique ellipsoid of maximal volume.  This ellipsoid is the Euclidean unit ball  $B_d(0,1)$
  % if and only if the following conditions are satisfied: $B_d(0,1) \subset K$ and (for some $m$) there are Euclidean unit vector $(u_i)_{i=1}^m$ on the boundary of $K$ and positive numbers $(c_i)_{i=1}^m$ satisfying $\sum_i c_i u_i = 0$ and $\sum_i c_i \la x, u_i \ra^2 = \| x \|^2$ for all $x \in \RR^d$.  If the conditions are satisfied, then $\|x \| \le d$ for all $x \in K$.
\end{theorem}

\begin{theorem}[Theorem~2.7.1 of \cite{vanderVaart:1996tf}]
  \label{thm:1}
  Let $\mc L$ be a class of functions on $\prod_{i=1}^d [0,L_i],$ $0 < L_i < \infty$, such that for all $f \in \mc L$, we have $L_\infty(f) \le B < \infty $ and $f$ has Lipschitz constant in the direction $x_i$ given by $\Gamma_i < \infty$. Then
  \begin{equation}
    \label{eq:7}
    \log N( \epsilon, \mc L, L_\infty)
    \le K \lp \frac{B + \sum_{i=1}^d \Gamma_i L_i}{\epsilon} \rp^{d}.
  \end{equation}
\end{theorem}
\begin{proof}
  The theorem is given by \cite{vanderVaart:1996tf} when the domain is $[0,1]^d$ and the sup and Lipschitz bounds are all $1$.  A scaling argument gives the general form.
\end{proof}
\begin{mylongform}
  \begin{longform}
    They include the volume of domain ... 
  \end{longform}
\end{mylongform}

%%% Local Variables:
%%% mode: latex
%%% TeX-master: "bracketing"
%%% End:

\markboth{  List of Notation \hfill}{List of Notation \hfill} %%need
%% pagestyle myheadings?
\subsection*{Notation Index}
\label{sec:notation}
 \clearpage

\bibliography{bracketing}

\end{document}